\title{Model-checking parametric lock-sharing systems against regular constraints}
\author{Corto Mascle}{Universit\'e de Bordeaux, France}{}{}{}
\author{Anca Muscholl}{Universit\'e de Bordeaux, France}{}{}{}
\author{Igor Walukiewicz}{Universit\'e de Bordeaux, CNRS, France}{}{}{}
\authorrunning{C. Mascle, A. Muscholl and I. Walukiewicz} 
\keywords{distributed synthesis, concurrent systems, lock synchronisation, deadlock avoidance} 
\definecolor{bluegray}{RGB}{160,200,200}
\definecolor{Blue Sapphire}{HTML}{003050} 
\definecolor{Gamboge}{HTML}{ee9b00}
\definecolor{Ruby Red}{HTML}{9b2226}
\newif\ifappendix
\newrobustcmd\introinrestatable[1]{%
	\ifappendix%
	\kl{#1}%
	\else%
	\intro{#1}%
	\fi%
}
\definecolor{light-gray}{gray}{0.75}
\definecolor{darkgreen}{RGB}{0,200,0}
\definecolor{lightpurple}{RGB}{220,0,220}
\tikzset{AUT style/.style={>=angle 60,initial text= ,every edge/.append style={thick},every state/.style={thick,minimum size=15,inner sep=0.5}}}
\def\namedlabel#1#2{\begingroup
	\def\@currentlabel{#2}%
	\label{#1}\endgroup
}
\renewcommand{\b}{\beta}
\renewcommand{\d}{\delta}
\newcommand{\e}{\varepsilon}
\newcommand{\g}{\gamma}
\newcommand{\n}{\nu}
\newcommand{\s}{\sigma}
\renewcommand{\t}{\tau}
\renewcommand{\th}{\theta} 
\newcommand{\w}{\omega}
\newcommand{\W}{\Omega}
\renewcommand{\S}{\Sigma}
\renewcommand{\epsilon}{\varepsilon}
\renewcommand{\phi}{\varphi}
\newcommand{\nats}{\mathbb{N}}
\newcommand{\set}[1]{\{ #1 \}}
\newcommand{\tree}{\tau}
\newcommand{\subtree}{\theta}
\newcommand{\vallabel}{\mathit{L}}
\newcommand{\Locks}{\mathit{{\cal L}\!ocks}}
\newcommand{\Var}[1]{\mathit{Var(#1)}}
\newcommand{\bef}{\prec_H}
\newcommand{\aft}{\succ_H}
\newcommand{\greenedge}[1]{\xhookrightarrow{#1}}
\newcommand{\rededge}[1]{\xmapsto{#1}}
\newcommand{\PP}{\mathbb{P}}
\knowledgenewrobustcmd{\BTPP}{BT_{\PP}}
\knowledgenewrobustcmd{\strongedge}[1]{\cmdkl{\greenedge{#1}}}
\knowledgenewrobustcmd{\weakedge}[1]{\cmdkl{\rededge{#1}}}
\knowledgenewrobustcmd{\spat}{\cmdkl{\Longrightarrow}}
\knowledgenewrobustcmd{\wpat}{\cmdkl{\dashrightarrow}}
\knowledgenewrobustcmd{\spatinf}{\cmdkl{\Longrightarrow}}
\knowledgenewrobustcmd{\wpatinf}{\cmdkl{\dashrightarrow}}
\knowledgenewrobustcmd{\Gp}{\cmdkl{G_{\PP}}}
\knowledgenewrobustcmd{\Gu}{\cmdkl{G_{u}}}
\knowledgenewrobustcmd{\BTu}{\cmdkl{BT_{u}}}
\knowledgenewrobustcmd{\FTu}{\cmdkl{FT_{u}}}
\knowledgenewrobustcmd{\PZ}{\cmdkl{Proc_Z}}
\knowledgenewrobustcmd{\pad}[1]{#1^{\cmdkl{\dummy}}}
\knowledgenewrobustcmd{\systemchoices}{\cmdkl{\mathbb{SC}}}
\knowledgenewrobustcmd{\Gblockchain}[2]{\cmdkl{\mathcal{G}}_{#1,#2}}
\newcommand{\dashover}[2][\mathop]{#1{\mathpalette\df@over{{\dashfill}{#2}}}}
\newcommand{\fillover}[2][\mathop]{#1{\mathpalette\df@over{{\solidfill}{#2}}}}
\newcommand{\df@over}[2]{\df@@over#1#2}
\newcommand\df@@over[3]{%
	\vbox{
		\offinterlineskip
		\ialign{##\cr
			#2{#1}\cr
			\noalign{\kern1pt}
			$\m@th#1#3$\cr
		}
	}%
}
\newcommand{\dashfill}[1]{%
	\kern-.5pt
	\xleaders\hbox{\kern.5pt\vrule height.4pt width \dash@width{#1}\kern.5pt}\hfill
	\kern-.5pt
}
\newcommand{\dash@width}[1]{%
	\ifx#1\displaystyle
	2pt
	\else
	\ifx#1\textstyle
	1.5pt
	\else
	\ifx#1\scriptstyle
	1.25pt
	\else
	\ifx#1\scriptscriptstyle
	1pt
	\fi
	\fi
	\fi
	\fi
}
\newcommand{\solidfill}[1]{\leaders\hrule\hfill}
\knowledgenewrobustcmd{\Owns}{\cmdkl{\text{\sc{Owns}}}}
\knowledgenewrobustcmd{\Inf}{\cmdkl{\text{\sc{Inf}}}}
\newcommand{\aut}{\mathcal{A}}
\knowledgenewrobustcmd{\lss}{\emph{"LSS"}}
\knowledgenewrobustcmd{\trace}[1]{\cmdkl{tr}(#1)}
\newcommand{\sat}{\mathtt{Sat}}
\newcommand{\init}{\mathit{init}}
\newcommand\restrict[2]{{
		\left.\kern-\nulldelimiterspace
		#1
		\vphantom{\big|}
		\right|_{#2}
}}
\renewcommand{\sat}{\models}
\newcommand{\Aa}{\mathcal{A}}
\newcommand{\Bb}{\mathcal{B}}
\newcommand{\Cc}{\mathcal{C}}
\newcommand{\Oo}{\mathcal{O}}
\newcommand{\Pp}{\mathcal{P}}
\newcommand{\Ss}{\mathcal{S}}
\newcommand{\es}{\emptyset}
\newcommand{\incl}{\subseteq}
\newcommand{\act}[1]{\xlongrightarrow{#1}}
\newcommand{\wh}[1]{\widehat{#1}}
\newcommand{\Aal}{\Aa^l}
\newcommand{\nop}{\mathtt{nop}}
\newcommand{\dummy}{\square}
\newcommand{\get}[1]{\mathtt{get}_{#1}}
\newcommand{\rel}[1]{\mathtt{rel}_{#1}}
\newcommand{\spawn}[1]{\mathtt{spawn(}#1\mathbf{)}}
\newcommand{\new}{\mathtt{new}}
\newcommand{\ppop}{\mathtt{pop}}
\newcommand{\ppush}{\mathtt{push}}
\newcommand{\reset}{\mathtt{reset}}
\newcommand{\instr}{\mathtt{instr}}
\newcommand{\LFP}{\mathsf{LFP}}
\newcommand{\GFP}{\mathsf{GFP}}
\newcommand{\sig}{\mathit{sig}}
\newcommand{\keeps}{\text{\textit{keeps}}}
\newcommand{\evkeeps}{\text{\textit{ev-keeps}}}
\newcommand{\avoids}{\text{\textit{avoids}}}
\newcommand{\evavoids}{\text{\textit{ev-avoids}}}
\newcommand{\Pre}{\text{\textit{Pre}}}
\newcommand{\Pres}{\Pre^*}
\renewcommand{\d}{\delta}
\newcommand{\first}{\mathit{first}}
\newcommand{\nnext}{\mathit{next}}
\newcommand{\phil}{\mathit{phil}}
\newcommand{\pinit}{p_{\init}}
\newcommand{\Proc}{\mathit{Proc}}
\newcommand{\ch}{\mathit{ch}}
\newcommand{\tk}{\mathit{tk}}
\newcommand{\spw}{\mathit{sp}}
\newcommand{\ar}{\mathit{ar}}
\newcommand{\op}{\mathit{op}}
\newcommand{\Op}{\mathit{Op}}
\newcommand{\PTIME}{\text{\sc Ptime}}
\newcommand{\EXPTIME}{\text{\sc Exptime}}
\begin{document}

\appendixfalse

\maketitle

\begin{abstract}
	In parametric lock-sharing systems processes can spawn new
	processes to run in parallel, and can create new locks. 
	The behavior of every process is given by a pushdown automaton.
	We consider infinite behaviors of such systems under strong process fairness
	condition. 
	A result of a potentially infinite execution of a system is a limit
	configuration, that is a potentially infinite tree. 
	The verification problem is to determine if a given system has a limit
	configuration satisfying a given regular property. 
	This formulation of the problem encompasses verification of reachability as well
	as of many 	liveness properties. 
	We show that this verification problem, while undecidable in general, is 
	decidable for nested lock usage.

	We show \EXPTIME-completeness of the verification problem. 
	The main source of complexity is the number of parameters in the spawn
	operation. 
	If the number of parameters is bounded, our algorithm works in \PTIME\ for
	properties expressed by parity automata with a fixed number of ranks. 

	\keywords{Parametric systems \and Locks \and Model-checking}
\end{abstract}

\section{Introduction}
Locks are a widely used concurrency primitive.
They appear in classical programming languages
such as Java, as well as in recent ones such as Rust. 
The principle of creating shared objects and protecting them by 
mutexes (like the “synchronized” paradigm in Java) requires
\emph{dynamic lock creation}. 
The challenge is to be able to analyze programs with dynamic creation
of threads \emph{and} locks.

Our system model is based on Dynamic  Pushdown Networks (DPNs) as introduced
in~\cite{BouajjaniMT05}, where processes are pushdown systems that can
spawn new processes.
The DPN model was extended in~\cite{LammichMW09} by adding
synchronization through a fixed number of locks.
Here we take a step further and allow dynamic lock creation:
when spawning a new process, the parent process can pass some of its locks, and
new locks can be created for the new thread.
This way we model recursive programs with creation of threads and locks.
We call such systems \emph{dynamic lock-sharing systems} (DLSS).

The focus in both \cite{BouajjaniMT05} and \cite{LammichMW09} is
computing the $\Pres$ of a regular set of configurations, and they achieve
this by extending suitably the saturation technique from~\cite{BouajjaniEM97}.
Here we consider not only reachability but also infinite behaviors of DLSS under
fairness conditions.
For this we propose a different approach than saturation from
\cite{BouajjaniMT05,LammichMW09} as saturation is not 
suited to cope with liveness properties.

We show that verifying regular properties of DLSS is decidable if every process
follows \emph{nested lock usage}.
This means that locally every process acquires and releases the locks
according to the stack discipline.
Nested locking is assumed in most papers on parametric verification of
systems with synchronization over locks.
It is also considered as good programming practice, sometimes even enforced
syntactically, as in Java through synchronized blocks.

Without any restriction on lock usage we show that our problem is
undecidable, even for finite state processes and reachability
properties that refer to a single process.
Note that our model does not have global variables.
It is well-known that reachability is undecidable already for two pushdown processes with one lock and
one global variable.  
\medskip

\emph{Outline of the paper.}
Our starting point is to use trees to represent configurations of DLSS.
This representation was introduced in~\cite{LammichMW09}.
The advantage  is that it does not require to talk about
interleavings of local runs of processes.
Instead it represents every local run as a left branch in a tree and
the spawn operations as branching to the right.
At each computation step one or two nodes are added below a leaf of the current
configuration. 
Thus, the result of a run of DLSS is an infinite tree that we call a
\emph{limit configuration}.
Our first observation is that it is easy to read out from a limit configuration
of a run if the run is strongly process-fair (Proposition~\ref{prop:fair}). 

An important step is to characterize those trees that are limit
configurations of runs of a given \emph{finite state} DLSS, namely where every
process is a finite state system.
This is done in Lemma~\ref{lem:limit}.
To deal with lock creation this lemma refers to the existence of some global acyclic
relation on locks.
We show that this global relation can be recovered from local orderings in
every node of the configuration tree (Lemma~\ref{lem:limitlocal}).
Finally, we show that there is a nondeterministic  B\"uchi tree
automaton verifying
all the conditions of Lemmas~\ref{lem:limit} and~\ref{lem:limitlocal}.
This is the desired tree automaton recognizing limit configurations of process-fair runs. 
Our verification problem is solved by checking if there is a tree satisfying the
specification and accepted by this automaton.
This way we obtain the upper bound from Theorem~\ref{thm:main-finite}.
Surprisingly the size of the Büchi automaton is
linear in the size of DLSS, and exponential only in the
\emph{arity} of the DLSS, which
is the maximal number of locks a process can access.
For example, in the dining philosophers setting  (cf.~Figure~\ref{fig:philosophers})  the arity is $3$, as every philosopher has
access only to its left and right forks, implemented as locks; and there is one
more fork to close the cycle.

The extension of our construction to pushdown processes requires one
more idea to get an optimal complexity. 
In this case, ensuring that the limit tree represents a computation requires
using pushdown automata. 
Hence, the B\"uchi tree automaton as described in the previous paragraph becomes
a pushdown B\"uchi automaton on trees.
The emptiness of pushdown B\"uchi tree automata is \EXPTIME-complete, which is
an issue as the automaton constructed is already exponential in the size of the 
input. However, we 
observe that the automata we obtain are right-resetting, since new
threads are spawned with empty pushdown.
Intuitively, the pushdown is needed only on left paths of the
configuration tree to check correctness of local runs. 
A right-resetting automaton resets its stack each time it goes to the right child.
We show that the emptiness of right-resetting parity pushdown tree automata can be
checked in \PTIME\ if the biggest rank in the parity condition is fixed (if it
is not fixed then the problem is at least as complex as  solving parity games).
This gives the upper bound from Theorem~\ref{thm:main-pushdown}.

Finally, we obtain the matching lower bound by proving
\EXPTIME-hardness of checking if a process of the DLSS has an infinite
run (Proposition~\ref{prop:EXPTIME-hard}).
This holds even for finite state processes.
We also show that even for finite state processes the DLSS verification problem
is undecidable if we allow arbitrary usage of locks (Theorem~\ref{th:undec}). 
\medskip

\emph{Related work.}
Parametrized verification has remained an active research area for almost three
decades~\cite{germansistla92,BloJacKha15,Abdulla:2018aa}. 
It has brought a steady stream of works on parametric systems with locks.
As already mentioned, the first directly relevant paper is~\cite{BouajjaniMT05} introducing
Dynamic Pushdown Networks (DPNs).
These consist of pushdown processes with spawn but no locks.
The main idea is to represent a configuration as a sequence of process
identifiers, each identifier followed by a stack content. 
Computing $\Pres$ of a regular set of configurations is decidable by extending
the saturation technique from~\cite{BouajjaniEM97}.

An important step is made in~\cite{LammichMW09} where the authors introduce
a tree representation of configurations. 
This is essentially the same representation as we use here.
They extend DPNs by a fixed set of locks, and show how to adapt the saturation
technique to compute $\Pres$ in this case.
Their result is an \EXPTIME\ decision procedure for verifying reachability of a
regular set of configurations.
This work has been extended to incorporate join operations~\cite{GawlitzaLMSW11}, or
priorities on processes~\cite{DiazT17}.
Our work extends~\cite{LammichMW09} in two directions: it adds lock creation, and
considers liveness properties.
It is not clear how one could extend saturation methods to deal
with liveness properties.
 
The saturation method has been adapted to DPNs with lock creation in the recent
thesis~\cite{Kenter22}. 
The approach relies on hyperedge replacement grammars, and gives decidability without 
complexity bounds. 
Our liveness conditions can express this kind of reachability conditions.


Actually, the first related paper to deal with lock creation is
probably~\cite{YasukataT016}.
The authors consider a model of higher-order programs with
spawn, joins, and lock creation.
Apart from nested locking, a new restriction of scope safety is imposed. 
Under these conditions, reachability of pairs of states is shown to be decidable.

The works above have been followed by implementations~\cite{Lammich11,YasukataT016,DiazT17}.
In particular \cite{DiazT17} reports on verification of several substantial
size programs and detecting an error in xvisor~\cite{xvisor}.

In all the works above nested locking is
assumed. In~\cite{KahIvaGup05} the interest of nested locking is
underlined by showing that reachability with two pushdown processes
using locks is undecidable in general, but it is decidable for nested locking.
There are only few related works without this assumption.
The work~\cite{Kahlon09} generalizes nested locking to bounded lock-chain condition, and
shows decidability of reachability for two pushdown processes. 
In~\cite{LammichMSW13} the authors consider contextual locking where arbitrary locking
may occur as long as it does not cross procedure boundaries. 
This condition is incomparable with nested locking.

Finally, we comment on shared state and global variables.
These are not present in the above models because reachability for two pushdown
processes with one lock and one global variable is already undecidable. 
There is an active line of study of multi-pushdown systems where shared state is
modeled as global control. 
In this model decidability is recovered by imposing restrictions on stack usage
such as bounded context switching and variations thereof~\cite{QadeerR05,TorreMP07,TorreNP20,AkshayGKR20}.
Observe that these are restrictions on global runs, and not on local runs of
processes, as we consider here. 
Another approach to recover decidability is to have shared state but no
locks~\cite{Hague11,EsparzaGM16,FortinMW17,MuschollSW17}.
Finally, there is a very interesting model of threaded
pools~\cite{BaumannMTZ20, BaumannMTZ22}, without locks, where verification
is decidable once again assuming bounded context switching.
But the complexity of this model is as high as Petri net
coverability~\cite{BaumannMTZ22}. 


\medskip

\emph{Structure of the paper.}
The next section presents the main definitions and results. 
The main proof for finite state processes is outlined is Sections~\ref{sec:limit} and~\ref{sec:automaton}.
Section~\ref{sec:pushdown} describes the extension to pushdown processes.
All missing proofs are included in the appendix.



\section{Definitions and results}\label{sec:defs}

A dynamic lock-sharing system is a set of processes, each
process has access to a set of locks and can spawn other processes. 
A spawned process can inherit some locks of the spawning process and can also
create new locks. 
All processes run in parallel.
A run of the system must be fair, meaning that if a
process can move infinitely many times then it eventually does. 

More formally, we start with a finite set of process identifiers $\Proc$.
Each process identifier $p\in \Proc$ has an arity $\ar(p)\in\nats$ telling how many
locks the process uses.
The process can refer to these locks through the variables $\Var{p}=\set{x_1^p, \ldots,
x_{\ar(p)}^p}$. 
At each step a process can do one of the following operations:
\begin{align*}
        \Op(p) =& \set{\nop}\cup\set{\get{x}, \rel{x} \mid x \in \Var{p}} \\
        & \cup \set{\spawn{q, \sigma} \mid q \in \Proc, \sigma : \Var{q} \to (\Var{p}
        \cup \set{\new})}
\end{align*}
Operation $\nop$ does nothing. 
Operation $\get{x}$ acquires the lock designated by $x$, while $\rel{x}$
releases it. 
Operation $\spawn{q,\sigma}$ spawns an instance of process $q$ where every variable
of $q$ designates a lock determined by the substitution $\s$; this can be a lock
of the spawning process or a new lock, if $\s(x^q)=\new$.
We require that the mapping $\s$ is \emph{injective} on $\Var{p}$.
This is important for the definition of nested stack usage. 

\AP A ""dynamic lock-sharing system"" ("DLSS" for short) is a tuple 
\begin{equation*}
        \Ss=(\Proc,\ar,(\aut_p)_{p\in\Proc}, \pinit,\Locks)
\end{equation*}
where $\Proc$, and $\ar$ are as described above.
For every process $p$,  $\aut_p$ is a transition
system describing the behavior of $p$.
Process $\pinit\in \Proc$ is the initial process.
Finally, $\Locks$ is an infinite pool of locks.

Each transition system $\aut_p$ is a tuple $(S_p, \S_p, \delta_p,\op_p,
\init_p)$ with $S_p$ a finite set of states, $\init_p$ the initial state, $\S_p$
a finite alphabet, $\delta_p : S_p \times \S_p \to S_p$ a \emph{partial}
transition function, and $\op_p:\S_p\to\Op(p)$ an assignment of an operation
to each action. 
We require that the $\S_p$ are pairwise disjoint, and define $\S = \bigcup_{p
\in \Proc} \S_p$. 
We write $\op(b)$ instead of $\op_p(b)$ for $b\in \S_p$, as $b$
determines the process $p$.

For simplicity, we require that $\pinit$ is of arity $0$. In
particular, process $\pinit$ has no $\get{}$ or $\rel{}$ operations. 

An example in Figure~\ref{fig:philosophers} presents a "DLSS" modeling an
arbitrary number of dining philosophers. 
The system can generate a ring of arbitrarily many philosophers, but can also
generate infinitely many philosophers without ever closing the ring.
\begin{figure}
   \begin{align*}
    \pinit:&\quad \spawn{\first,\new,\new}\\
    \first(x_l,x_r):&\quad \spawn{\phil,x_l,x_r};\spawn{\nnext,x_r,\new,x_l}\\
    \nnext(x_l,x_r,x_{\text{lfirst}}):&\quad \text{or}
      \begin{cases}
        \spawn{\phil,x_l,x_{\text{lfirst}}}&\\
        \spawn{\phil,x_l,x_r};\spawn{\nnext,x_r,\new,x_{\text{lfirst}}}&
    \end{cases}\\
    \phil(x_l,x_r):&\quad \text{repeat-forever or}
      \begin{cases}
        \get{x_l};\get{x_r};\text{eat};\rel{x_r};\rel{x_l}&\\
        \get{x_r};\get{x_l};\text{eat};\rel{x_l};\rel{x_r}&
      \end{cases}
  \end{align*}
   \caption{Dining philosophers: process $\first$ starts the first
   philosopher and an iterator process $\nnext$ starts successive philosophers. The
   forks, modeled as locks, are passed via variables $x_l$ and $x_r$. The third variable
   $x_{\text{lfirst}}$ of $\nnext$ is the left fork of the first philosopher used
   also by the last philosopher. The system is nested as $\phil$ takes and releases
   forks in the stack order. The arity of the system is $3$.}
   \label{fig:philosophers}
\end{figure}

A configuration of $\Ss$ is a tree representing the runs of all active
processes.
The leftmost branch represents the run of the initial process $\pinit$, the left
branches of nodes to the right of the leftmost branch represent runs of processes
spawned by $\pinit$ etc.
So a leaf of a configuration represents the current situation of a process that is
started at the first ancestor above the leaf that is a right child.
A node of a configuration is associated with a process, and tells in
what state the process is, which locks are available to it, and which of
them it holds.

\AP More formally, a ""configuration"" is a, possibly
infinite,
tree $\tree\incl\set{0,1}^*$, with each  node $\n$ labeled by a tuple
$(p,s,a, L,H)$
where $p\in\Proc$ is the process executing in $\n$, $s\in \S_p$ the state
of $p$, $a \in \S_p$ the action $p$ executed at $\n$, or $\bot\not\in\S$ if
$\n$ is a root,
$L:\Var{p}\to\Locks$ an assignment of locks to variables of $p$, and
$H\incl L(\Var{p})$ the set of locks $p$ holds before executing $a$.
We use  $p(\n)$, $s(\n)$, $a(\n)$, $L(\n)$ and $H(\n)$ to address the components of the
label of $\n$.
For ease of notation we will write $\Var{\n}$ instead of
$\Var{p(\n)}$. 

\AP We write $H(\tree)$ for the set of locks ""ultimately held"" by some process in
$\tree$, that is,
$H(\tree)=\set{\ell : \text{for some $\nu$, $\ell \in H(\nu')$ for all
    $\nu'$ on the leftmost path from $\nu$}}$. 
If $\t$ is finite this is the same as to say that $H(\tree)$ is the union of $H(\nu)$ over
all leaves $\n$ of $\t$.



The initial "configuration" is the tree $\t_\init$ consisting only of the root
$\e$ labeled by $(\pinit,\init_p,\bot,\es,\es)$.
Suppose that $\n$ is a leaf of $\tree$ labeled by $(p,s,b,L,H)$,  and there is a transition $s\act{a} s'$
for some $s'$ in $\aut_p$.
A transition between two "configurations" $\tree\act{\n,a}\tree'$ is defined
by the following rules.
\begin{itemize}
  \item If $\op(a)=\spawn{q,\s}$ then $\tree'$ is obtained from $\tree$ by adding two
  children $\n0,\n1$ of $\n$.
  The label of the left child $\n0$ is $(p,s',a,L,H)$. 
  The label of the right child $\n1$ is $(q,\init_q,\bot,L',\es)$ where
  $L'(x^q)=L(\s(x^q))$ if $\s(x^q)\not=\new$ and $L'(x^q)=\ell_{\n,x^q}$ is a fresh lock, otherwise.
  \item Otherwise, $\tree'$ is obtained from $\tree$ by adding a left child $\n0$ to $\n$.
  The label of $\n0$ must be of the form $(p,s',a,L,H')$ subject to the following
  constraints:
  \begin{itemize}
    \item If $\op(a)=\nop$ then $H'=H$,
    \item If $\op(a)=\get{x}$ and $L(x)\not\in H(\tree)$ then $H'=H\cup\set{L(x)}$,
    \item If $\op(a)=\rel{x}$ and $L(x)\in H$ then $H'=H \setminus  \set{L(x)}$.
  \end{itemize}
\AP Note that we do not allow a process to acquire a lock it already holds, or
release a lock it does not have. We call this property
""soundness"".
\end{itemize}

\AP A ""run"" is a (finite or infinite) sequence of "configurations" 
$\tree_0\xrightarrow{\nu_1, a_1} \tree_1 \xrightarrow{\nu_2, a_2}
\cdots$.
As the trees in a run are growing we can define the ""limit configuration"" of that run as
its last configuration if it is finite, and as the limit of its configurations if it
is infinite. 

\begin{remark}
  Note that in a "run", at every moment  distinct variables of a process are
  associated with distinct locks: $L(\n_i)(x) \not= L(\n_i)(y)$ for
  all $x,y \in \Var{\nu}$ with $x \not= y$.
\end{remark}

\begin{remark}\label{rem:labels-finite}
  The labels $L$ and $H$ can be computed out of the other three labels in the tree
  just following the transition rules. 
  We could have defined "configurations" as trees with only three labels
  $(p,s,a)$, but we preferred to include $L$ and $H$ for readability.
  Yet, later we will work with tree automata recognizing configurations
  and there it will be important that the labels come from a finite set. 
\end{remark}

\AP A configuration $\t$ is ""fair"" if for no leaf $\nu$ 
there is a transition $\t\act{\n,a}\t'$ for some $a$ and $\t'$. 
We show that this compact definition of fairness captures strong process
fairness of runs.
Recall that a run is ""strongly process-fair"" if whenever from some position in the run  a
process is enabled infinitely often then it moves after this position. 

\begin{restatable}{proposition}{PropFair}
	\label{prop:fair}
	Consider a "run" $\tree_0\xrightarrow{\nu_1, a_1} \tree_1 \xrightarrow{\nu_2, a_2} \cdots$
	and its "limit configuration" $\t$.
	The run is "strongly process-fair" if and only if $\t$ is "fair".
\end{restatable}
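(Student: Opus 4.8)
The plan is to prove both implications by contraposition. In each direction the only delicate point is the side condition of a $\get{x}$ move: unlike a $\nop$, a spawn, or a $\rel{x}$ move — which are enabled at a leaf $\nu$ purely as a function of the (fixed) label of $\nu$ — a $\get{x}$ move refers to the globally held locks $H(\tau_j)$, which may change along the run. So I would first isolate the following lemma relating the set $H(\tau)$ of locks \emph{ultimately held} in the limit $\tau$ with the sets $H(\tau_j)$ along the run $\tau_0\act{\nu_1,a_1}\tau_1\act{\nu_2,a_2}\cdots$: for every lock $\ell$, one has $\ell\in H(\tau)$ if and only if $\ell\in H(\tau_j)$ for all but finitely many $j$. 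I expect the proof of this lemma — and in particular its ``if'' direction — to be the main obstacle; everything after it is routine bookkeeping.

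To prove the lemma, first record the mutual-exclusion invariant by an easy induction on the run: since a $\get{x}$ step requires $L(x)\notin H(\tau_j)$ and every spawned process starts with $H=\es$, at each moment a lock is held by at most one leaf. For the ``only if'' direction, fix a node $\mu$ witnessing $\ell\in H(\tau)$, i.e.\ with $\ell\in H(\mu')$ for every $\mu'$ on the leftmost path from $\mu$; once $\mu$ has appeared, say in $\tau_k$, the lowest node of $\tau_j$ lying on that leftmost path is a leaf of $\tau_j$ (a node with a right child has a left one, since a spawn adds both at once) and it holds $\ell$, hence $\ell\in H(\tau_j)$ for all $j\ge k$. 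For the ``if'' direction, suppose $\ell\in H(\tau_j)$ for all $j\ge j_0$ and let $\mu$ be the unique leaf of $\tau_{j_0}$ holding $\ell$. No step $\tau_j\act{\nu,a}\tau_{j+1}$ with $j\ge j_0$ can release $\ell$: only the $\ell$-owning leaf could, and then $\ell$ would be unheld in $\tau_{j+1}$, contradicting $j+1\ge j_0$. Hence the $\ell$-owning leaf is only ever replaced by its left child, so $\ell$ is held all along the leftmost path starting at $\mu$, and $\mu$ witnesses $\ell\in H(\tau)$.

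For ``$\tau$ fair $\Rightarrow$ the run is strongly process-fair'' I argue contrapositively. If the run is not strongly process-fair there are a process $P$ and a position $i$ such that $P$ is enabled at infinitely many positions $\ge i$ and $P$ never moves at a position $>i$; then $P$'s current leaf stabilises to a leaf $\nu$ of $\tau$ (it is never expanded). As $\aut_{p(\nu)}$ has finitely many transitions out of $s(\nu)$, one action $a$ with $s(\nu)\act{a}s'$ enables $P$ at infinitely many positions $\ge i$. Then $\tau\act{\nu,a}\tau'$ for a suitable $\tau'$: the transition of $\aut_{p(\nu)}$ is there; for $\op(a)=\nop$ or a spawn nothing else is needed; for $\op(a)=\rel{x}$ the condition $L(x)\in H(\nu)$ held in the relevant $\tau_j$ and depends only on $\nu$'s label; for $\op(a)=\get{x}$ we have $L(x)\notin H(\tau_j)$ for infinitely many $j$, hence $L(x)\notin H(\tau)$ by the ``only if'' part of the lemma. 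So $\tau$ is not fair.

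For the converse, again contrapositively: if $\tau$ is not fair, some leaf $\nu$ of $\tau$ admits $\tau\act{\nu,a}\tau'$; let $i$ be the step at which $\nu$ was created and $P$ its process. Since $\nu$ is a leaf of $\tau$ it is never expanded, so it remains $P$'s current leaf for all $j\ge i$ and $P$ never moves at a position $>i$. It remains to check that $P$ is enabled at infinitely many positions $\ge i$: the transition $s(\nu)\act{a}s'$ is always present, and the side condition of $a$ holds in every $\tau_j$ with $j\ge i$ when $\op(a)$ is $\nop$, $\rel{x}$, or a spawn (it only involves $\nu$'s label), and holds at infinitely many such $j$ when $\op(a)=\get{x}$ — because $L(x)\notin H(\tau)$ forces, by the ``if'' part of the lemma, $L(x)\notin H(\tau_j)$ for infinitely many $j$. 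Hence the run is not strongly process-fair. Finally, if the run is finite the statement is immediate, since a finite run is strongly process-fair exactly when it is maximal, i.e.\ when its last configuration, which is $\tau$, is fair.
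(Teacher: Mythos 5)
Your proof is correct and follows essentially the same route as the paper's: both directions are argued by contraposition, the only delicate case is a $\get{x}$ move, and the crux in both texts is relating $L(x)\in H(\tau)$ to the lock being held at (co)finitely many positions of the run. The paper establishes the two directions of your key lemma inline — one by the same leftmost-path argument you give, the other by tracing positions where the lock is released and retaken — whereas you factor them out as a standalone biconditional proved via the mutual-exclusion invariant, a cleaner but equivalent organization (and your explicit treatment of finite runs is a small point the paper glosses over).
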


\noindent
\textbf{Objectives.} Instead of using some specific temporal logic we
stick to a most general specification formalism and use regular tree
properties for specifications.
A ""regular objective"" is given by a nondeterministic tree automaton
$\Bb$ over
$\S\cup\set{\bot}$, which defines a language of accepted "limit configurations".
The trees we work with can have nodes of rank $0$, $1$, or $2$.
So we suppose that the alphabet is partitioned into $\S_0$, $\S_1$ and
$\S_2$.
The nondeterministic transition function reflects this with
$\d(q,a)\incl\set{\top}$ if $a\in\S^0$, $\d(q,a)\incl Q$ if $a\in\S_1$, and 
$\d(q,a)\incl  Q\times Q$ if $a\in\S_2$.
A run of the automaton on a tree $t$ is a labeling of $t$ with states
respecting $\d$. 
In particular if $\n$ is a leaf of $t$ then $\top\in\d(q,a)$, where $q$ is the
state and $a$ is the letter in $\n$.
A run is accepting if for every infinite path the sequence of states on this
path is in the accepting set of the automaton.
We will work with accepting sets given by parity conditions. 
We say that a "configuration" $\t$ satisfies $\Bb$ when $\Bb$ accepts the tree
obtained from $\t$ by restricting only to action labels. 

Regular objectives can express many interesting properties.\label{page:properties}
For example, ``for every instance of process $p$ its run is in a regular language
$\Cc$''.
Or more complicated ``there is an instance of $p$ with a run in a regular
language $\Cc_1$ and all the instances of $p$ have runs in the language
$\Cc_2$''.
Of course, it is also possible to talk about boolean combinations of such
properties for different processes.
Observe that the resulting automaton $\Bb$ for these kinds of properties can be
a parity automaton with ranks $1,2,3$ (properties of sequences can be expressed
by B\"uchi automata, and rank $3$ is used to implement existential
quantification on process instances).

Regular objectives can express deadlock properties.
Since we only consider "process-fair" "runs", a finite branch
in a "limit configuration" indicates that a process is blocked forever after some
point. 
Hence, we can express properties such as ``there is an instance of $p$ that is blocked forever
after a finite run in a regular language $\Cc$''. We can also express that all
branches are finite, which is equivalent to a global deadlock since we are considering only "process-fair" "runs".

Reachability properties are also expressible with regular objectives.
We can check simultaneous reachability of several states in different branches,
for instance ``there is a reachable configuration in which some process $p$
reaches $s$ while some process $p'$ reaches $s'$''. 
There are ways to do it directly, but the shortest argument is through a small
modification of the "DLSS". 
We can simply add transitions to stop processes non-deterministically in desired
states: adding new $\nop$ transitions from $s$ and $s'$ to new deadlock states.
Using ideas from~\cite{LammichMSW13} we can also check reachability of a regular
set of configurations.

Going back to our dining philosophers example from Figure~\ref{fig:philosophers},
we can see also other types of properties we would like to express. 
For example, we would like to say that there are finitely many philosophers in
the system. 
This can be done simply by saying that there are not infinitely many spawns in
the limit configuration.
(In this example it is equivalent to saying that there is no branch turning
infinitely often to the right.)
Then we can verify a property like ``if there are finitely many processes in the
system and some philosopher eats infinitely often then all philosophers eat
infinitely often''.
This property holds under process-fairness, as philosophers release both their forks after eating.


\begin{definition}[""DLSS verification problem""]
  Given a "DLSS" $\Ss$ and a "regular objective" $\Bb$ decide if there is a "process-fair" "run"
  of $\Ss$ whose "limit configuration" $\t$ satisfies $\Bb$.
\end{definition}

Without any further restrictions we show that our problem is
undecidable: 

\begin{restatable}{theorem}{ThmUndec}
	\label{th:undec}
  The  "DLSS verification problem" is undecidable.
  The result holds even if the "DLSS" is finite-state and  every
  process uses at most 4 locks. 
\end{restatable}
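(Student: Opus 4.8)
The plan is to reduce from the halting problem for two-counter (Minsky) machines, which is undecidable. From a machine $M$ with a finite control and counters $c_1,c_2$ I construct a finite-state $\Ss$ and a regular objective $\Bb$ so that $\Ss$ has a process-fair run whose limit configuration satisfies $\Bb$ exactly when $M$ reaches its halting state. The initial process spawns one \emph{controller} process that keeps the current control state of $M$ in its finite state and performs the simulated instructions one at a time. Each counter is represented by a \emph{chain} of identical finite-state \emph{cell} processes built by repeated $\spawn{}$ operations, the counter's value being the number of cells in the chain; the two chains hang below the controller. Incrementing is immediate: the cell at the end of a chain spawns a fresh cell, creating with $\new$ the lock that links them. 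The whole difficulty --- and the reason the construction is possible only without the nested-locking restriction --- lies in decrement and the zero-test, which force the controller to learn something about the far end of an unboundedly long chain assembled from finite-state parts.

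To communicate along a chain I put \emph{two} locks on every edge between adjacent cells: a \emph{forward} lock carrying a command from parent to child and a \emph{backward} lock carrying the acknowledgement from child to parent. A cell therefore accesses exactly four locks --- the forward/backward pair of its parent edge and the forward/backward pair of its child edge --- which is why arity $4$ suffices (the controller likewise touches only the two edges leading to the heads of its two chains). An operation is launched by the controller at the head of the chain and relayed cell by cell: each cell receives a command on its parent edge, passes it on its child edge, and later relays the returning acknowledgement back up. The relay uses $\get{}$/$\rel{}$ in a deliberately non-stack order (a cell holds its parent-edge and child-edge locks simultaneously and releases them in the \emph{wrong} order), so that exactly one signal is in flight and its direction and phase are preserved; it is precisely this non-nested handshake that lets a single signal sweep the entire chain, which a purely nested protocol provably cannot do. For the zero-test the controller, being in a quiescent state between operations, checks locally whether it currently has a child: if not, the test succeeds; if it does, the command travels to the tail and a \emph{non-empty} acknowledgement returns. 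For decrement the command travels to the tail, the tail cell moves to a dead local state and detaches, and the removal is acknowledged back up, promoting its parent to the new tail.

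The objective $\Bb$ is then the reachability-style property ``the halting control state occurs on the controller's branch'', together with the requirement that no operation is left half-completed. Since we only consider process-fair runs, a run that blocks in the middle of a handshake is not fair and is discarded, whereas a genuinely halting computation yields a fair run whose limit configuration $\Bb$ accepts. Conversely, every limit configuration accepted by $\Bb$ decodes to a faithful halting computation of $M$, so the two problems are equivalent and the DLSS verification problem is undecidable. The construction uses only finite-state processes of arity at most $4$, giving the stated strengthening.

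The main obstacle is the second step: designing the cell's finite-state relay so that the protocol is \emph{faithful} in both directions. One must show (i) that each inc/dec/zero-test changes the chain length by the intended amount and reports the correct verdict, and, more delicately, (ii) that no spurious process-fair interleaving of the non-nested $\get{}$/$\rel{}$ operations lets an interior cell masquerade as the tail or lets one signal overtake another, which would make the controller miscount or mis-test. Establishing (ii) --- that with two locks per edge and non-nested acquisition the handshake admits no cheating interleaving --- is the heart of the proof and the place where the bounded-lock analysis must be carried out with care.
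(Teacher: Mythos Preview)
Your high-level strategy---encode an undecidable machine as a chain of finite-state processes that exchange information through non-nested lock handshakes---is right, and it is also what the paper does. But the concrete design you propose has a gap precisely at the point you yourself flag as ``the heart of the proof''.

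The problem is the lock allocation. You place two locks on each edge and nothing else, so any two adjacent processes share exactly \emph{two} locks. Two shared locks are not enough to transmit a bit reliably in this model. Whatever pair of sender protocols you pick for ``inc'' versus ``dec'', there is a fair interleaving in which the receiver cannot distinguish them: if the sender holds both locks and releases one to signal the bit, nothing prevents it from releasing the second before the receiver moves, making both locks free and the bit unrecoverable; if each side holds one lock, the only enforceable interaction is a symmetric swap, which carries a ``go'' signal but no data; and any guess-and-deadlock scheme lets the receiver proceed on a wrong guess while the sender is stuck, which is unfaithful rather than globally deadlocking. Your relay therefore cannot deliver the command type to the tail, and the tail cannot know whether to spawn or to die. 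The non-nestedness you invoke---a cell holding one lock from its parent edge and one from its child edge and releasing them out of order---concerns two \emph{different} edges; it may help with ordering of the wave along the chain, but it contributes nothing to bit-passing on a single edge, which is what the protocol needs.

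The paper avoids both the resizing and the two-lock bottleneck. It simulates a Turing machine rather than a Minsky machine: a chain of cells of nondeterministically guessed length is spawned once, each cell stores one tape symbol, and the head is wherever two \emph{global} locks $a,b$ currently sit. Every process has arity four---$a,b,c_k,c_{k+1}$ with $c_k$ the single lock on the edge to the left neighbour---so adjacent processes share \emph{three} locks, and three is exactly what the non-nested handshake needs. The sender executes $\rel{a}\,\get{c_{k+1}}\,\rel{b}\,\get{a}\,\rel{c_{k+1}}\,\get{b}$ (or the $b$-first variant) and the intended receiver the matching sequence; one then checks that any mismatch, or any interference by a third process grabbing $a$ or $b$, produces a global deadlock. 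There is no resize, no sweep to the tail and back, and no acknowledgement phase: each machine step is a single local handover. If you want to rescue the Minsky-machine route you would need this ``two global plus one per edge'' layout, and then you are essentially reproducing the paper's mechanism inside a more complicated reduction.
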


This result is obtained by creating an unbounded chain of processes simulating a Turing machine. Each process memorizes the content of a position on the tape, and communicates with its neighbours by interleaving lock acquisitions. The trick for processes to exchange information by interleaving lock acquisitions was already used in~\cite{KahIvaGup05}, and requires a non-nested usage of locks.

The situation improves significantly if we assume nested usage of locks.

\begin{definition}
  A process $\aut_p$ is  ""nested"" if it takes and releases locks according to a stack
  discipline, i.e., for all $x,y \in \Var{p}$, for
  all paths $s_0 \xrightarrow{a_1} \cdots
  \xrightarrow{a_n} s_n$ in $\aut_p$, with $op(a_1)=\get{x}$,
  $op(a_n)=\rel{x}$, $op(a_m) \not=\rel{x}$ for all $m<n$:
  if $op(a_i) = \get{y}$ for some $i<n$ then there exists $i<k<n$ such that $op(a_k) =
  \rel{y}$.  
  A "DLSS" is nested if all its processes are nested.
\end{definition}

We can state the first main result of the paper. 
Its proof is outlined in the next two sections. 
\begin{restatable}{theorem}{ThmMainFinite}
	\label{thm:main-finite}
  The "DLSS verification problem" for "nested" "DLSS"  is   \EXPTIME-complete.
  It is in \PTIME\ when the number of priorities in the specification
  automaton, and the maximal arity of processes are fixed. 
\end{restatable}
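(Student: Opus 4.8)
The plan is to establish the upper bound by the automata-theoretic route sketched in the introduction, and the lower bound by reduction from an \EXPTIME-hard problem (Proposition~\ref{prop:EXPTIME-hard}, already promised in the excerpt). For the upper bound I would proceed in three stages. First, reduce the problem to the finite-state case: argue that it suffices to verify nested finite-state \lss{}s, since the general theorem is stated for finite-state processes here and the pushdown extension is handled separately in Section~\ref{sec:pushdown}. Second, by Proposition~\ref{prop:fair}, a \kl{process-fair} \kl{run} is exactly a \kl{run} whose \kl{limit configuration} is \kl{fair}, so the \kl{DLSS verification problem} becomes: is there a tree $\t$ that (i) is a genuine \kl{limit configuration} of some \kl{run} of $\Ss$, (ii) is \kl{fair}, and (iii) satisfies $\Bb$? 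Third, build a nondeterministic B\"uchi tree automaton $\aut$ recognizing exactly the trees satisfying (i) and (ii), intersect it with $\Bb$, and test nonemptiness.

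The heart of the construction is $\aut$. Here I would invoke the characterization of \kl{limit configuration}s of finite-state \kl{DLSS}: by Lemma~\ref{lem:limit} a tree is such a limit configuration iff it is locally consistent (each left edge respects $\delta_p$ and the \kl{sound} lock-handling rules, each branching node respects $\spawn{q,\s}$ and the induced relabeling $L'$), each node-local process run is consistent with \kl{process-fairness}, \emph{and} there exists a global acyclic relation on the locks witnessing nested, deadlock-consistent lock usage. The only non-regular-looking ingredient is the global acyclic relation, but Lemma~\ref{lem:limitlocal} says this relation can be reconstructed from local orderings stored at each node; so I would have $\aut$ guess, at each node, a \kl{consistent order labeling} on the (boundedly many) locks visible there, and check by finite bookkeeping that these local orders are mutually compatible and acyclic. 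Since by Remark~\ref{rem:labels-finite} the $L,H$ labels are determined by $(p,s,a)$ and the number of locks visible at a node is at most the \kl{arity}, all the information $\aut$ carries in a state is: the current process/state, a bounded fragment of the lock order, and which locks are currently held — giving a state space \emph{linear} in $|\Ss|$ and \emph{exponential only in the arity}. The B\"uchi (parity rank $1,2$) acceptance on infinite paths encodes that no leaf is ever "blockable", i.e. the \kl{fair}ness condition of Proposition~\ref{prop:fair}.

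For the complexity bound: $\aut$ has $|\Ss|\cdot 2^{O(\ar(\Ss))}$ states; intersecting with the parity automaton $\Bb$ multiplies the state count by $|\Bb|$ and keeps the number of priorities linear in that of $\Bb$. Emptiness of a nondeterministic parity tree automaton with $n$ states and $d$ priorities is solvable in time $n^{O(d)}$, which is \EXPTIME\ in general (the arity enters the exponent), and \PTIME\ once the arity and the number of priorities of $\Bb$ are fixed. This yields the upper bound in both regimes claimed in the theorem. The lower bound is \EXPTIME-hardness, which follows directly from Proposition~\ref{prop:EXPTIME-hard}: checking whether some process of a finite-state \lss{} has an infinite \kl{run} is \EXPTIME-hard, and this is a special case of the \kl{DLSS verification problem} (take $\Bb$ to accept trees containing an infinite branch, which is nested-compatible since infinitely-recurring-process behavior imposes no lock-ordering obstruction).

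The main obstacle I anticipate is proving that the global acyclicity of the lock order is genuinely captured by the local-order bookkeeping of $\aut$ — that is, the soundness and completeness of encoding Lemma~\ref{lem:limitlocal}'s reconstruction inside a \emph{finite-memory} tree automaton. One must verify that incompatibilities between local orders along a path propagate correctly, that no "long-range" cyclic dependency among locks created far apart in the tree can escape the bounded-window check, and that the B\"uchi condition correctly rules out exactly the runs where a process is starved forever while being infinitely often enabled (as opposed to being permanently blocked by a lock, which is a legitimate \kl{fair} terminal situation). Getting this interplay between the acyclicity witness and the fairness condition right — so that the automaton is neither too permissive nor too restrictive — is where the real work lies; the size and complexity estimates are then routine consequences of the construction.
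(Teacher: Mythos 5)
Your plan follows essentially the same route as the paper: the lower bound is delegated to Proposition~\ref{prop:EXPTIME-hard}, and the upper bound builds the B\"uchi tree automaton for limit configurations of process-fair runs via Proposition~\ref{prop:fair} and Lemmas~\ref{lem:limit} and~\ref{lem:limitlocal} (this is exactly Proposition~\ref{prop:automaton-limit}), takes the product with the objective automaton, and reduces non-emptiness to a parity game whose size is exponential only in the arity and whose number of priorities is that of $\Bb$ plus a constant. The details you flag as the ``real work'' (encoding the acyclicity witness and the fairness/leaf conditions with finite memory) are indeed where the paper spends Section~\ref{sec:automaton}, and your complexity accounting matches the paper's.
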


We can extend this result to "DLSS" where transition systems
of each process 
are given by a pushdown automaton (see definitions in
Section~\ref{sec:pushdown}).
The complexity remains the same as for finite state processes.

\begin{restatable}{theorem}{ThmMainPush}
	\label{thm:main-pushdown}
  The "DLSS verification problem" for "nested" "pushdown DLSS" is
  \EXPTIME-complete. It is in 
  \PTIME\ when the number of priorities in the specification
  automaton, and the maximal arity of processes is fixed. 
\end{restatable}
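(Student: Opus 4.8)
The strategy is to lift the automaton-based argument behind Theorem~\ref{thm:main-finite} from finite-state processes to pushdown processes, and then to fix the one place where the naive lifting would blow up the complexity. First I would recall the structure of the finite-state proof: one builds a nondeterministic B\"uchi tree automaton $\aut_{\text{lim}}$ whose language is exactly the set of limit configurations of process-fair runs of $\Ss$, by checking locally (via Lemma~\ref{lem:limitlocal}) the conditions of Lemmas~\ref{lem:limit} and~\ref{lem:limitlocal}, and then one intersects $\aut_{\text{lim}}$ with the specification automaton $\Bb$ and tests emptiness. For pushdown processes, the only new difficulty in recognizing limit configurations is verifying that each left path of the tree really is a legal local run of the corresponding pushdown automaton; everything else (fairness, the soundness/nesting bookkeeping, the local order labeling witnessing the global acyclic relation on locks) is still a finite-memory check, because by Remark~\ref{rem:labels-finite} the $L,H$ labels are finite and the nesting/order conditions are local. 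So the correct object is a pushdown B\"uchi tree automaton: its stack is used only to track the stack of the pushdown process running along the current left path, and its finite control does the same work as in the finite-state construction. Crucially, whenever a $\spawn{q,\s}$ occurs, the spawned process $q$ starts with empty stack, so the automaton, when it branches to the right child, must reset its stack — this is exactly the \kl{right-resetting} property. Hence I would first state and prove: (i) there is a right-resetting pushdown parity tree automaton, of size linear in $\Ss$ and exponential only in the arity, recognizing the limit configurations of process-fair runs; this is the pushdown analogue of the finite-state lemma and is essentially a routine adaptation, the stack simply simulating the process stack along left branches.

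Next I would handle the product with the specification: intersecting a right-resetting pushdown parity tree automaton with a (finite) nondeterministic parity tree automaton $\Bb$ yields again a right-resetting pushdown parity tree automaton — the intersection adds the finite state space of $\Bb$ to the finite control and does not touch the stack, and since $\Bb$ resets nothing, right-resetting is preserved. The number of priorities in the product is (up to a constant) the number of priorities in $\Bb$, and its size is the product of the sizes. The \kl{DLSS verification problem} for pushdown $\Ss$ thus reduces to emptiness of a right-resetting pushdown parity tree automaton.

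The heart of the proof — and the main obstacle — is the following emptiness result, which I would isolate as a separate lemma: \emph{emptiness of right-resetting pushdown parity tree automata is decidable in time polynomial in the number of states and stack symbols and exponential only in the number of priorities; in particular it is in \PTIME\ for a fixed number of priorities, and in \EXPTIME\ in general.} The point is that in an ordinary pushdown tree automaton the stack content at a node depends on the whole path from the root, so emptiness is \EXPTIME-complete even with few priorities; but in the right-resetting case the stack is reset at every right turn, so along any branch only the suffix since the last right turn matters. I would prove this by a two-step reduction: (a) compute, by the standard saturation/summary technique for pushdown systems with parity conditions — equivalently, by solving the pushdown parity game obtained from the left-path behaviour — a finite summary that says, for each pair (control state, stack symbol), which sets of ``demands'' on children a subtree rooted there can satisfy while respecting the parity condition on paths that stay within this reset-block; since the stack of each block starts empty (or with a single fixed bottom symbol), this is a pushdown computation whose game graph has size polynomial in the automaton and whose parity index equals that of the automaton, so it is solved in time polynomial in the size and exponential in the number of priorities (Walukiewicz's bound on pushdown parity games); (b) replace the pushdown automaton by the equivalent \emph{finite} tree automaton over the summary information, whose emptiness — a finite parity tree automaton emptiness, i.e.\ a parity game — is again polynomial for fixed index. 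Composing the size bounds: $\aut_{\text{lim}}$ is exponential in the arity but only polynomial in everything else, $\Bb$ contributes a polynomial factor, and the emptiness test contributes a further factor that is polynomial for fixed priority count; this yields the \PTIME\ bound when the arity and the number of priorities of $\Bb$ are fixed, and \EXPTIME\ in general. The matching lower bound is already supplied by Proposition~\ref{prop:EXPTIME-hard} together with Theorem~\ref{thm:main-finite}, since a finite-state \kl{DLSS} is a special case of a pushdown one. The delicate point to get right in step~(a) is that the parity condition must be checked on \emph{infinite} paths that may cross infinitely many reset-blocks, so the block summaries must record not just acceptance within a block but the ``dominating'' priority seen and whether an accepting infinite path can continue below; threading this information correctly through the saturation while keeping the index equal to that of the input automaton is where the real care is needed.
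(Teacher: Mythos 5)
Your overall architecture is the same as the paper's: construct the limit-configuration recognizer as a pushdown B\"uchi tree automaton whose stack tracks the process stack along left paths, observe that it is right-resetting because spawned processes start with an empty stack, take the product with the specification to get a right-resetting pushdown parity tree automaton, and invoke the finite-state case for the lower bound. All of that matches Proposition~\ref{prop:automaton-limit} and the paper's proof of Theorem~\ref{thm:main-pushdown}.

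The gap is in your emptiness lemma, which is the actual technical content here. Your step~(a) proposes to compute block summaries by ``solving the pushdown parity game obtained from the left-path behaviour'' and cites Walukiewicz's bound. This does not give the claimed complexity: pushdown parity games are \EXPTIME-hard already for two priorities, and Walukiewicz's reduction to a finite game is exponential in the number of control states regardless of the parity index, so step~(a) as stated yields \EXPTIME\ even for fixed $d$ --- killing the \PTIME\ claim. The alternative reading of your summary (``which sets of demands on children a subtree can satisfy'') is no better, since there are $2^{O(|Q|\cdot d)}$ candidate demand-sets, again exponential in $|Q|$. The point you are missing is that right-resetting lets you avoid a genuine two-player pushdown game altogether: the only interaction with Pathfinder inside a reset-block is the choice of whether to exit to a right child, and since every right child restarts with the empty stack, the information needed to resolve that choice is just $d$ subsets $G_1,\dots,G_d$ of $Q$ (the states from which acceptance is possible given the dominating priority accumulated so far). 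Given a fixed guess of $G_1,\dots,G_d$, the left path becomes a \emph{one-player} pushdown word automaton $\Aal(G_1,\dots,G_d)$ whose nonemptiness is in \PTIME; the sets themselves are then computed as a nested fixpoint $\LFP X_d.\,\GFP X_{d-1}.\cdots P(X_1,\dots,X_d)$ over $\Pp(Q)^d$, alternating $\LFP$/$\GFP$ with the parity of the priority, which converges after at most $|Q|^d$ evaluations of $P$ --- polynomial for fixed $d$. The correctness argument (signatures $(i,j)$ for paths branching right infinitely often in one direction, an ordinal labelling of the skeleton tree in the other) is the part that actually needs care, and your sketch does not supply a substitute for it.
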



\section{Characterizing limit configurations}
\label{sec:limit}

A "configuration" is a labeled tree. 
We give a characterization of such trees that are "limit configurations"
of a "process-fair" 
"run" of a given "DLSS".
In the following section we will show that the set of "limit configurations" of
a given "DLSS" is a regular tree language, which will imply the decidability of our
verification problem. 

\begin{definition}
  Given a "configuration" $\tree$ with nodes $\n,\n'$ and variables
  $x \in \Var{\n}$, $x' \in \Var{\n'}$, we
  write $x \sim x'$ if $\vallabel(\n)(x)=\vallabel(\n')(x')$, so if $x$
  and $x'$ are mapped to the same lock.
  The ""scope"" of a lock $\ell$ is the set
  $\set{\n :\ell \in \vallabel(\n)(\Var{\n})}$.
\end{definition}

\begin{remark}\label{rem:scope}
  It is easy to see that in any configuration, the "scope" of a lock
  is a subtree. 
\end{remark}

We say that a node $\nu$ is labeled by an ""unmatched"" $\get{}$ if
it is labeled by some $\get{x}$ and there is no $\rel{x}$ operation in the leftmost path starting from
$\nu$.
Recall that $H(\tree)$ is the set of locks $\ell$ for which there is
some node $\nu$ with an "unmatched" $\get{x}$  and
$\vallabel(\nu)(x)=\ell$.

We define a relation $\bef$ on $H(\tree)$ by letting $\ell \bef \ell'$
if there exist two nodes $\n,\n'$ such that $\n$ is an ancestor of $\n'$,
$\nu$ is labeled with an "unmatched" $\get{}$ of $\ell$,
and $\n'$ is labeled with a $\get{}$ of $\ell'$.  

After these preparations we can state a central lemma giving a structural
characterization of "limit configurations" of "process-fair" "runs".

\begin{restatable}{lemma}{LemLimit}
	\label{lem:limit}
	A tree $\tree$ is the "limit configuration" of a "process-fair"
  "run" of a "nested" "DLSS" $\Ss$ if and only if
  \begin{description}
    
  \item[""F1""]  The node labels in $\tree$  match the local transitions of
  $\Ss$.
    
  \item[""F2""] For every leaf $\nu$ every possible transition from
      $s(\nu)$ has operation $\get{x}$ for some $x$  with $\vallabel(\nu)(x) \in
      H(\tree)$.
  \item[""F3""] For every lock  $\ell \in H(\tree)$ there are
      finitely many nodes with operations on $\ell$, and there
      is a unique node  labeled with an "unmatched" 
      $\get{}$ of $\ell$.
      
    \item[""F4""] The relation $\bef$ is acyclic.
    
    \item[""F5""] The relation $\bef$ has no infinite descending
        chain. 
  \end{description}
\end{restatable}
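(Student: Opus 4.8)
### Plan for the proof of Lemma~\ref{lem:limit}

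\textbf{Overall strategy.} The proof splits into the two implications. For the ``only if'' direction I would take a process-fair run $\tree_0\xrightarrow{\nu_1,a_1}\tree_1\cdots$ with limit $\tree$ and verify each of \kl{F1}--\kl{F5} more or less directly from the definitions. \kl{F1} is immediate since every node of $\tree$ was added by some transition, which respects the local transition rules of $\Ss$. \kl{F2} is exactly the unfolding of the definition of \kl{fair} (every enabled transition at a leaf must have operation $\get{x}$ with $\vallabel(\nu)(x)\in H(\tree)$, for otherwise that transition would be applicable and $\tree$ would not be fair), combined with Proposition~\ref{prop:fair}. For \kl{F3}: if $\ell\in H(\tree)$ then some node has an \kl{unmatched} $\get{}$ of $\ell$; once a lock is held it cannot be acquired again (\kl{soundness}) until released, and the node holding it forever blocks any further operation on $\ell$ on its leftmost path, and on side branches the \kl{scope} argument (Remark~\ref{rem:scope}) plus the fact that a spawned child starts with $H=\es$ limits operations — so only finitely many nodes carry operations on $\ell$, and uniqueness of the unmatched $\get{}$ follows from soundness. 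For \kl{F4} and \kl{F5}: acyclicity and well-foundedness of $\bef$ reflect the stack discipline. If there were a cycle $\ell_1\bef\ell_2\bef\cdots\bef\ell_k\bef\ell_1$, I would trace back the corresponding $\get{}$ nodes; using that each process is \kl{nested}, a $\get{}$ of $\ell_{i+1}$ strictly below an \kl{unmatched} $\get{}$ of $\ell_i$ forces $\ell_{i+1}$ to be released before $\ell_i$ would be (contradiction with $\ell_i$ unmatched) unless the nodes lie in different processes, in which case one chases the chain through spawn edges and derives a contradiction from the scope structure; the same bookkeeping shows no infinite descending chain can exist.

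\textbf{The ``if'' direction.} This is the substantial part. Given a tree $\tree$ satisfying \kl{F1}--\kl{F5}, I must build a process-fair run whose limit is $\tree$. The idea is to schedule the node-additions of $\tree$ in some order that is a legal run, i.e.\ a linearization of the tree's partial order that (i) always extends below a current leaf, (ii) never violates \kl{soundness} or the lock-availability side conditions of a $\get{}$, and (iii) is exhaustive in the limit so that fairness holds. The key tool is the acyclic relation $\bef$ from \kl{F4}: extend it to a total order on $H(\tree)$, and more globally use \kl{F5} to well-order the ``blocking'' structure. I would process the tree top-down, and when I reach a node $\nu$ carrying an \kl{unmatched} $\get{}$ of some $\ell$, I postpone exploring the part of the leftmost path of $\nu$ that comes after it (that subprocess is genuinely stuck) while first draining every node that needs $\ell$ to be free — these are finitely many by \kl{F3}, and $\bef$-minimality tells me which locks to free first so that no circular wait arises. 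Concretely I would define the run by a (transfinite-looking but actually $\omega$-length, via a fair diagonal enumeration) scheduling that at each step picks the $\bef$-least ``ready'' pending node. One then checks by induction on this schedule that each step is a valid configuration transition (here \kl{soundness} is maintained because the tree's labels already encode $H$ consistently via Remark~\ref{rem:labels-finite}, and the lock is free exactly when the schedule says so, which is where \kl{F4} is used), that the limit is exactly $\tree$ (every node is eventually added — this is where \kl{F3} and \kl{F5} prevent an infinite postponement), and that the limit is \kl{fair} (by \kl{F2}, every leaf's remaining transitions are $\get{}$'s of ultimately-held locks, hence not enabled in the limit), so Proposition~\ref{prop:fair} gives strong process-fairness.

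\textbf{Main obstacle.} The delicate point is constructing the schedule in the ``if'' direction and proving it terminates-in-the-limit, i.e.\ that every node of $\tree$ is eventually added despite the need to postpone exploration past \kl{unmatched} $\get{}$ nodes. The danger is an infinite regress: to add node $\nu$ I first must free lock $\ell$, which requires draining finitely many nodes, but reaching those nodes might require freeing another lock $\ell'$, and so on. The combination of \kl{F4} (no cycles, so this ``needs-freeing'' dependency is a partial order on $H(\tree)$) and \kl{F5} (no infinite descending chain, so the partial order is well-founded) is precisely what rules out the regress — a $\bef$-minimal lock among those currently needed can always be freed by a finite amount of work — and making this induction airtight, together with the side bookkeeping that the scope of each lock is a subtree (Remark~\ref{rem:scope}) and that nested locking keeps the per-process stacks consistent, is where the real work lies. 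I expect the finite-state assumption plays no role here (it is used later, for the tree-automaton construction), so the lemma can be stated and proved for arbitrary $\aut_p$.
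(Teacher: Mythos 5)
The substantial (``if'') direction of your plan is essentially the paper's proof: extend $\bef$ to a suitable total order on $H(\tree)$, schedule the node-additions greedily by always executing a maximal ``safe'' segment of a leftmost path (possible under all-locks-free thanks to nesting), admit the unmatched $\get{}$ of the $\bef$-next lock only once all other operations on that lock have been drained (finitely many by \kl{F3}), and use \kl{F4}/\kl{F5} to rule out infinite postponement and \kl{F2} to get fairness of the limit. You also correctly identify the two delicate points (order type of the extension / eventual exhaustion of all nodes, and the precise use of nestedness to make matched $\get{}$--$\rel{}$ segments executable), which is exactly where the paper spends its effort.

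However, your argument for \kl{F4} and \kl{F5} in the ``only if'' direction is a wrong approach. You propose to derive acyclicity of $\bef$ from the \kl{nested} discipline and the scope structure, claiming that a $\get{}$ of $\ell_{i+1}$ below an \kl{unmatched} $\get{}$ of $\ell_i$ ``forces $\ell_{i+1}$ to be released before $\ell_i$ would be.'' Nestedness says nothing here: its hypothesis requires a path ending in a $\rel{}$ of the outer lock, and the outer $\get{}$ is unmatched by assumption; moreover nothing forces $\ell_{i+1}$ to be released at all, and the cross-process case cannot be settled by scope considerations alone (trees satisfying \kl{F1}--\kl{F3} but violating \kl{F4} exist, and scopes are subtrees in all of them). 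The correct argument is purely temporal and does not use nestedness: for each $\ell\in H(\tree)$ let $m_\ell$ be the position in the run of its last (hence unique unmatched) $\get{}$; if $\ell\bef\ell'$ then the witnessing $\get{}$ of $\ell'$ occurs strictly after position $m_\ell$ and no later than $m_{\ell'}$, so $m_\ell<m_{\ell'}$, and $\bef$ embeds into $(\nats,<)$, giving both \kl{F4} and \kl{F5} at once. This is an easy fix, but as written that step of your proof would not go through.
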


Before presenting the proof of the previous lemma note that the main
difficulty is the fact that some locks can be taken and never released.
If $H(\t)=\es$ then from $\t$ we can
easily construct a run with "limit configuration" $\t$ by exploiting
the "nested" lock usage.
This is because any local run can be executed from a configuration
where all locks are available. 

\begin{proof}
	We start with the left-to-right
        implication. Suppose that we have a process-fair run $\tree_0 \xrightarrow{\nu_1,
          a_1} \tree_1 \xrightarrow{\nu_2, a_2} \cdots$ with "limit
        configuration" $\tree$.
	
	With every lock $\ell \in H(\tree)$ we 
        associate the maximal position $m=m_{\ell}$ such that
        $op(a_m)=\get{x}$ and $\vallabel(\nu_m)(x)=\ell$, so the position $m_{\ell}$
        where $\ell$ is acquired for the last time (and never released
        after). 

        It remains to check the  conditions of the lemma. The first one holds by definition of a "run".
        The second condition is due to process fairness and "soundness", since a
        process can always execute  transitions other than acquiring a
        lock, and locks not in $H(\tree)$ are free infinitely often.
 All actions involving $\ell \in H(\tree)$ must happen before
        position $m_\ell$, hence there are finitely many of them.
        Moreover, a lock cannot be acquired and never released more
        than once.
        This shows condition "F3".
        Conditions "F4" and "F5" are both satisfied because
        if $\ell \bef \ell'$ then $m_\ell < m_{\ell'}$.
        Thus $\bef$ is acyclic and it cannot have infinite descending
        chains.

  For the right-to-left implication, let  $\t$ 
  satisfy all conditions of the lemma.
  In order to construct a run from $\tau$ we first build a
  total order $<$ on $H(\t)$ that extends $\bef$ and has no infinite
  descending chain.
  Let $\ell'_0, \ell'_1, \ldots$ be some arbitrary enumeration of
  $H(\t)$ (which exists as $\tau$ is countable, thus so is $H(\t)$). 
  For all $i$ let $\downarrow\ell'_i = \set{\ell' \in H(\t) \mid \ell'
    \bef^+ \ell'_i}$. 
  As $\t$ satisfies condition "F3", the set of nodes that are
  ancestors of a node with an operation on $\ell'_i$ is finite.
  Since additionally by condition "F5" there are no infinite descending chains for
  $\bef$, the set $\downarrow\ell'_i$ is finite as well (by König's lemma).
  As $\bef$ is acyclic by condition "F4", we can chose some strict
  total order $<_i$ on $\downarrow\ell'_i$ that extends $\bef$.
  We define for all $\ell \in H(\t)$ the index $m_\ell = \min\set{i
    \in \nats \mid \ell \in \downarrow\ell'_i}$. 
  Finally, we set $\ell < \ell'$ if either $m_\ell < m_{\ell'}$ or if
  $m_{\ell} = m_{\ell'}$ and $\ell <_{m_\ell} \ell'$.
 By definition  $<$ is a strict total order on
 $H(\t)$ with no infinite descending chains.
 Moreover it is easy to see that if $\ell \bef \ell'$ then $\ell <
 \ell'$.
 This is the case because $\ell \bef \ell'$ and $\ell' \bef^+ \ell_i$
 implies $\ell \bef^+ \ell_i$, so $m_\ell \le m_{\ell'}$. 
  
Using the order $<$ on $H(\t)$ we construct  a process-fair run $\tree_0 \act{+} \tree_1
        \act{+} \cdots$  with $\tree$ as "limit configuration".
	During the construction we maintain the following invariant
  for every $i$:
  \begin{quote}
    There exists $k_i \in \nats$ such that all operations on locks $\ell_j$ with $j<k_i$ 
    are already executed in $\tree_i$ (there is no operation on
    these locks in $\tree \setminus \tree_i$).
    Moreover, all other locks are
    free after executing $\tree_i$: $H_i:=H(\t_i)=\set{\ell_0, \ldots, \ell_{k_i-1}}$.
  \end{quote}

	For $i=0$ the invariant is clearly satisfied as all locks are free ($k_0=0$). 

  For $i>0$ we assume that there is a run $\tree_0 \act{+} \tree_i$ and
  $\tree_i$ satisfies the invariant.
  Thus, all locks $\ell_j$ with $j<k_i$ are "ultimately held" and all other locks are free in $\t_i$.

  We say that a leaf $\n$ of $\tree_i$ is \emph{available} if one of the
  following holds:
  \begin{enumerate}
  \item either there is a descendant $\n' \not=\n$ on the leftmost path
    from $\n$ in $\tree$ with $H(\n')=H(\n)$ in $\t$,
    \item or  the left child $\n'$ of $\n$ in $\t$ is 
    labeled with an "unmatched" $\get{}$ of $\ell_{k_i}$, and
    there is no further operation on $\ell_{k_i}$ in $\tree \setminus
    \tree_i$. 
  \end{enumerate}

	In particular, leaves of $\t$ cannot be available.
  The strategy is to find the smallest available node $\n$  in BFS order, and
  execute the actions on the left path from $\n$ to $\n'$.
  The execution is possible as on this path there are no actions using locks from $H_i$ and all other
  locks are free. 
  Let $\tree_{i+1}$ denote the configuration thus obtained from
  $\tree_i$. 
  The invariant is satisfied after this execution, with
  $H_{i+1}=H_i$ in the first case above, resp.~$H_{i+1}=H_i \cup
  \set{\ell_{k_i}}$ in the second case.

 It remains to show that if a node is a leaf in $\t_i$ for all $i$
 after some point, then it is a leaf in $\t$.
 This shows, in particular, that there always exists some available
 node. 
 
 	Suppose that $\nu$ and $i_0$ are such that $\nu$ is a leaf of $\t_i$ for all $i\geq i_0$. If $\nu$ becomes available at some point then it stays available in all future configurations, and there are finitely many nodes before $\nu$ in the BFS order. 
 	Thus $\nu$ cannot be available in some $\t_i$, as otherwise it
        would eventually be taken.  	
  Note that by the invariant (and "soundness"), no leaf of $\tree_i$ has the
  left child labeled by some $\rel{}$ operation.
Moreover, every leaf $\n$ of $\tree_i$ with left child $\n'$ in
$\tree$ labeled by $\nop$, $\spawn{}$, or by some matched $\get{}$, is available (the latter because we consider "nested" "DLSS").
	Hence, the left child of $\nu$ must be labeled with an "unmatched" $\get{}$ of some $\ell \in H(\t)$. Thus there is some "unmatched" $\get{}$ on a lock of $H(\t)$ that is never executed.
	
	Let $m$ be the minimal index in the enumeration of $H(\t)$
        such that an "unmatched" $\get{}$ of $\ell_m$ in $\t$ is
        never executed. By minimality of $m$, there exists $i_1$ such
        that $m=k_i$ for all $i \geq i_1$. After $i_1$, all operations
        on locks $\ell < \ell_m$ have been executed. Thus, as $<$
        extends $\bef$, all "unmatched" $\get{}$ operations that have
        some descendant in $\t$ with operation on $\ell_m$, have been executed.
 	By the previous argument, the nodes with left child not
        labeled with an "unmatched" $\get{}$ cannot stay leaves
        forever. Hence, all nodes whose left child has some operation on $\ell_m$ eventually become leaves. The ones with matched $\get{}$ or other operations are then available and eventually executed.
 	
 	Hence, after some point the only remaining operations on $\ell_m$ are "unmatched" $\get{}$. Furthermore by the condition "F3" of the lemma there is exactly one. As a result, when it is reached and all other operations on $\ell_m$ have been executed, it becomes available, and is thus eventually executed, contradicting the definition of $m$.
 	
 	This proves that the limit of the run we have constructed is $\t$.	
  Observe finally that the run is
  process-fair because of condition "F2" of the lemma.
\end{proof}

The next lemma is an important step in the proof as it simplifies condition "F4"
of Lemma~\ref{lem:limit}.
This condition talks about the existence of a global order on some locks. 
The next lemma replaces this order with local orders in each of the nodes. 
These orders can be guessed by a finite automaton.

\begin{restatable}{lemma}{LemLimitLocal}\label{lem:limitlocal}
	Suppose that $\tree$ satisfies the first three conditions of
	Lemma~\ref{lem:limit}.
		The relation $\prec_H$ is acyclic
	if and only if
	there is a family of strict total orders 
	$<_\n$ over a subset of variables from $\Var{\n}$
        such that:
	\begin{description}
		\item[""F4.1""] $x$ is ordered by $<_\n$ if and only if $L(\n)(x)\in H(\t)$.
		\item[""F4.2""] if $x <_\nu x'$, $\nu'$ is a child of
		$\nu$, and $y, y' \in \Var{\nu'}$ are such that $x \sim y$ and $x' \sim
		y'$ then $y <_{\nu'} y'$. 
		\item[""F4.3""]  if $x,x'\in  \Var{\nu}$ and $\vallabel(\n)(x) \prec_H \vallabel(\n)(x')$ then $x <_\n x'$.
	\end{description}
\end{restatable}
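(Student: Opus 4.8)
The plan is to prove the equivalence in two directions. For the easy direction (existence of the orders $<_\nu$ implies $\prec_H$ acyclic), suppose we have a family $(<_\nu)$ satisfying \kl{F4.1}--\kl{F4.3}. I would show that any $\prec_H$-cycle $\ell_0 \prec_H \ell_1 \prec_H \cdots \prec_H \ell_k = \ell_0$ on locks of $H(\tree)$ can be transported into a single node, where it would contradict the strictness and totality of the order there. The tool is Remark~\ref{rem:scope}: each lock's \kl{scope} is a subtree, and $\ell_i \prec_H \ell_{i+1}$ witnesses a node $\nu$ in the scope of both (the node where $\ell_{i+1}$ is grabbed, which lies on the leftmost path below the \kl{unmatched} $\get{}$ of $\ell_i$, hence in $\ell_i$'s scope as well). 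Since finitely many subtrees (the scopes of $\ell_0,\dots,\ell_{k-1}$) that pairwise intersect must share a common node --- this uses that subtrees of a tree have the Helly property along the ancestor order, or more simply that the highest node among the chosen witnesses lies in all scopes --- there is a node $\nu^\star$ in the scope of every $\ell_i$. By \kl{F4.2} applied along ancestor chains, the relation $\vallabel(\nu^\star)(x_i) \prec_H \vallabel(\nu^\star)(x_{i+1})$ is inherited as $x_i <_{\nu^\star} x_{i+1}$ via \kl{F4.3}, giving a $<_{\nu^\star}$-cycle, which is impossible.

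For the harder direction, assume $\prec_H$ is acyclic; I need to build the orders. The natural candidate is to fix, once and for all, some strict total order $\lhd$ on all of $H(\tree)$ extending $\prec_H$ (possible since $\prec_H$ is acyclic and $H(\tree)$ is countable), and then set $x <_\nu x'$ iff $L(\nu)(x) \lhd L(\nu)(x')$, restricting to those $x$ with $L(\nu)(x) \in H(\tree)$. Condition \kl{F4.1} is then immediate from the definition; \kl{F4.3} holds because $\lhd$ extends $\prec_H$; and \kl{F4.2} holds because if $\nu'$ is a child of $\nu$ and $x \sim y$, $x' \sim y'$, then $L(\nu)(x) = L(\nu')(y)$ and $L(\nu)(x') = L(\nu')(y')$ are the \emph{same} two locks in $\Locks$, so the two local orders agree by construction. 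The only subtlety is to make sure the $<_\nu$ are total on exactly the right set and strict, which follows since $\lhd$ is a strict total order on $H(\tree)$ and $L(\nu)$ is injective on $\Var{\nu}$ (Remark after the definition of \kl{run}), so distinct variables mapping into $H(\tree)$ get distinct, hence $\lhd$-comparable, locks.

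The step I expect to be the main obstacle is the Helly-type argument in the forward direction: extracting from a $\prec_H$-cycle a single node lying in all the relevant scopes. One must be careful that a witness for $\ell_i \prec_H \ell_{i+1}$ is a \emph{pair} of nodes (an ancestor $\mu_i$ with the \kl{unmatched} $\get{}$ of $\ell_i$, and a descendant $\mu_i'$ with a $\get{}$ of $\ell_{i+1}$), and that both lie in the scopes of both $\ell_i$ and $\ell_{i+1}$ — the descendant $\mu_i'$ is on the leftmost path below $\mu_i$ by the definition of \kl{unmatched}, and $\ell_i$ is still held there, so $\mu_i'$ is in $\ell_i$'s scope; and $\mu_i'$ is trivially in $\ell_{i+1}$'s scope. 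Then among $\mu_0', \mu_1', \dots, \mu_{k-1}'$ the one closest to the root (they all lie in the scope of $\ell_0$, which is a subtree, but that does not by itself linearly order them) — here I would instead argue directly: pick any $\mu_i'$; it sits in the scope of $\ell_i$ and of $\ell_{i+1}$; to land in \emph{all} scopes simultaneously one invokes that the scopes form a family of subtrees of a single tree, and a finite family of subtrees of a tree with the property that consecutive ones (cyclically) intersect need not have a common point in general — so the cleaner route is to avoid needing a single node and instead contract the cycle locally step by step, pushing $\ell_i \prec_H \ell_{i+1}$ down to a common descendant of the two witnessing nodes (which exists since one is an ancestor of the other) and using \kl{F4.2} to move the order down; iterating shrinks the cycle until it lands in one node. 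I would present this inductive contraction carefully, as it is the crux. The remaining verifications (\kl{F4.1}, \kl{F4.3}, and the child-compatibility \kl{F4.2} in the converse) are routine once the global order $\lhd$ is fixed.
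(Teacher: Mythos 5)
Your second direction ($\prec_H$ acyclic implies the orders exist) is correct and coincides with the paper's argument: fix a strict linear extension of $\prec_H$ on $H(\tree)$ and restrict it at each node $\nu$ to $\set{x\in\Var{\nu} : \vallabel(\nu)(x)\in H(\tree)}$; conditions F4.1--F4.3 then follow at once.

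The other direction has a genuine gap, and it appears at the very first step. You claim that the witness for $\ell_i \prec_H \ell_{i+1}$ lies on the leftmost path below the unmatched $\get{}$ of $\ell_i$ and hence sits in the scope of $\ell_i$. That is not what the definition of $\prec_H$ says: the node carrying the $\get{}$ of $\ell_{i+1}$ is an arbitrary \emph{descendant} (in the whole tree, spawns included) of the node carrying the unmatched $\get{}$ of $\ell_i$, and it may belong to a spawned process to which $\ell_i$ was never passed. Concretely, a process may acquire $\ell_i$ forever, spawn a child without passing $\ell_i$, and the child may create and acquire a fresh lock $\ell_{i+1}$; then $\ell_i \prec_H \ell_{i+1}$ while the scopes of $\ell_i$ and $\ell_{i+1}$ are disjoint, so there is no node at which F4.3 says anything about this pair. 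Consequently neither the Helly-style argument nor your proposed ``inductive contraction'' (which still presupposes that consecutive scopes meet) can get off the ground. The paper's proof is built precisely around this obstacle: it first introduces the auxiliary relation $\ell\prec\ell'$ (``some node orders the two locks locally''), proves \emph{that} relation acyclic by a minimal-cycle argument (pick the lock whose scope has the deepest root; its two cyclic neighbours' scopes already share a node, so the cycle shortens), and only then treats $\prec_H$ by a second minimal-cycle argument with a case split: if the scopes of $\ell_{i-1}$ and $\ell_i$ intersect, F4.3 gives $\ell_{i-1}\prec\ell_i$; if they are disjoint, the entire scope of $\ell_i$ --- in particular its unique unmatched $\get{}$ --- lies strictly below the unmatched $\get{}$ of $\ell_{i-1}$, whence $\ell_{i-1}\prec_H\ell_{i+1}$ and the cycle shortens again; acyclicity of $\prec$ guarantees the disjoint case occurs somewhere on a minimal $\prec_H$-cycle. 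This two-tier argument is the missing content, and it is also the reason your labelling of this as the ``easy'' direction is inverted relative to the actual difficulty.
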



\section{Recognizing limit configurations}\label{sec:automaton}

Recall that a "configuration" is a possibly infinite tree with five labels
$p,s,a,L,H$.
As we have mentioned in Remark~\ref{rem:labels-finite},  configurations
need actually only three labels $p,s,a$.
The other two can be calculated from the tree.
Hence, "configurations" are labeled trees with node labels  coming
from a finite alphabet. 
Our goal in this section is to define a  tree automaton recognizing
"limit configurations" of "process-fair" "runs" of a given DLSS.

Our plan is to check the conditions ("F1-5") of Lemma~\ref{lem:limit}.
Actually we will check ("F1-3,5") and the conditions of Lemma~\ref{lem:limitlocal} that are
equivalent to "F4" of Lemma~\ref{lem:limit}.

\AP We first observe that since our processes are finite state it is immediate to
construct a nondeterministic tree automaton $\Bb_1$ verifying condition "F1".
This automaton just verifies local constraints between the labeling of a node and the labelings
of its children.
The constraints talk only about the labels $p,s,a$.
The automaton does not need any acceptance condition, every run is
accepting. 
We will say $\t$ is ""process-consistent"" if it is accepted by
$\Bb_1$.

Checking condition "F2" is more complicated because it refers to a set $H(\t)$ of
locks that are "ultimately held" by some process.
Our approach will be to define four types of predicates and color the
nodes of $\t$ with these predicates. 
From a correct coloring of $\t$ it will be easy to read out $H(\t)$.
Then we will show that the correct coloring can be characterized by conditions
verifiable by finite tree automata. 
The coloring will be also instrumental in checking the remaining
conditions "F3", "F4", "F5".


For a "configuration" $\t$, a node $\n$ and a variable $x\in\Var{\n}$ we define
four predicates.

\begin{itemize}
	\item $\n\sat\keeps(x)$ if at $\n$ process $p(\n)$ holds the lock
	$\ell=L(\n)(x)$ and never releases it: $\ell\in H(\n')$ for every left
	descendant  $\n'$ of $\n$.
	\item $\n\sat\evkeeps(x)$ if  $\n\not \sat\keeps(x)$ and there is a descendant
	$\n'$ of $\n$ and a variable $x' \in \Var{\n'}$ with 	$x \sim x'$ and $\n'\sat\keeps(x')$.
	\item $\n\sat\avoids(x)$ if neither $p(\n)$ nor any descendant
          takes
	$\ell=L(\n)(x)$, namely $\ell\not\in H(\n')$ for every descendant $\n'$ of $\n$
	(including $\n$).
	\item $\n\sat\evavoids(x)$ if $\n \not\sat\avoids(x)$ and on every path from $\n$
          there is $\n'$ such that $\n' \sat \avoids(x)$.
\end{itemize}
Observe a different quantification used in $\evkeeps$ and $\evavoids$. 
In the first case we require one $\n'$ to exist, in the second we want that such
a $\n'$ exists on every path. 

The next lemma shows how we can use the coloring to determine $H(\t)$.

\begin{restatable}{lemma}{LemHColoring}
	\label{lemma:H-from-coloring}
	Let $\t$ be a "process-consistent" "configuration". 
	A lock $\ell\in H(\t)$ if and only if there is a node $\n$ of $\t$ and a variable
	$x\in\Var{\n}$  such that $\n\sat\keeps(x)$ and $L(\n)(x)=\ell$.
\end{restatable}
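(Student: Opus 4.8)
The plan is to prove both directions of the equivalence, using the definitions of the four predicates together with the characterization of $H(\t)$ via unmatched $\get{}$ operations. Recall that $\ell \in H(\t)$ means there is a node $\n$ with an unmatched $\get{x}$, i.e.\ $L(\n)(x)=\ell$ and there is no $\rel{x}$ on the leftmost path from $\n$; equivalently, $\ell \in H(\nu')$ for all $\nu'$ on the leftmost path below (and including the left child of) $\nu$.

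For the right-to-left direction, suppose $\n\sat\keeps(x)$ with $L(\n)(x)=\ell$. By definition of $\keeps$, process $p(\n)$ holds $\ell$ at $\n$ and $\ell\in H(\n')$ for every left descendant $\n'$ of $\n$. In particular the leftmost path from $\n$ never releases $\ell$, so somewhere on that path there is an unmatched $\get{}$ of $\ell$ (either at $\n$ itself if $\ell\notin H(\n)$ — but $\ell\in H(\n)$ here — or, more carefully, since $\ell$ is held all along the leftmost path and is held at $\n$, we trace back up to the node where $\ell$ was last acquired and not released; in any case the leftmost path from that node, being a subpath, keeps $\ell$ forever). Hence $\ell\in H(\t)$. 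This direction is essentially unfolding the definitions and a small bookkeeping argument about where the acquisition happened.

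For the left-to-right direction, suppose $\ell\in H(\t)$. Then there is a node $\mu$ labeled by an unmatched $\get{x}$ with $L(\mu)(x)=\ell$, so $\ell\in H(\mu0)$ and $\ell$ is never released on the leftmost path from $\mu0$. Let $\n$ be the left child $\mu0$; then $\ell\in H(\n)$ and $\ell\in H(\n')$ for every left descendant $\n'$ of $\n$ (since the leftmost path from $\n$ never releases $\ell$, using that a lock, once in $H(\t)$, is only touched by finitely many operations (condition F3) and the unmatched $\get{}$ is unique, so nothing re-acquires or releases it below). Pick the variable $x'\in\Var{\n}$ with $L(\n)(x')=\ell$ (it exists and is unique by the Remark that distinct variables map to distinct locks, and $\ell$ is in the scope of $\n$). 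Then $\n\sat\keeps(x')$ by definition, and $L(\n)(x')=\ell$, as required.

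The main obstacle I expect is the careful handling of the leftmost-path condition in the $\keeps$ predicate versus the ``never released'' condition defining unmatched $\get{}$: one must make sure that ``$\ell\in H(\n')$ for every left descendant $\n'$'' really is equivalent to ``no $\rel{}$ of $\ell$ on the leftmost path and $\ell$ is held initially'', which uses soundness (a process does not release a lock it does not hold) and condition F3 of Lemma~\ref{lem:limit} (finitely many operations on $\ell$, unique unmatched $\get{}$) to rule out pathological re-acquisition patterns. Once that equivalence is pinned down, both implications follow by chasing the definitions, and the uniqueness of the variable mapping to a given lock at a node (from the Remark) makes the correspondence between $\ell$ and $x$ clean.
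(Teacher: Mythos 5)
Your proof is correct, but it takes a slightly longer route than necessary. The paper's argument is a one-liner: $\n\sat\keeps(x)$ with $L(\n)(x)=\ell$ says precisely that $\ell\in H(\n')$ for every node $\n'$ on the leftmost path from $\n$, which is word-for-word the defining condition for $\ell\in H(\t)$ with witness $\nu=\n$; conversely, any witness $\nu$ for $\ell\in H(\t)$ has $\ell\in H(\nu)\subseteq L(\nu)(\Var{\nu})$, so some $x$ with $L(\nu)(x)=\ell$ satisfies $\nu\sat\keeps(x)$. You instead detour through the unmatched-$\get{}$ characterization of $H(\t)$, which forces the bookkeeping you describe (locating the last acquisition, arguing nothing re-acquires or releases below). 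That is workable, but two points deserve care. First, your appeal to condition F3 of Lemma~\ref{lem:limit} is not available: the lemma assumes only that $\t$ is process-consistent, not that it is the limit configuration of a fair run, so F3 cannot be invoked; fortunately it is also not needed, since along a leftmost path the set $H$ can only lose $\ell$ through a $\rel{}$ of $\ell$, and the unmatchedness of the $\get{}$ already excludes that. Second, choosing $\n=\mu0$ silently assumes $\mu$ is not a leaf; taking $\n=\mu$ itself (where $\ell\in H(\mu)$ already holds, as $\mu$ is the node created by the $\get{x}$ transition) avoids this. With those two repairs your argument is sound, though the direct definition-matching gives the same conclusion with essentially no work.
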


\begin{proof}
	Follows from the definitions, since $\n\sat\keeps(x)$ if and only if $\ell \in H(\n')$ for
	every left descendant $\n'$ of $\n$.
\end{proof}

\AP The above conditions define a ""semantically correct"" coloring
of nodes of a "configuration" $\t$
by sets of predicates
\begin{equation*}
	\Cc(\n) = \set{P(x): x\in\Var{\n}, \n\sat P(x)}
\end{equation*}
where $P(x)$ is one of $\keeps(x),\evkeeps(x),\avoids(x),\evavoids(x)$. 
Observe that the four predicates are mutually exclusive, but it may be also the
case that none of them holds.
We say that a variable $x \in \Var{\n}$ is ""uncolored"" in $\n$ if $\Cc(\n)$
contains no predicate on $x$.

We now describe consistency conditions on a coloring of "configurations" guaranteeing
that a coloring is  "semantically correct".

Before moving forward we introduce one piece of notation. 
A node that is a right child, namely a node of a form $\n1$ is due to
$\spawn{q,\s}$ operation. 
More precisely $\op(\n0)=\spawn{q,\s}$. 
We refer to this $\s$ as $\s(\n1)$.

\AP A coloring of a "configuration" $\t$ is ""branch-consistent"" if 
for every node $\n$ of $\t$ and every variable $x\in\Var{\n}$ the following
conditions are satisfied.
\begin{itemize}
	\item If $\n$ has one successor $\n0$ then $\n0$ inherits the colors from $\n$ except for two
	cases depending on $\op(\n0)$, i.e, the operation used to obtain $\n0$:
	\begin{itemize}
		\item If $\evkeeps(x)$ is in  $\Cc(\n)$ and the operation is $\get{x}$ then $\Cc(\n0)$ must have
		 either $\evkeeps(x)$ or $\keeps(x)$.
		\item If $\evavoids(x)$ is in $\Cc(\n)$ and the operation is $\rel{x}$ then $\Cc(\n0)$ must have
		either $\evavoids(x)$ or $\avoids(x)$.
	\end{itemize}
	\item If $\n$ has two successors $\n0$, $\n1$, and there is no $y$ with
	$\s(\n1)(y)=x$ then $\n0$ inherits $x$ color from $\n$ and there is no
	constraint due to $x$ on colors in $\n1$.
	\item If $\n$ has two successors and $x=\s(\n_1)(y)$ for some $y\in\Var{\n1}$ then 
	\begin{itemize}
		\item If $\keeps(x)$ in $\Cc(\n)$ then $\keeps(x)$ in $\Cc(\n0)$ and $\avoids(y)$
		in $\Cc(\n1)$.
		\item If $\avoids(x)$ in $\Cc(\n)$ then $\avoids(x)$ in $\Cc(\n0)$ and $\avoids(y)$ in $\Cc(\n1)$.
		\item If $\evkeeps(x)$ in $\Cc(\n)$ then either 
			\begin{itemize}
				\item $\evkeeps(x)$ in $\Cc(\n0)$ and either $\avoids(y)$ or
				$\evavoids(y)$ in $\n1$, or
				\item $\evkeeps(y)$ in $\Cc(\n1)$ and either $\avoids(x)$ or
				$\evavoids(x)$ in $\n0$.
			\end{itemize} 
		\item If $\evavoids(x)$ is $\n$ then $\evavoids(x)$ in $\Cc(\n0)$ and  $\evavoids(y)$ in $\Cc(\n1)$.
	\end{itemize}
\end{itemize}

	
	
	


\AP Next we describe when a coloring is ""eventuality-consistent"".
An ""ev-trace"" is a sequence of pairs $(\n_1,x_1),(\n_2,x_2),\dots$ where :
\begin{itemize}
	\item $\n_1,\n_2,\dots$ is a path in $\t$,
	\item $x_i\in\Var{\n_i}$; moreover $x_{i+1}=x_i$ if $\n_{i+1}$ is the left successor
	of $\n_i$, and $\s(\n_{i+1})(x_{i+1})=x_i$ if $\n_{i+1}$ is the right
	successor of $\n_i$.
	\item $\evkeeps(x_i)$ or $\evavoids(x_i)$ is in $\Cc(\n_i)$.
\end{itemize}
Observe that it follows that it cannot be the case that we have $\evkeeps(x_i)$
and $\evavoids(x_{i+1})$ or vice versa. 
A coloring is "eventuality-consistent" if every "ev-trace" in the
coloring of a "configuration" is finite. 

\AP Finally, a coloring is ""recurrence-consistent"" if for every $\n$
and "uncolored" $x\in\Var{\n}$ the lock $\ell=L(\n)(x)$ is taken and released
infinitely often below $\n$.

\AP A coloring is ""syntactically correct"" if it is 
"branch-consistent", "eventuality-consistent", and "recurrence-consistent".
We show that "syntactically correct" colorings characterize "semantically
correct" colorings.
The two implications are stated separately as the statements are slightly different. 

\begin{restatable}{lemma}{LemSemSynt}
	\label{lem:sem-coloring-is-synt}
	If $\t$ is a "limit configuration" and
        $\Cc$ is a "semantically correct" coloring of $\tree$ then
        $\Cc$ is "syntactically correct". 
\end{restatable}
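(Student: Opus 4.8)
The plan is to verify each of the three consistency conditions in turn, using the semantic meaning of the four predicates $\keeps$, $\evkeeps$, $\avoids$, $\evavoids$ and the fact that these reflect genuine properties of the lock usage in the limit configuration $\t$.

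\smallskip
\noindent\textbf{Branch-consistency.} First I would check the inheritance conditions. Fix a node $\n$, a variable $x\in\Var{\n}$, and consider first the case where $\n$ has a single successor $\n0$ obtained by an operation $\op(\n0)$. Since the local configuration of $p(\n)$ at $\n0$ differs from that at $\n$ only in whether the lock $L(\n)(x)$ is held (and only when $\op(\n0)\in\{\get{x},\rel{x}\}$), the ``subtree below $\n0$'' is the same as ``the left subtree below $\n$'', so each of the four semantic predicates is essentially preserved. The two exceptional cases are exactly the ones stated: if $\op(\n0)=\get{x}$ then a node satisfying $\evkeeps(x)$ at $\n$ may now satisfy $\keeps(x)$ at $\n0$ (if the lock just acquired is the one never released), and dually for $\evavoids$ and $\rel{x}$. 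One checks from the definitions that no other transition of colors is possible and that these are forced. For a node $\n$ with two successors $\n0,\n1$ coming from $\spawn{q,\s}$, I would split on whether $x$ is passed to the child, i.e.\ whether $x=\s(\n1)(y)$ for some $y$. If not, then $x$ does not name any lock reachable in the right subtree, so the color of $x$ in $\n0$ is inherited from $\n$ and there is no constraint from $x$ on $\n1$. If $x=\s(\n1)(y)$, then $x$ and $y$ name the same lock $\ell$, whose scope contains both subtrees; I would go through the four cases for the color of $x$ in $\n$ and, using that the set of future operations on $\ell$ splits between the two subtrees, derive exactly the listed disjunctions (e.g.\ $\keeps(x)$ at $\n$ means $\ell$ is held forever on the left path, hence untouched in the right subtree, giving $\avoids(y)$ at $\n1$; $\evkeeps(x)$ means $\ell$ is eventually held forever, and that permanent holder lies in exactly one of the two subtrees, giving the stated alternative).

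\smallskip
\noindent\textbf{Eventuality-consistency.} Suppose for contradiction there is an infinite "ev-trace" $(\n_1,x_1),(\n_2,x_2),\dots$. All the $x_i$ name the same lock $\ell$ (the trace follows the scope of $\ell$ along a path), and each $\n_i$ satisfies $\evkeeps(x_i)$ or $\evavoids(x_i)$, and as observed it is the same one of the two for all $i$. In the $\evkeeps$ case: $\n_i\sat\evkeeps(x_i)$ says $\ell$ is held forever by some node strictly below $\n_i$; but $\n_i\not\sat\keeps(x_i)$ says that node is not on the left path from $\n_i$, so it lies in a right subtree branching off below $\n_i$. Since the trace path $\n_1,\n_2,\dots$ is infinite and each $\n_i$ fails $\keeps$, the permanent holder of $\ell$ is never reached along this path, yet $\ell$ is acquired-forever somewhere in $\t$ — but this permanent acquisition is a single node on some leftmost path, and by condition "F3" (only finitely many nodes have operations on $\ell$) and the structure of scopes this is impossible to postpone indefinitely. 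The $\evavoids$ case is symmetric using that on every path from $\n_i$ some descendant satisfies $\avoids$, which König-style forces the path $\n_i$ to eventually leave the ``still-touching-$\ell$'' region, contradicting $\n_i\not\sat\avoids(x_i)$ for all $i$. So every "ev-trace" is finite.

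\smallskip
\noindent\textbf{Recurrence-consistency.} If $x\in\Var{\n}$ is uncolored, then in particular $\n\not\sat\keeps(x)$, $\n\not\sat\avoids(x)$, $\n\not\sat\evkeeps(x)$, $\n\not\sat\evavoids(x)$ for $\ell=L(\n)(x)$. From $\neg\avoids$ and $\neg\evavoids$ we get: it is not the case that on every path from $\n$ some descendant stops touching $\ell$ forever — so there is an infinite path from $\n$ on which $\ell$ is touched infinitely often, i.e.\ acquired infinitely often. From $\neg\keeps$ and $\neg\evkeeps$ we get that $\ell$ is never acquired-and-never-released below $\n$; combined with soundness and "F3", each acquisition of $\ell$ below $\n$ is matched by a later release. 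Hence $\ell$ is acquired and released infinitely often below $\n$, which is exactly "recurrence-consistency" at $(\n,x)$.

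\smallskip
\noindent\textbf{Main obstacle.} The delicate part is the eventuality-consistency argument: one must argue that an infinite "ev-trace" would force either a permanently-held lock to have its permanent holder ``pushed to infinity'' along a branch, or an always-eventually-avoided lock never to be actually avoided along some branch — both contradicting the finiteness of operations on each ultimately-held lock ("F3") together with the tree/scope structure. Making the König's-lemma step and the bookkeeping over which subtree contains the ``witness'' node fully rigorous, rather than hand-wavy, is where the real work lies; the branch- and recurrence-consistency parts are essentially unwinding definitions.
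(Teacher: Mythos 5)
Your proposal is correct and follows essentially the same route as the paper: branch-consistency is unwound from the semantics of the four predicates (with condition F3 of Lemma~\ref{lem:limit} supplying the finiteness and uniqueness facts needed for the $\keeps$/$\evkeeps$ clauses), eventuality-consistency comes down to the unique unmatched $\get{}$ of an ultimately held lock sitting at finite depth, and recurrence-consistency is obtained by exhibiting an infinite path below the uncolored node on each of whose nodes a release of $\ell$ occurs underneath --- the paper builds that path by propagating uncoloredness to a child at each step, whereas you extract it directly from the negation of $\evavoids$, which is an equivalent mechanism. One small inaccuracy that does not affect the argument: in the $\evkeeps$ case of eventuality-consistency, the witness satisfying $\keeps$ may perfectly well lie on the left path from $\n_i$ (since $\keeps$ requires the lock to be held already \emph{at} the node, a node can satisfy $\evkeeps$ because a left descendant first acquires and then keeps the lock), but all that matters is that every trace node is a strict ancestor of the unique finite-depth node carrying the unmatched $\get{}$, which is what F3 gives you.
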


For the other direction, we prove a more general statement without assuming that
$\t$ is a "limit configuration". 
This is important as ultimately we will use the consistency properties to test
if $\t$ is a "limit configuration".

\begin{restatable}{lemma}{LemSyntSem}
	\label{lem:syntactically-correct}
	If $\t$ is a "configuration" and $\Cc$ a "syntactically correct"
        coloring of $\t$,
	then $\Cc$ is "semantically correct".
      \end{restatable}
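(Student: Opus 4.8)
The plan is to show that each of the three syntactic consistency properties, taken together, forces the four colour predicates to mean exactly what their semantic definitions require. I would argue contrapositively for each predicate at each node: assume $\n\sat P(x)$ fails semantically and derive a violation of "branch-consistency", "eventuality-consistency" or "recurrence-consistency". The key observation throughout is Remark~\ref{rem:scope}: the "scope" of a lock is a subtree, so once a variable carries a colour at a node, "branch-consistency" tracks that colour (possibly renaming the variable via $\s$ on right edges, possibly upgrading $\evkeeps\to\keeps$ or $\evavoids\to\avoids$) along every descendant where the lock is still in scope. This lets me reduce global statements about "all descendants" / "some descendant" / "every path" to a finite or path-local search controlled by the colouring.

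First I would handle $\keeps$ and $\avoids$, which are the "safe" predicates. Suppose $\keeps(x)\in\Cc(\n)$ but $\ell=L(\n)(x)$ is released at some left descendant $\n'$ of $\n$. Walking down the leftmost path and each right-branch where the lock stays in scope, "branch-consistency" keeps propagating $\keeps$ on the representative of $x$ (the $\keeps$ case has no relaxation); but a node where $\ell$ is actually released is, by "process-consistency", a node where the local transition is $\rel{}$, and "branch-consistency" never allows $\keeps$ to survive a $\rel{x}$ step — it demands $\keeps(x)$ at $\n0$ only when the operation is not $\rel{x}$, and more precisely the $\rel{}$ clause of the semantic check forbids it. So no such $\n'$ exists, i.e. $\n\sat\keeps(x)$. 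Symmetrically, if $\avoids(x)\in\Cc(\n)$ the colour is propagated to every descendant (both children in the spawn case), and any descendant that actually acquires $\ell$ contradicts the propagated $\avoids$. These are the routine cases.

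The interesting cases are $\evkeeps$ and $\evavoids$, where I need "eventuality-consistency". Suppose $\evkeeps(x)\in\Cc(\n)$. By "branch-consistency", from $(\n,x)$ every maximal continuation either eventually upgrades to $\keeps$ on a representative variable — giving the required descendant $\n'$ with $\n'\sat\keeps(x')$ and $x\sim x'$, once the $\keeps$ case is already known to be semantically sound — or stays $\evkeeps$ forever, producing an infinite "ev-trace", contradicting "eventuality-consistency". The only subtlety is the two-successor clause: when $\evkeeps(x)$ splits, the colouring must place $\evkeeps$ on exactly one side (and $\avoids/\evavoids$ on the other), so there is a well-defined "ev-trace" to follow, and following it is legitimate because the lock's scope is a subtree so it really does continue down exactly one branch. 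The same argument, dualised, handles $\evavoids(x)$: "branch-consistency" forces $\evavoids$ to propagate to \emph{both} children, so if it never reaches $\avoids$ along some path we again get an infinite "ev-trace". Finally, for "uncolored" variables, "recurrence-consistency" directly says the lock is taken and released infinitely often below $\n$, which is exactly the negation of all four predicates; combined with Lemma~\ref{lemma:H-from-coloring} this also confirms that the uncoloured case is consistent with $\ell\notin H(\t)$ in the relevant sense. The main obstacle I anticipate is the bookkeeping in the two-successor case of $\evkeeps$: one must check that "branch-consistency" leaves no "loophole" allowing $\evkeeps(x)$ at $\n$ while neither child carries a $\keeps$/$\evkeeps$ obligation, and that the variable renaming through $\s(\n1)$ is consistent with the $\sim$ relation used in the semantic definition of $\evkeeps$ — this is where a careful case split on whether $x$ is in the image of $\s(\n1)$ is needed, exactly paralleling the clauses already written out in the definition of "branch-consistency".
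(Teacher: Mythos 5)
Your proposal is correct and follows essentially the same decomposition as the paper's (much terser) proof: branch-consistency secures the semantics of $\keeps$ and $\avoids$ by propagation, eventuality-consistency rules out an infinite "ev-trace" and thereby forces the $\evkeeps$/$\evavoids$ obligations to be discharged, and recurrence-consistency handles "uncolored" variables. Your write-up simply supplies the case analysis the paper leaves implicit.
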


Having a correct coloring will help us to verify all conditions of
Lemma~\ref{lem:limit}.
Condition "F2" refers to $L(\n)(x)\in H(\t)$. 
We need another labeling to be able to express this.

\AP A ""syntactic H-labeling"" of $\t$ assigns to every node $\n$ a subset
$H^s(\n)\incl\Var{\n}$.
We require the following properties:
\begin{itemize}
	\item For the root $\e$ we have $H^s(\e)=\es$.
	\item If $\n 0$ exists: $x\in H^s(\n0)$ if and only if $x\in
          H^s(\n)$. 
	\item If $\n 1$ exists: $y\in H^s(\n1)$ if and only if either $\s(\n1)(y)=\new$ and $\n1\sat\evkeeps(y)$, or
	$\s(\n1)(y)=x$ and $\n\sat\evkeeps(x)$.
\end{itemize}
It is clear that every configuration tree has a unique $H^s$ labelling.

\begin{restatable}{lemma}{LemHS}
	\label{lem:Hs}
	Let $\t$ be a "process-consistent" "configuration" with 
        "syntactically correct" coloring.
	For every node $\n$ and variable $x\in\Var{\n}$ we have:
        $L(\n)(x)\in H(\t)$ if and only if $x\in H^s(\n)$. 
\end{restatable}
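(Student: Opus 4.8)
The plan is to rewrite the right‑hand condition in terms of the coloring and then induct on the depth of $\n$. Since the coloring $\Cc$ is "syntactically correct", Lemma~\ref{lem:syntactically-correct} makes it "semantically correct", so for every node $\m$ and $z\in\Var{\m}$ the predicate $P(z)$ belongs to $\Cc(\m)$ exactly when $\m\sat P(z)$; in particular $\m\sat\keeps(z)$ iff $\keeps(z)\in\Cc(\m)$. Feeding this into Lemma~\ref{lemma:H-from-coloring} gives the working characterisation: a lock $\ell$ lies in $H(\t)$ iff some node $\m$ has a variable $z$ with $\m\sat\keeps(z)$ and $L(\m)(z)=\ell$; moreover, by Remark~\ref{rem:scope}, every such $\m$ sits in the "scope" of $\ell$, a subtree whose root is the node at which $\ell$ is created.

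I would then prove $x\in H^s(\n)\iff L(\n)(x)\in H(\t)$ by induction on the depth of $\n$. The base case is the root $\e$: here $\ar(\pinit)=0$ forces $\Var{\e}=\es$, and $H^s(\e)=\es$, so there is nothing to check. For the step, assume the claim at $\n$ and look at a child. If $\n0$ exists it carries the same process and the same lock map $L$, and $H^s(\n0)=H^s(\n)$, so the claim transfers verbatim. If $\n1$ exists with $\s(\n1)(y)=\new$, let $\ell$ be the fresh lock $\ell_{\n,y}=L(\n1)(y)$: the "scope" of $\ell$ is a subtree whose root is $\n1$, and a freshly spawned node holds no lock, so $\n1\not\sat\keeps(y)$. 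By the working characterisation, $\ell\in H(\t)$ iff some node keeps $\ell$, and any such node lies in the "scope" of $\ell$, hence is a proper descendant of $\n1$; by the definition of $\evkeeps$ this is equivalent to $\n1\sat\evkeeps(y)$, which by the defining clause of $H^s$ for right children is equivalent to $y\in H^s(\n1)$.

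The remaining case is a right child $\n1$ with $\s(\n1)(y)=x$ for some $x\in\Var{\n}$, so that $L(\n1)(y)=L(\n)(x)$ --- call this lock $\ell$ --- and $\ell$ is left untouched by the spawn. Here I would route the argument through $\n$: the defining clause of $H^s$ ties $y\in H^s(\n1)$ to the predicate tagging $x$ at $\n$, and I would match this against the induction hypothesis at $\n$ by a case analysis over the four predicates possibly carried by $x$ at $\n$, each handled by the corresponding two‑successor clause of the "branch-consistent" conditions (the $\evkeeps$ subcase using the clause that sends the $\evkeeps$ token into exactly one of $\n0,\n1$). I expect this case to be the main obstacle: it amounts to showing that the "ultimately held" status of a lock that is shared among several spawned subtrees is reflected correctly at every node referring to that lock --- concretely, that the $\evkeeps$ token travels down precisely the branch along which the lock eventually gets kept --- whereas the other cases are routine bookkeeping once the reformulation through Lemmas~\ref{lem:syntactically-correct} and~\ref{lemma:H-from-coloring} is in place.
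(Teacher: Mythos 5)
Your treatment of the root, of left children, and of a right child receiving a fresh lock is correct, and the fresh-lock case is exactly the paper's key step (the paper takes the node of the scope of $\ell$ closest to the root, shows it satisfies $\evkeeps$, and then pushes $H^s$-membership down the path to $\n$; your depth induction reorganises the same argument). The gap is the case you yourself defer as the main obstacle: a right child $\n1$ with $\s(\n1)(y)=x$. The route you sketch reduces $y\in H^s(\n1)$, via the defining clause of $H^s$, to the condition $\n\sat\evkeeps(x)$, so to close the induction you would need $\n\sat\evkeeps(x)$ to be equivalent to $L(\n)(x)\in H(\t)$. That equivalence does not follow from the induction hypothesis and is not true in general: $\evkeeps(x)$ at $\n$ asks for a keeper of $\ell$ among the \emph{descendants of $\n$}, while $\ell\in H(\t)$ only guarantees a keeper somewhere in the scope of $\ell$. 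The simplest failure is when $\n$ itself keeps $\ell$: then $\n\sat\keeps(x)$, mutual exclusivity of the predicates gives $\n\not\sat\evkeeps(x)$, branch-consistency puts $\avoids(y)$ on $\n1$, and your case analysis cannot conclude $y\in H^s(\n1)$ even though $L(\n1)(y)=\ell\in H(\t)$. The same happens whenever the keeper lies in a sibling subtree of the scope, spawned off the leftmost path above $\n$. Your own description of the mechanism --- the $\evkeeps$ token travels down precisely the branch along which the lock gets kept --- is accurate, and it is exactly why the token is absent at every spawn off that branch; so the argument you propose establishes the biconditional only along the keeper's branch, not at every node of the scope, which is what the lemma asserts.

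To repair this you must justify that $H^s$-membership transfers unchanged across \emph{every} edge inside the scope of $\ell$, i.e., match the spawn clause against $x\in H^s(\n)$ (the induction hypothesis) rather than against the semantic predicate $\evkeeps$ held at $\n$; this is also what the paper's one-sentence induction from the scope root down to $\n$ implicitly relies on. With that transfer in place, both directions follow from your depth induction combined with the scope-root characterisation you already proved (namely that $\ell\in H(\t)$ iff the node creating $\ell$ satisfies $\evkeeps$). As written, the proposal leaves precisely this step open, and the plan given for it does not go through.
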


Thanks to Lemma~\ref{lem:Hs} we obtain
\begin{restatable}{lemma}{LemCheckFtwo}
	\label{lem:chekcing-F2}
	Let $\t$ be a  "process-consistent" "configuration" with a "syntactically
	correct" coloring.  
	Condition "F2" of Lemma~\ref{lem:limit} holds for $\t$ if and only if 
	for every leaf $\n$ of $\t$, every possible transition from
        $s(\n)$ has some
	$\get{x}$ operation with $x\in H^s(\n)$.
\end{restatable}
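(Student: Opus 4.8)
The plan is to derive this directly from Lemma~\ref{lem:Hs}, which is the only non-routine ingredient; the rest is bookkeeping about quantifiers. Recall that condition "F2" of Lemma~\ref{lem:limit} says: for every leaf $\n$ of $\t$, every transition available from $s(\n)$ in $\aut_{p(\n)}$ carries an operation of the form $\get{x}$ for some variable $x$ with $L(\n)(x)\in H(\t)$. The characterization we want differs only in that the semantic side condition $L(\n)(x)\in H(\t)$ is replaced by the syntactic one $x\in H^s(\n)$.

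First I would fix a leaf $\n$ of $\t$ and an arbitrary transition $s(\n)\act{a}s'$ in $\aut_{p(\n)}$. The ``shape'' part of "F2" — that $\op(a)$ must be some $\get{x}$, rather than $\nop$, a $\rel{}$, or a spawn — is literally identical in both statements, so it suffices to check that, once $\op(a)=\get{x}$, the two side conditions $L(\n)(x)\in H(\t)$ and $x\in H^s(\n)$ are equivalent. This is exactly Lemma~\ref{lem:Hs}, whose hypotheses (that $\t$ is "process-consistent" and carries a "syntactically correct" coloring) are precisely those assumed here. Applying this equivalence uniformly over all leaves $\n$ and all transitions from $s(\n)$ yields both implications of the lemma at once.

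The only point requiring care is the quantifier structure: in "F2" the variable $x$ is not chosen existentially and independently, but is determined by the transition — it is the variable appearing in the operation $\get{x}$ that the transition performs — so the translation through Lemma~\ref{lem:Hs} must be applied to that specific $x$ at that specific node $\n$. I do not expect any genuine obstacle: the substance is entirely contained in Lemma~\ref{lem:Hs}, and this statement simply repackages it into a form directly checkable by a finite tree automaton, since membership in $H^s(\n)$ depends only on the finite labels $H^s(\n)$, $s(\n)$, $p(\n)$ at the leaf and on the (fixed) transition relation of $\aut_{p(\n)}$.
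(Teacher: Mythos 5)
Your proposal is correct and matches the paper's proof exactly: the paper also dispatches this lemma with a single appeal to Lemma~\ref{lem:Hs}, which gives the pointwise equivalence of $L(\n)(x)\in H(\t)$ and $x\in H^s(\n)$ under the stated hypotheses. Your additional remarks on the quantifier structure are accurate but not needed beyond that.
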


\begin{restatable}{lemma}{LemCheckFthree}
	\label{lem:chekcing-F3}
	Let $\t$ be a  "process-consistent" "configuration" with a "syntactically
	correct" coloring.  
	Then condition "F3" of Lemma~\ref{lem:limit} holds for $\t$.
\end{restatable}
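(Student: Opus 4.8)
The plan is to exploit the correspondence between the colouring and the held-lock structure established in the preceding lemmas. First I would invoke Lemma~\ref{lem:syntactically-correct} to turn the "syntactically correct" colouring $\Cc$ into a "semantically correct" one, so that the predicates $\keeps,\evkeeps,\avoids,\evavoids$ may be used with their intended meaning; since $\t$ is "process-consistent", the labels $L,H$ are the canonical ones and "soundness" holds at every node (no $\get{}$ of a held lock, no $\rel{}$ of an unheld lock). Now fix $\ell\in H(\t)$. By Lemma~\ref{lemma:H-from-coloring} there is a node $\n_0$ with $\n_0\sat\keeps(x_0)$ and $L(\n_0)(x_0)=\ell$. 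By Remark~\ref{rem:scope} the "scope" $S$ of $\ell$ is a subtree, and since every lock is born at some $\new$ (recall $\ar(\pinit)=0$), $S$ has a root $r$; let $z$ be the variable at $r$ pointing to $\ell$. The process at $r$ starts with no held lock, so $r\not\sat\keeps(z)$, whereas $\n_0\in S$ is a descendant of $r$ witnessing $\keeps$ for a variable pointing to $\ell$; hence $\evkeeps(z)\in\Cc(r)$.

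Next I would follow this colour through the tree using the fact that $\Cc$ is "branch-consistent". At every node the colour $\evkeeps$ of the unique variable pointing to $\ell$ is passed to exactly one child (the left child inherits it, or gets $\evkeeps$ or $\keeps$ when the operation is a $\get{}$ of $\ell$; at a spawn that passes $\ell$, one child keeps $\evkeeps$ and the other becomes $\avoids$ or $\evavoids$), and dually it is received from a uniquely determined parent. Hence the $\evkeeps$-coloured occurrences of $\ell$ form a single path $\pi$ starting at $r$, and $\pi$ together with its variables is an "ev-trace", so it is finite because $\Cc$ is "eventuality-consistent". Inspecting the "branch-consistent" rules at the last node $\m^*$ of $\pi$ shows the path can only stop in one way: $\m^*$ has a single successor, $\op(\m^*0)=\get{x^*}$ with $L(\m^*)(x^*)=\ell$, and $\keeps(x^*)\in\Cc(\m^*0)$ (any other way would extend the trace or switch to $\evavoids$, which is impossible). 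As $\m^*0\sat\keeps(x^*)$, the lock $\ell$ is never released on the leftmost path from $\m^*$, so $\m^*$ carries an "unmatched" $\get{}$ of $\ell$; this proves existence. For uniqueness, let $\n_1$ carry an "unmatched" $\get{}$ of $\ell$; then $\n_1\in S$ and $\n_10\sat\keeps$ for the variable pointing to $\ell$ (it is acquired at $\n_1$ and never released afterwards on the left path). By "soundness" $\ell\notin H(\n_1)$, so that variable is not $\keeps$-coloured at $\n_1$, and it is clearly neither $\avoids$- nor $\evavoids$-coloured since $\ell$ is held forever on the leftmost path; hence it is $\evkeeps$-coloured. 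Tracing the colour upwards places $\n_1$ on $\pi$, and any $\evkeeps$-node of $\pi$ strictly before $\m^*$ that performs a $\get{}$ of $\ell$ has an $\evkeeps$-coloured (not $\keeps$-coloured) left child, so by the meaning of $\evkeeps$ the lock is released further down the left path, i.e. that $\get{}$ is matched; therefore $\n_1=\m^*$.

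It remains to bound the number of operations on $\ell$. I would classify each node of $S$ carrying a $\get{}$ or $\rel{}$ on $\ell$ by the colour of the variable pointing to $\ell$: a $\keeps$- or $\avoids$-coloured node performs no such operation (by "soundness" and the meanings of $\keeps$ and $\avoids$); the $\evkeeps$-coloured ones all lie on the finite path $\pi$; and below an $\evavoids$-coloured node every maximal downward branch eventually reaches an $\avoids$-coloured node — it is an "ev-trace", hence finite — so by König's lemma that subtree contains only finitely many operations on $\ell$, and only finitely many such subtrees hang off the finite path $\pi$. (Since $\Cc$ is "branch-consistent" and $\evkeeps(z)\in\Cc(r)$, no variable pointing to $\ell$ inside $S$ is "uncolored", so the cases are exhaustive.) Altogether only finitely many nodes carry an operation on $\ell$, which together with the previous paragraph establishes "F3". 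The step I expect to be the main obstacle is the middle paragraph: carrying out the case analysis over the "branch-consistent" rules, both downwards and upwards, to see that the $\evkeeps$-occurrences of $\ell$ form a single finite path whose endpoint is exactly the unique "unmatched" $\get{}$, and keeping matched and unmatched $\get{}$ operations apart along the way.
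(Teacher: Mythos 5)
Your proof is correct and follows essentially the same route as the paper's: find the $\keeps$ witness via Lemma~\ref{lemma:H-from-coloring}, observe that the root of the scope of $\ell$ is colored $\evkeeps$, follow the (finite) $\evkeeps$-path down to the unique unmatched $\get{}$, and bound the operations on $\ell$ in the $\evavoids$-subtrees branching off that path. The paper's version is terser---it takes the path from the scope root directly to the $\keeps$ node and leaves uniqueness largely implicit---so your added detail on uniqueness and on the K\"onig-lemma step is an elaboration of the same argument rather than a different one.
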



It remains to deal with conditions "F4" and "F5" of Lemma~\ref{lem:limit}.
Condition "F4" is more difficult to check as it requires to find an acyclic
relation with some properties.
Fortunately Lemma~\ref{lem:limitlocal} gives an equivalent condition talking
about a family of local orders $<_\n$ for every node $\n$ of a
"configuration". 
An automaton can easily guess such a family of orders. 
We show that it can also check the required properties. 

\AP A ""consistent order labeling"" assigns to every node $\n$ of $\t$
a total order $<_\n$ on some subset of $\Var{\n}$.
The assignment must satisfy the following conditions for every node $\n$:
\begin{enumerate}
	\item $x$ is ordered by $<_\n$ if and only if $x\in H^s(\n)$,
	\item if $x<_\n x'$ and $x,x'\in\Var{\n 0}$ then $x <_{\n0} x'$,
	\item if $x<_\n x'$, $\n1$ exists, and $\s(\n1)(y)=x$, $\s(\n1)(y')=x'$ 
          then $y<_{\n1} y'$,
	\item if $\n \sat\keeps(x)$ and  $y<_\n x$ then $\n \sat\keeps(y)$ or $\n \sat\avoids(y)$.
\end{enumerate}

\begin{restatable}{lemma}{LemSyntOrd}	
\label{lem:synt-orders}
  Let $\t$ be a  "process-consistent" "configuration" with a "syntactically
	correct" coloring.  
	A family of local orders $<_\n$ is a "consistent order
        labeling" of $\t$ if and only if it 
	satisfies the conditions of Lemma~\ref{lem:limitlocal}.
\end{restatable}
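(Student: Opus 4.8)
The plan is to prove both implications by matching, one by one, the four defining conditions of a "consistent order labeling" with the conditions "F4.1"--"F4.3" of Lemma~\ref{lem:limitlocal}: condition~1 will correspond to "F4.1", conditions~2 and~3 together to "F4.2", and condition~4 to "F4.3" (in the presence of the other three). The first two correspondences are essentially bookkeeping; the real work is the third one, and within it the delicate direction is ``conditions~1--4 $\Rightarrow$ "F4.3"''.

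First I would dispatch the easy correspondences. For condition~1 versus "F4.1": since $\t$ is "process-consistent" with a "syntactically correct" coloring, Lemma~\ref{lem:Hs} gives $x\in H^s(\n)\iff\vallabel(\n)(x)\in H(\t)$, so the two conditions impose the same requirement on the domain of $<_\n$. For conditions~2, 3 versus "F4.2": the key is that across a single tree edge the relation $\sim$ is exactly variable inheritance or substitution. If $\n'=\n0$ then $\Var{\n'}=\Var{\n}$ and $\vallabel(\n')=\vallabel(\n)$, so for $y,y'\in\Var{\n'}$, $x\sim y$ and $x'\sim y'$ force $y=x$ and $y'=x'$ (distinct variables of a process carry distinct locks, cf.~Remark~\ref{rem:labels-finite}), and "F4.2" restricted to left children is condition~2; if $\n'=\n1$ then $\vallabel(\n1)(y)=\vallabel(\n)(\s(\n1)(y))$ when $\s(\n1)(y)\neq\new$ and is a fresh lock otherwise, so for $x\in\Var{\n}$, $x\sim y$ forces $\s(\n1)(y)=x$, and "F4.2" restricted to right children is condition~3. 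Both statements are vacuous on the variables for which only one of the two locks survives to the child, so conditions~2, 3 are exactly "F4.2".

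For the correspondence between condition~4 and "F4.3" I would use the characterisations of this section. \emph{From "F4.3" to condition~4} (the coloring is "semantically correct" by Lemma~\ref{lem:syntactically-correct}): if $\n\sat\keeps(x)$ and $y<_\n x$ but $\n\not\sat\keeps(y)$ and $\n\not\sat\avoids(y)$, then $\vallabel(\n)(y)\in H(\t)$ (condition~1) rules out $y$ being "uncolored" (that would make the lock taken and released infinitely often below $\n$, hence not in $H(\t)$) or $\evavoids$-coloured (that would put the lock outside $H(\t)$), so $\n\sat\evkeeps(y)$; then the "unmatched" $\get{}$ of $\vallabel(\n)(y)$ sits at a descendant of $\n$, while $\n\sat\keeps(x)$ forces the "unmatched" $\get{}$ of $\vallabel(\n)(x)$ to sit at a strict ancestor of $\n$ on $\n$'s local run, so the latter is an ancestor of the former and $\vallabel(\n)(x)\bef\vallabel(\n)(y)$, whence $x<_\n y$ by "F4.3" — contradicting $y<_\n x$. \emph{From condition~4 (with 1, 2, 3) to "F4.3"}: fix $\n$ and $x,x'\in\Var{\n}$ with $\ell:=\vallabel(\n)(x)\bef\ell':=\vallabel(\n)(x')$, so $\ell,\ell'\in H(\t)$ and $x,x'$ are ordered. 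The "scopes" of $\ell$ and $\ell'$ are subtrees (Remark~\ref{rem:scope}), so their intersection is a connected subtree; on it both locks are always referenced and, being in $H(\t)$, always ordered (condition~1), and the order between the $\ell$- and $\ell'$-variables propagates from parent to child by conditions~2, 3, so it is constant. It thus suffices to establish ``$\ell$-variable $<$ $\ell'$-variable'' at one node where both are referenced. By "F3" (Lemma~\ref{lem:chekcing-F3}) there is a unique node $g$ with an "unmatched" $\get{}$ of $\ell$ and a descendant $\n'$ of $g$ with a $\get{}$ of $\ell'$; using the definition of $\keeps$ and "semantically correct" one checks that $g0\sat\keeps(z)$ with $\vallabel(g0)(z)=\ell$ and that $\ell$ stays referenced from $g0$ down to $\n'$, so the branch from $g0$ to $\n'$ enters the "scope" of $\ell'$ at a node lying in both "scopes". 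Condition~4 applied along this branch forces the $\ell$-variable below the $\ell'$-variable: otherwise condition~4 would colour the $\ell'$-variable $\avoids$ (impossible, as $\n'$ below carries a $\get{}$ of $\ell'$) or $\keeps$, which pushes the "unmatched" $\get{}$ of $\ell'$ up onto the same local run above $g$, so that $\ell'\bef\ell$ and one repeats the argument with $\ell,\ell'$ interchanged, the resulting chain of $\keeps$-nodes terminating at $g$, where condition~4 applied to $\keeps$ of $\ell'$ closes the loop. This contradicts constancy unless $x<_\n x'$ already, giving "F4.3".

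The step I expect to be the main obstacle is ``conditions~1--4 $\Rightarrow$ "F4.3"'': translating the purely local condition~4 into the global, acyclicity-flavoured statement "F4.3". The two devices that should make it work are (i) the constancy of the order between a fixed pair of locks across the connected intersection of their "scopes", which lets us apply condition~4 wherever convenient, and (ii) a downward induction along the "scope" of $\ell'$ starting from the "unmatched" $\get{}$ of $\ell$, in which condition~4 repeatedly forces $\keeps$ or $\avoids$ colours until the order is pinned down; getting the relative positions of the two "scopes" and the termination of this induction exactly right is the part requiring care.
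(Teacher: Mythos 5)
Your overall architecture coincides with the paper's: condition~1 versus \kl{F4.1} via Lemma~\ref{lem:Hs}, conditions~2--3 versus \kl{F4.2}, condition~4 versus \kl{F4.3}, and, for the hard direction, constancy of the order between a fixed pair of locks on the connected intersection of their \kl{scopes}, reducing everything to pinning the order at one well-chosen node. In the direction from \kl{F4.3} to condition~4 your route is the paper's, but one intermediate claim is false: $\n\sat\evavoids(y)$ does \emph{not} put $L(\n)(y)$ outside $H(\t)$, since the process that ultimately holds that lock may sit in a part of its scope incomparable with $\n$. This does not sink the argument, because all you need is that $\n\not\sat\keeps(y)$ and $\n\not\sat\avoids(y)$ force an operation on $L(\n)(y)$ at or below $\n$, which already gives $L(\n)(x)\bef L(\n)(y)$ and the contradiction with $y<_\n x$.

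The genuine gap is in the last step of the converse, where condition~4 must force the $\ell$-variable below the $\ell'$-variable at the chosen node. First, condition~4 is only applicable at a node where $\keeps$ of the $\ell$-variable actually holds, i.e.\ a node on the local run of the process performing the \kl{unmatched} $\get{}$ of $\ell$ (a left descendant of $g$); the branch from $g0$ to $\n'$ may turn right before entering the scope of $\ell'$, and at such nodes the $\ell$-variable satisfies $\avoids$, not $\keeps$. You establish $\keeps$ only at $g0$ and then apply condition~4 ``along the branch''; the paper's three-way case analysis on the relative positions of $\n$, the unmatched $\get{}$ of $\ell$ and the witnessing $\get{}$ of $\ell'$ exists precisely to locate a node where $\keeps$ of the $\ell$-variable genuinely holds and which lies in both scopes. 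Second, your exclusion of the alternative $\keeps$ of the $\ell'$-variable does not go through: it does not yield $\ell'\bef\ell$ (the unmatched $\get{}$ of $\ell'$ may lie strictly below $g$, hence have no descendant with a $\get{}$ of $\ell$), and the proposed ``repeat with the roles of $\ell,\ell'$ interchanged'' is circular---you would be using \kl{F4.3} for the pair $(\ell',\ell)$ while proving it for $(\ell,\ell')$---with no justified termination of the alleged chain of $\keeps$-nodes. The correct exclusion is direct and semantic: $\keeps$ of the $\ell'$-variable at the node means its process holds $\ell'$ at every left descendant, while $\n'$ below carries a $\get{}$ of $\ell'$; if $\n'$ is a left descendant this violates \kl{soundness} and process-consistency, and otherwise branch-consistency forces $\avoids$ of the $\ell'$-variable at the right child through which $\n'$ is reached, contradicting semantic correctness of the coloring (Lemma~\ref{lem:syntactically-correct}). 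With $\keeps$ and $\avoids$ of the $\ell'$-variable both excluded, condition~4 immediately yields the desired inequality, which is exactly how the paper concludes.
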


We consider now condition "F5".
\AP We say that a consistent order labeling of $\t$ admits an \intro{infinite descending chain} if there exist
a sequence of nodes $\n_1,\n_2,\dots$ and variables $(x_i)_i, (y_i)_i$
such that for every $i>0$: (i) $\n_i$ is an ancestor of $\n_{i+1}$, (ii) $y_i \sim x_{i+1}$, and (iii) $y_{i}<_{\n_i} x_i$.

\begin{restatable}{lemma}{WellFounded}
	\label{lem:well-founded}
	Let $\t$ be a  "process-consistent" "configuration" with a "syntactically
	correct" coloring.  
	If $\prec_H$ has no infinite descending chain then there is a "consistent order labeling" of $\t$ with no "infinite descending chain". 
	If $\prec_H$ has an infinite descending chain then every "consistent order
	labeling" of $\t$ admits an "infinite descending chain".
\end{restatable}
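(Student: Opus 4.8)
The plan is to prove the two implications by quite different means, the first being essentially a bookkeeping argument and the second the genuinely delicate one.

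\emph{From well-foundedness of $\prec_H$ to a good labeling.} Suppose $\prec_H$ has no infinite descending chain. Then $\prec_H$ is in particular acyclic, since any cycle $\ell_1 \prec_H \cdots \prec_H \ell_k \prec_H \ell_1$ unfolds into an infinite descending chain. Hence conditions F3 (Lemma~\ref{lem:chekcing-F3}), F4, F5 all hold for $\t$, so the construction in the proof of Lemma~\ref{lem:limit} applies verbatim and yields a strict total order $<$ on $H(\t)$ that extends $\prec_H$ and has no infinite descending chain. For each node $\n$ I would define $<_\n$ on $\Var{\n}$ by $x <_\n x'$ iff $x,x' \in H^s(\n)$ and $L(\n)(x) < L(\n)(x')$, and check that $(<_\n)_\n$ is a "consistent order labeling": condition~1 is immediate from Lemma~\ref{lem:Hs} and totality of $<$ on $H(\t)$; conditions~2 and~3 hold because $<$ is a fixed order on locks while conditions~2 and~3 only compare variables mapped to already-ordered locks, so inheritance along an edge is automatic; condition~4 is the only subtle point: if $\n\sat\keeps(x)$ then $L(\n)(x)$ is acquired at $\n$ or above and never released, so its unique unmatched $\get{}$ node is an ancestor of $\n$ (possibly $\n$), whereas if $y<_\n x$ but neither $\keeps(y)$ nor $\avoids(y)$ holds at $\n$ then $L(\n)(y)\in H(\t)$ has its unmatched $\get{}$ node at or below $\n$ (otherwise $\n\sat\avoids(y)$), so $L(\n)(x)\prec_H L(\n)(y)$, hence $L(\n)(x)<L(\n)(y)$, contradicting $y<_\n x$. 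Finally, an infinite descending chain of $(<_\n)_\n$ would, unwinding the definition and using Lemma~\ref{lem:Hs}, produce locks $\lambda_1>\lambda_2>\cdots$ in $<$, which is impossible.

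\emph{From an infinite descending chain of $\prec_H$ to one in every labeling.} If $\prec_H$ is not acyclic then, by Lemmas~\ref{lem:limitlocal} and~\ref{lem:synt-orders}, no consistent order labeling exists and the statement is vacuous, so I may assume $\prec_H$ acyclic and fix an infinite descending chain $\ell_{i+1}\prec_H\ell_i$ together with an arbitrary consistent order labeling $(<_\n)_\n$, which then satisfies conditions F4.1--F4.3 (Lemma~\ref{lem:synt-orders}). By F3 each $\ell_i$ has a unique node $u_i$ carrying its unmatched $\get{}$, and $\ell_{i+1}\prec_H\ell_i$ means $u_{i+1}$ is a strict ancestor of some node $g_i$ carrying a $\get{}$ of $\ell_i$; since $\ell_i$ is never free below $u_i$, $g_i$ cannot be a proper descendant of $u_i$, so $g_i$ is an ancestor of $u_i$ or incomparable to it. The $u_i$ cannot form an infinite ancestor chain (a node has finitely many ancestors), so it is not the case that $g_i$ is an ancestor of $u_i$ for all large $i$; hence infinitely many $i$ have $g_i$ incomparable to $u_i$. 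Along such a subsequence I would use the fact that scopes are subtrees (Remark~\ref{rem:scope}) together with the fact that $\ell_i$ is held forever along the leftmost path from $u_i$ to locate nodes $\n_1\prec\n_2\prec\cdots$ at which two consecutive locks of the chain are simultaneously in scope and the smaller one survives to the next node; at each $\n_j$ condition~F4.3 puts the smaller lock below the larger one and condition~F4.2 propagates the comparison downward, which is exactly an infinite descending chain of $(<_\n)_\n$.

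I expect the second implication to be the main obstacle. The difficulty is precisely that the $\get{}$ of $\ell_i$ witnessing $\ell_{i+1}\prec_H\ell_i$ may be a \emph{matched} $\get{}$ lying in a branch of the scope of $\ell_i$ that is incomparable to $u_i$, so the witnesses of successive steps of the chain need not be ancestor-related and the descending chain cannot simply be read off the nodes $u_i$; the argument must exploit the subtree shape of scopes and the ``held forever along the leftmost path'' property to thread a single downward ray through the relevant scopes. The first implication, by contrast, is routine once the global order of Lemma~\ref{lem:limit} is available.
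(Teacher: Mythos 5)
Your first implication is essentially the paper's own argument: take the well-founded total order $<$ on $H(\t)$ extending $\prec_H$ built in the proof of Lemma~\ref{lem:limit} (which only needs F3--F5, all available here) and restrict it to $H^s(\n)$ at each node. That part is fine; you could even shortcut the verification of condition~4 by invoking the left-to-right directions of Lemma~\ref{lem:limitlocal} and Lemma~\ref{lem:synt-orders} instead of re-deriving it, since your justification (``otherwise $\n\sat\avoids(y)$'') is not quite the right dichotomy --- the unmatched $\get{}$ node of $L(\n)(y)$ can be incomparable to $\n$ while some descendant of $\n$ still takes and releases that lock, and the case where the process at $\n$ already holds $L(\n)(y)$ needs a separate (nestedness-based) argument. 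These are repairable, and the paper's own Lemma~\ref{lem:synt-orders} is where that work lives.

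The second implication, however, has a genuine gap, and it is exactly where you suspect it is. Your stated plan is to ``locate nodes at which two consecutive locks of the chain are simultaneously in scope'' and then apply F4.3 at those nodes. But consecutive locks $\ell_{i+1}\prec_H\ell_i$ of the chain need not be simultaneously in scope \emph{anywhere}: the witnessing $\get{}$ of $\ell_i$ below the unmatched $\get{}$ of $\ell_{i+1}$ can lie outside the scope of $\ell_{i+1}$, and the scopes of $\ell_i$ and $\ell_{i+1}$ can be disjoint subtrees. Since $<_\n$ only orders variables of $\Var{\n}$, no single node can then witness the comparison between $\ell_i$ and $\ell_{i+1}$, so F4.3 is simply inapplicable to consecutive pairs. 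The paper's proof resolves this differently: it takes the roots $c_i$ of the scopes, proves a combinatorial claim that in every index window some $c_i$ dominates all the others, uses K\"onig's lemma to find one infinite branch carrying infinitely many $c_i$, extracts a subsequence $i_0<i_1<\cdots$ so that $c_{i_{j+1}}$ is an ancestor of all intermediate scope roots, and then proves by induction (its Claim~2) that at such a dominating node both $\ell_{i_j}$ and $\ell_{i_{j+1}}$ are in scope and ordered the right way --- the ordering being obtained by \emph{transitivity through the intermediate locks} at that node, not by a direct application of F4.3 to a consecutive pair. The resulting infinite descending chain in the labeling is built from the non-consecutive pairs $(\ell_{i_j},\ell_{i_{j+1}})$. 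Your observations that $g_i$ cannot be a proper descendant of $u_i$ and that infinitely many $g_i$ are incomparable to $u_i$ are correct but are not connected to any mechanism producing the chain; the subsequence extraction and the transitivity induction are the missing core of the proof.
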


The next proposition summarizes the development of this section stating that all the
relevant properties can be checked by a B\"uchi tree automaton. 

\begin{restatable}{proposition}{PropAutLim}
	\label{prop:automaton-limit}
	For a given "DLSS", there is a non-deterministic B\"uchi tree automaton $\wh\Bb$ accepting exactly the
	"limit configurations" of "process-fair" "runs" of DLSS. 
	The size of $\wh\Bb$ is linear in the size of the DLSS and exponential in the maximal
	arity of the DLSS.
\end{restatable}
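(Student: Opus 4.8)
The plan is to build $\wh\Bb$ as an intersection of several tree automata, each responsible for one of the conditions collected in Section~\ref{sec:limit}, and then argue that the intersection has the stated size and acceptance type. The guessed data attached to each node — the process/state/action triple $(p,s,a)$, the coloring $\Cc(\n)$ by the four predicates $\keeps,\evkeeps,\avoids,\evavoids$, the set $H^s(\n)\incl\Var{\n}$, and the local order $<_\n$ on a subset of $\Var{\n}$ — all live in a finite alphabet whose size is exponential only in the arity: there are at most $|\Proc|\cdot\max_p|S_p|\cdot|\S|$ choices for the triple, and $2^{O(\ar)}$, $\ar!\,$ choices for the coloring, $H^s$-set and order respectively, so the product alphabet is linear in $|\Ss|$ and exponential in the maximal arity. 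The automaton $\wh\Bb$ reads a tree over this alphabet and projects to the $a$-label; its job is to check that the guessed labels are internally consistent and that they witness ("F1-5").

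Concretely I would take $\wh\Bb = \Bb_1 \cap \Bb_{\mathrm{col}} \cap \Bb_{H^s} \cap \Bb_{\mathrm{ord}} \cap \Bb_{F2} \cap \Bb_{F5}$, where each conjunct is a tree automaton whose transition relation is a purely local constraint between a node's label and the labels of its (one or two) children — except for the two "eventuality" components which need a Büchi condition. In detail: $\Bb_1$ checks "F1" (\kl{process-consistent}; no acceptance condition, as noted after the definition). $\Bb_{\mathrm{col}}$ checks \kl{branch-consistent} (a local rule, cf.\ its definition) together with \kl{eventuality-consistent} and \kl{recurrence-consistent}; the latter two are "every \kl{ev-trace} is finite" and "every \kl{uncolored} variable's lock is taken and released infinitely often below", both of which are co-Büchi/Büchi-type statements on paths — "no infinite ev-trace" is checked by a Büchi condition rejecting branches along which an ev-trace persists, and "taken and released infinitely often" is a direct Büchi condition. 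By Lemmas~\ref{lem:sem-coloring-is-synt} and~\ref{lem:syntactically-correct}, $\Bb_1\cap\Bb_{\mathrm{col}}$ accepts exactly the configurations carrying a \kl{semantically correct} coloring. $\Bb_{H^s}$ checks that $H^s$ satisfies its three defining clauses, which are local; by Lemma~\ref{lem:Hs} this pins down $H^s(\n)=\set{x : L(\n)(x)\in H(\t)}$. $\Bb_{\mathrm{ord}}$ checks the four local clauses of \kl{consistent order labeling}; by Lemma~\ref{lem:synt-orders} this is equivalent to the conditions of Lemma~\ref{lem:limitlocal}, hence (given "F1-3") to "F4" of Lemma~\ref{lem:limit}. $\Bb_{F2}$ checks, at each leaf, the reformulated "F2" of Lemma~\ref{lem:chekcing-F2} (a local check on the leaf's state and $H^s$-set); note "F3" comes for free by Lemma~\ref{lem:chekcing-F3}. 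Finally $\Bb_{F5}$ must reject trees whose consistent order labeling admits an \kl{infinite descending chain}; by Lemma~\ref{lem:well-founded} rejecting such trees is equivalent (on \kl{process-consistent} trees with \kl{syntactically correct} coloring) to enforcing "F5". The descending chain is a path phenomenon of the shape "$y_i<_{\n_i}x_i$, $y_i\sim x_{i+1}$", so its negation is a Büchi condition along paths, tracking the variable currently being "descended from" and forbidding an infinite such descent.

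Once each conjunct is built, acceptance of $\wh\Bb$ is the conjunction of the acceptance conditions of the conjuncts. Each conjunct is either trivial or Büchi/co-Büchi, and a finite conjunction of Büchi and co-Büchi conditions over a product of state sets is a (generalized, hence ordinary) Büchi condition; this keeps $\wh\Bb$ a nondeterministic Büchi tree automaton. Its state space is the product of the small state spaces of the conjuncts, each of which is polynomial in $|\Ss|$ and in the size of the label alphabet, so $|\wh\Bb|$ is linear in $|\Ss|$ and exponential in the maximal arity, as claimed. Correctness then follows by chaining the lemmas: $\wh\Bb$ accepts $\t$ iff $\t$ is \kl{process-consistent} and carries a \kl{syntactically correct} coloring, a consistent $H^s$-labeling and a \kl{consistent order labeling} with no \kl{infinite descending chain}, iff (by Lemmas~\ref{lem:sem-coloring-is-synt}, \ref{lem:syntactically-correct}, \ref{lem:Hs}, \ref{lem:chekcing-F2}, \ref{lem:chekcing-F3}, \ref{lem:synt-orders}, \ref{lem:well-founded}) $\t$ satisfies "F1"--"F5" of Lemma~\ref{lem:limit}, iff $\t$ is the \kl{limit configuration} of a \kl{process-fair} \kl{run} of the DLSS.

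The step I expect to be the main obstacle is the careful design of the Büchi components for \kl{eventuality-consistent} / \kl{recurrence-consistent} and for the absence of an \kl{infinite descending chain}: these are the only genuinely non-local requirements, and one must check that "every ev-trace is finite", "every uncolored variable is released infinitely often below", and "no infinite descending chain" can each be captured by a single nondeterministic Büchi condition on a tree automaton (rather than something with higher parity index), and that they combine without blow-up beyond the product of a bounded number of small automata. The ev-trace and descending-chain conditions in particular require the automaton to nondeterministically "follow" a single threatened infinite path and one variable along it, which is exactly what a Büchi tree automaton can do; verifying this matches the informal quantifier structure highlighted after the definitions of $\evkeeps$ and $\evavoids$ is where the proof needs the most care. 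All remaining conjuncts are straightforward local consistency checks requiring no acceptance condition.
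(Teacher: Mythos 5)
Your decomposition is exactly the paper's: a product of six automata ($\Bb_1$ for F1, one for the coloring, one for $H^s$, one for the leaf condition of Lemma~\ref{lem:chekcing-F2}, one for the order labeling, one for well-foundedness), with all but two components being purely local, the same size accounting, and the same chain of lemmas for correctness in both directions. The one place where your argument goes wrong is the treatment of the genuinely non-local conditions, and it is precisely the step you flag as delicate. You claim that \emph{absence} of an infinite descending chain (and likewise finiteness of every ev-trace) is captured by a nondeterministic B\"uchi automaton that ``nondeterministically follows a single threatened infinite path and one variable along it, forbidding an infinite descent.'' Following a single guessed path and a single variable-thread is how one checks the \emph{existence} of a bad witness; it cannot certify its absence, since the automaton would have no way to know that the one thread it chose to follow is the dangerous one. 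The conditions you need are universal over all branches and all variable-threads, and a nondeterministic tree automaton must refute every threat simultaneously.

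The repair --- which is what the paper does for $\Bb_5$ --- is to first build the small nondeterministic B\"uchi automaton for the \emph{complement} (guess a branch and a descending sequence of variables; this needs only a pointer to one of the at most $\ar$ variables per node, hence $O(\ar)$ states) and then complement it. Because the witness automaton is tiny, complementation costs only an exponential in the arity, which is exactly what keeps the final bound at ``linear in the DLSS, exponential in the maximal arity.'' Equivalently, one can run a breakpoint (Miyano--Hayashi style) construction that tracks the set of currently active descent threads on each branch --- at most $\ar$ of them --- and uses the B\"uchi condition to enforce that every active thread eventually dies; the same device handles eventuality-consistency of the coloring. Without this complementation/breakpoint step your $\Bb_{F5}$ and the eventuality part of $\Bb_{\mathrm{col}}$ do not recognize the intended languages, so you should make it explicit; everything else in your proposal matches the paper's proof.
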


We will show that the previous proposition yields an \EXPTIME\ algorithm. We match it with an \EXPTIME\ lower bound to obtain completeness.

\begin{restatable}{proposition}{PropExpHard}
	\label{prop:EXPTIME-hard}
	The "DLSS verification problem" for "nested" "DLSS" and
	Büchi objective is \EXPTIME-hard.
	The result holds even if the Büchi objective refers to a
	single process.
\end{restatable}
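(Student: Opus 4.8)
The plan is to prove the lower bound by a polynomial-time reduction from a standard \EXPTIME-complete problem; a convenient choice is the acceptance problem for alternating Turing machines running in polynomial space (equivalently, acceptance of an alternating linear bounded automaton), which is \EXPTIME-complete by the Chandra--Kozen--Stockmeyer characterisation $\textsc{APSpace}=\EXPTIME$. Given such a machine $M$ and input $w$ with $n=|w|$, a configuration of $M$ is described by $O(n)$ bits (tape contents, head position, control state). I would build a "nested" "DLSS" $\Ss_{M,w}$ of size polynomial in $|M|+n$ and of arity $\Theta(n)$ in which the spawn tree of a "run" mimics the computation tree of $M$: every node is an instance of a designated \emph{configuration process} $p_{\mathrm{cfg}}$, and the configuration of $M$ currently encoded at a node is stored in the $\Theta(n)$ locks that the instance owns --- lock $\ell_i$ held iff the $i$-th bit is $1$ --- which is compatible with "nested" usage because an instance only ever moves to a successor configuration by releasing its whole block of held locks and re-acquiring the new block in stack order. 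An existential configuration is simulated by $p_{\mathrm{cfg}}$ nondeterministically picking one successor and performing the corresponding $\spawn{}$; a universal configuration is simulated by spawning \emph{both} successor children; accepting and rejecting leaf configurations enter distinguished states.

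The correctness of the simulation rests on two points. First, a child must really carry a legal successor configuration of its parent; since the $i$-th bit of the successor depends only on bits $i-1,i,i+1$ of the parent, at each spawn I would additionally create $O(n)$ constant-size \emph{checker} processes, one per cell, each receiving (via the $\sigma$ of the spawn) the few locks it must compare, so that a checker detecting an illegal update drives the system into a bad state that is visible in the action labelling; and process-fairness is exactly what forces every checker and every universal successor eventually to be scheduled, so that a universal branch cannot be silently dropped and an inconsistency cannot be ignored. With this in place one shows: $M$ accepts $w$ iff there is a "process-fair" "run" of $\Ss_{M,w}$ that develops a full sub-tree of the computation tree of $M$ all of whose leaves are accepting and along which no checker fails --- a condition that only constrains the runs of instances of $p_{\mathrm{cfg}}$ (and of the checkers, which can be folded into $p_{\mathrm{cfg}}$), hence is expressible by a B\"uchi "regular objective" referring to a single process. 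Adding a trivial self-loop at accepting leaves turns ``all developed branches accepting'' into ``some instance of the distinguished process has an infinite run'', which is the phrasing used in the introduction.

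The main obstacle is engineering this gadget under "nested" locking: one may not use the non-nested interleaved-acquisition trick of \cite{KahIvaGup05} that underlies Theorem~\ref{th:undec}, nor join operations, so both the propagation of the polynomially many configuration bits across a spawn and the enforcement of universal branching must be realised purely with stack-disciplined $\get{x}$/$\rel{x}$ operations plus process-fairness; getting the checker hand-off and the fairness-as-universal-quantifier mechanism to be faithful in both directions is where the work lies. Once this is done, composing the reduction with Proposition~\ref{prop:automaton-limit} shows the bound is tight: there the recognising B\"uchi tree automaton is linear in $|\Ss|$ and exponential only in the arity, which is $\Theta(n)$ here, so no algorithm can beat \EXPTIME; and since the construction uses a fixed arity pattern and a fixed-rank (B\"uchi) objective, it also pins down the source of hardness as announced, leaving the \PTIME\ regime of Theorem~\ref{thm:main-finite} untouched.
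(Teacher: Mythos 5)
Your high-level plan --- reduce from an \EXPTIME-complete alternation problem, represent each node of the computation tree by one process instance, and encode that node's information in the locks the instance holds --- is the right shape, and it is morally the same idea as the paper's proof, which reduces instead from intersection non-emptiness of $k$ deterministic top-down tree automata and encodes the $k$ states of a node in unary via locks. However, the proposal has a genuine gap exactly at the point you defer as ``where the work lies'', plus one concretely incorrect step. On the bit transfer: with a single lock per bit (held iff the bit is $1$), a child has no nested-locking way to certify that a bit is $0$, and a released-and-reacquired lock cannot be read reliably at all. The paper's gadget uses a complementary pair of locks per unit of information, of which the writer acquires exactly one and \emph{never releases it}; the reader \emph{guesses} the value and spawns a constant-size watchdog (process $\ch$) that attempts a $\get{}$ on the guessed lock --- by process-fairness the watchdog eventually succeeds precisely when the guess is wrong, whereupon it permanently seizes one globally shared lock $\ell_t$. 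You need this guess-and-verify mechanism (or an equivalent); ``checker processes comparing a few locks'' does not by itself describe how a nested process reads a bit.

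The incorrect step is the final conversion of the acceptance condition. ``All developed branches reach an accepting leaf and no checker fails'' is a \emph{universal} condition over all instances of $p_{\mathrm{cfg}}$ and over all checkers; adding self-loops at accepting leaves and asking that \emph{some} instance has an infinite run does not certify it, since one good branch says nothing about the others. The paper resolves this by funnelling every possible violation (wrong guess, rejecting leaf, attempt to grow an infinite tree, via the $z,z_0,z_1$ gadget) into a permanent acquisition of the single lock $\ell_t$, and taking as objective that the unique monitor process $q$, which forever alternates $\get{t}$ and $\rel{t}$, has an infinite run --- a genuinely single-instance B\"uchi property equivalent to global correctness. You would need the same funnelling, or else a proof that your universal property is recognized by a nondeterministic B\"uchi \emph{tree} automaton and still counts as referring to a single process. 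Two smaller issues: the computation tree of an alternating machine need not be finite, so you must add a step counter or a termination gadget; and your closing remark that the construction uses a fixed arity pattern contradicts the arity $\Theta(n)$ you rely on --- unbounded arity is unavoidable here, consistently with the \PTIME\ claim of Theorem~\ref{thm:main-finite} for fixed arity.
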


The hardness proof involves a reduction from the problem of
determining whether the intersection of the languages of $k$
deterministic tree automata over binary trees is empty. To achieve
this, we create a "DLSS" that simulates all the tree automata
concurrently.
Each node of the tree in the intersection is simulated by a process,
which encodes a state for each automaton through the locks it holds.
So each process creates two children with whom it shares locks. The
children are able to access the states of the parent by the following technique: Suppose processes $p$ and $q$ share locks $0$ and $1$, and $p$ acquires one lock and retains it indefinitely. In this scenario, $q$ can guess the lock chosen by $p$ and try to acquire the other lock. If $q$ guesses incorrectly, the system deadlocks. However, if the guess is correct, the execution continues, and $q$ knows about the lock held by $p$.

Now we have all ingredients for the proof of Theorem~\ref{thm:main-finite}: 

\begin{proof}[Proof of Theorem~\ref{thm:main-finite}]
  The lower bound follows from Proposition~\ref{prop:EXPTIME-hard}.

  For the upper bound we use the Büchi tree automaton  $\hat{\Bb}$
  recognizing "limit configurations" of the "DLSS"
  (Proposition~\ref{prop:automaton-limit}).

  We build the
  product of $\hat{\Bb}$ with the "regular objective" automaton $\Aa$, which
  is a parity tree automaton.
  From $\hat{\Bb} \times \Aa$ we can obtain with a bit more work an
  equivalent parity tree automaton $\Cc$ with the same number of
  priorities, plus one.
  For this we modify the rank function in order to only store in the
  state the maximal priority seen  between  
  two consecutive occurrences of Büchi accepting states, and make the
  maximal priority visible at the next Büchi state. 
  When the state of the $\hat{\Bb}$ component is not a Büchi state, the priority is odd and lower than all the ones of $\Aa$.

  By Proposition~\ref{prop:automaton-limit}, $\Cc$ is non-empty if and
 only if there exists a "limit 
  configuration" of the system that satisfies the "regular objective"
  $\Aa$.
Moreover, we know that $\hat{\Bb}$ has
  size linear in the size of the "DLSS" and exponential only in the
  maximal arity of processes.
  So $\Cc$ has size that is exponential w.r.t.~the "DLSS" and the
  objective, and polynomial size if the maximal arity is fixed.

  Finally, non-emptiness of $\Cc$ amounts to solve a parity game of
  the same size as $\Cc$: player Automaton chooses transitions of
  $\Cc$, and player Pathfinder chooses the direction (left/right
  child).
  To sum up, we obtain a parity game of exponential size, so solving
  the game takes exponential time since the number of priorities
  is polynomial.
  If both the number of priorities and the maximal arity are fixed,
  the game can be solved in polynomial time. 
\end{proof}


\section{Pushdown systems with locks}\label{sec:pushdown}

Till now every process has been a finite state system. 
Here we consider the case when processes can be pushdown automata.
The definition of a ""pushdown DLSS"" is the same as before but now
each automaton $\Aa_p$ is a deterministic pushdown automaton.

We will reduce our verification problem to the emptiness test of a
nondeterministic pushdown automata on infinite trees. 
These automata will have parity acceptance conditions.
While in general testing emptiness of such automata is \EXPTIME-complete, we
will notice that the automata we construct have a special form allowing to
test
emptiness in \PTIME\ for a fixed number of ranks in the parity condition.

\AP We start by defining ""pushdown tree automata"".
We work with a ranked alphabet $\S=\S_0\cup\S_1\cup\S_2$, so a letter determines
whether a node has zero, one or two children.
Our automaton will be quite standard but for an additional stack instruction.
Apart standard $\ppop$ and $\ppush(a)$, we have a $\reset$
instruction that empties the stack. 
A pushdown tree automaton is a tuple $(Q,\S,\Gamma,q^0,\bot,\d,\W)$, where $Q$ is a finite
set of states, $\S$ an input alphabet, $\Gamma$ a stack alphabet, $q^0\in Q$ an
initial state, $\bot\in\Gamma$ a bottom stack symbol, and $\W:Q\to\set{1,\dots,d}$
a parity condition. Finally, $\d$ is a partial transition function taking as
the arguments the current state $q$, the current input letter $a$, and the current stack
symbol $\g$. The form of transitions in $\d$ depends on the rank of the letter $a$:
\begin{itemize}
	\item For $a\in \S_0$, we have $\d(q,a,\g)=\top$ for a special symbol $\top$. 
	This means that the automaton accepts in a leaf of the tree if $\d$
	is defined.
	\item For $a\in \S_1$, we have $\d(q,a,\g)=(q',\instr)$ where $\instr$ is one of
	the stack instructions.
	\item For $a\in \S_2$, we have $\d(q,a,\g)=((q_l,\instr_l),(q_r,\instr_r))$, so
	now we have two states, going to the left and right, respectively,
	and two separate stack instructions. 
\end{itemize}

A run of such an automaton on a $\S$-labeled tree is an assignment of
configurations to nodes of the tree; each configuration has the form $(q,w)$
where $q\in Q$ is a state and $w\in\Gamma^+$ is a 
sequence of stack symbols representing the stack (top symbol being the leftmost).
The root is labeled with $(q^0,\bot)$. 
The labelling of children must depend on the labeling of the parent according to
the transition function $\d$. 
In particular, if a leaf of the tree is labeled $a$ and has assigned a
configuration $(q,w)$ then $\d(q,a,\g)$ must be defined, where $\g$ is the
leftmost symbol of $w$.
A run is accepting if for every infinite path the sequence of assigned states
satisfies the max parity condition given by $\W$: the maximum of ranks of states
seen on the path must be even.

\AP We say that a "pushdown tree automaton" is ""right-resetting"" if for every
transition $\d(q,a,\g)=((q_l,\instr_l),(q_r,\instr_r))$ 
we have that $\instr_r$ is $\reset$. 

\begin{restatable}{proposition}{PropRightResetting}\label{prop:emptiness-right-reset}
	For a fixed $d$, the emptiness problem for "right-resetting" pushdown tree
	automata with a parity condition over ranks $\set{1,\dots,d}$ can be solved in
	\PTIME.
\end{restatable}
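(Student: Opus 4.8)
The plan is to reduce the emptiness problem for a right-resetting pushdown tree automaton to solving a parity game of polynomial size, mimicking the standard reduction from nondeterministic tree automaton emptiness to parity games, but handling the pushdown store carefully. The key observation is that because the automaton is right-resetting, the stack is always empty (equal to $\bot$) at the root of every right subtree; hence along any path of the input tree, the stack behaves like the stack of a \emph{pushdown automaton on words} that is restarted from $\bot$ each time the path takes a right turn, and the only unbounded information flowing down a path is along left edges. Concretely, I would first argue that a right-resetting pushdown tree automaton $\Aa$ is nonempty iff there is an accepting ``strategy tree'' in which at each node we have a configuration $(q,w)$; and that such a tree exists iff, for every node that is a right child, the automaton has an accepting run on \emph{some} subtree starting from $(q,\bot)$ — i.e., the acceptance of right subtrees only depends on the state $q_r$, not on any stack content.

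The main step is to define, by a least/greatest fixpoint computation over ranks (or equivalently by solving a pushdown parity \emph{game}), the set $W\subseteq Q$ of states $q$ from which $\Aa$ accepts some tree starting in configuration $(q,\bot)$. I would compute $W$ via the usual translation of pushdown acceptance to a game played on the finite ``conceptual'' graph of the pushdown system: Automaton moves choose transitions (and for $\S_2$-letters must certify, for the right child, that the target state lies in $W$ — this is where the right-resetting property makes the recursion well-founded, since the right subtree problem is again ``start from $\bot$''), and Pathfinder chooses a direction at branching nodes. For the \emph{left}-going part and the $\S_1$-part, the stack genuinely grows, so one is left with a one-player (or here, because Pathfinder can still force going left forever, a two-player) pushdown parity game whose winning region is computable. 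Since $d$ is fixed, a pushdown parity game with $d$ priorities can be solved in polynomial time — this is the classical result (e.g.\ via the reduction of Walukiewicz, or via summaries/Cachat's method) that pushdown parity games with a bounded number of priorities are in $\PTIME$; I would invoke that as the black box. The emptiness answer is then: $\Aa$ is nonempty iff the initial state $q^0$ is in $W$, i.e.\ iff Automaton wins from $(q^0,\bot)$.

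Putting it together: build the pushdown parity game $G_\Aa$ in which positions are pairs (state, direction-obligation) together with the pushdown store, Automaton picks transitions, Pathfinder picks the branch to descend; the right-child obligations are discharged by a membership test in the (simultaneously computed, monotone) set $W$, which is sound because right resets make the right-subtree subproblem stack-free. Solve $G_\Aa$; this is a pushdown parity game with priorities in $\set{1,\dots,d}$ and of size polynomial in $|\Aa|$, hence solvable in $\PTIME$ for fixed $d$. Return ``nonempty'' iff Automaton wins from the initial configuration.

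The main obstacle is making the ``$W$ is computed simultaneously with the game'' argument rigorous without circularity: one must present it either as a single pushdown parity game in which the reset instruction is simulated directly (so that a right turn literally empties the stack and continues play, and no separate set $W$ is needed), or as a fixpoint over the finite state set $Q$ where at stage $k$ one knows which states are ``accepting from $\bot$ using at most $k$ nested right-subtree certifications'' — and then argue this fixpoint stabilizes and coincides with true nonemptiness. I expect the cleanest route is the former: keep $\reset$ as a primitive stack operation in the game arena and simply quote the $\PTIME$ (for fixed priority count) solvability of pushdown parity games, observing that adding a $\reset$ operation does not change this complexity, since $\reset$ is easily simulated (e.g.\ by marking the bottom and popping, or by treating it as a special ``return to $\bot$'' edge) within the standard pushdown-game machinery. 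The only care needed is that the parity condition is the max-parity condition on the path, which matches the standard setting, so the quoted bound applies verbatim.
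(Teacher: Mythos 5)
Your overall decomposition is on the right track --- the observation that right subtrees restart from $(q_r,\bot)$ and hence depend only on the state is exactly the crux, and the paper also reduces left spines to pushdown \emph{word} automata --- but the step you lean on to close the argument is false. Pushdown parity games are \EXPTIME-hard already for \emph{reachability} (two priorities); fixing the number of priorities does not bring them into \PTIME\ (Walukiewicz's exponential blow-up comes from guessing sets of return states, not from the priorities). So your ``cleanest route'' --- keep $\reset$ as a primitive stack operation and quote \PTIME\ solvability of pushdown parity games with fixed $d$ as a black box --- invokes a non-theorem, and if it were a theorem the proposition would be trivial and the right-resetting hypothesis irrelevant. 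The whole point is that right-resetting lets you \emph{avoid} solving a genuine two-player pushdown game: once every right turn resets, Pathfinder's only stack-relevant choice is ``continue left,'' so the problem degenerates into a one-player question per left spine (does a nondeterministic pushdown word automaton have an accepting run? --- this is in \PTIME) plus a certification of each right-child state.

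Your fallback --- a fixpoint over $Q$ of states ``accepting from $\bot$ using at most $k$ nested right-subtree certifications'' --- is closer to the correct argument but is mis-structured: a single least fixpoint cannot handle paths that turn right infinitely often, which may still be accepting when the maximal recurring priority is even. What is needed, and what the paper does, is a priority-indexed family of certification sets $G_1,\dots,G_d$ (the right-child state must lie in $G_e$ where $e$ combines the maximal priority seen on the spine so far with the priority of $q_r$), computed as an \emph{alternating} nested fixpoint $\LFP X_3.\ \GFP X_2.\ \LFP X_1.\ P(X_1,X_2,X_3)$ (for $d=3$), where each evaluation of $P$ is one pushdown word-automaton emptiness check on the spine automaton $\Aal(X_1,X_2,X_3)$ with states $Q\times\set{1,\dots,d}$ tracking the running maximum priority. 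Correctness then goes through signatures/approximants of the fixpoint, which is precisely the piece your sketch leaves open. Without the priority tracking and the $\LFP/\GFP$ alternation, the construction is either incorrect (pure $\LFP$) or rests on a false complexity bound (black-box pushdown games).
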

\begin{proof}
	We consider the representative case of $d=3$.
	Suppose we are given a "right-resetting" "pushdown tree automaton"
	$\Aa=(Q,\S,\Gamma,q^0,\bot,\d,\W)$.

	The first step is to construct a pushdown word automaton $\Aal(G_1,G_2,G_3)$
	depending on three sets of states $G_1,G_2,G_3\incl Q$.
	The idea is that $\Aal$ simulates the run of $\Aa$ on the leftmost branch of
	a tree.
	When $\Aa$ has a transition going both to the left and to the right then $\Aal$ goes
 	to the left and checks if the state going to the right is in an appropriate $G_i$.
	This means that $\Aal$ works over the alphabet $\S^l$ that is the same as $\S$
	but all letters from $\S_2$ have rank $1$ instead of $2$.
	The states of $\Aal(G_1,G_2,G_3)$ are $Q\times\set{1,2,3}$ with the second component
	storing the maximal rank of a state seen so far on the run. 
	The transitions of $\Aal(G_1,G_2,G_3)$ are defined according to the above description.
	We make precise only the case for a transition of $\Aa$ of the form
	$\d(q,a,\g)=((q_l,\instr_l),(q_r,\instr_r))$.
	In this case, $\Aal$ has a transition
	$\d^l((q,i),a,\g)=((q_l,\max(i,\W(q_l))),\instr_l)$ if $q_r\in
	G_{\max(i,\W(q_r))}$. 	
	Observe that $\instr_r$ is necessarily $\reset$ as $\Aa$ 	is right-resetting.  

	The next step is to observe that for given sets $G_1,G_2,G_3$ we can calculate in
	\PTIME\ the set of states from which $\Aal(G_1,G_2,G_3)$ has an accepting run. 

	The last step is to compute the following fixpoint expression in the lattice
	of subsets of $Q$:
	\begin{align*}
		W =\, &\LFP X_3.\ \GFP X_2.\ \LFP X_1.\ P(X_1,X_2,X_3)\qquad\text{where}\\
		P(X_1,X_2,X_3)=\, &\set{q : \Aal(X_1,X_2,X_3)\text{ has an accepting run from $q$}}\ .
	\end{align*}
	Observe that $P:\Pp(Q)^3\to\Pp(Q)$ is a monotone function over the lattice of
	subsets of $Q$.
	Computing $W$  requires at most $|Q|^3$ computations of $P$ for different
	triples of sets of states. 

	We claim that $\Aa$ has an accepting run from a state $q$, if and only if,
	$q\in W$. The argument is presented in the appendix.
\end{proof}

\begin{proof}[Proof of Theorem~\ref{thm:main-pushdown}]
  The lower bound follows already from Theorem~\ref{thm:main-finite}.

  For the upper bound we reuse the Büchi tree automaton  $\hat{\Bb}$
  from Proposition~\ref{prop:automaton-limit}.
  This time $\hat{\Bb}$ is a "pushdown tree automaton", however it is
  "right-resetting" because processes are spawned with empty stack.
  We follow the lines of the proof of Theorem~\ref{thm:main-finite},
  building  the product of $\hat{\Bb}$ with the "regular objective"
  automaton $\Aa$, and constructing an equivalent parity, 
  "right-resetting" pushdown tree automaton $\Cc$.
  Proposition~\ref{prop:emptiness-right-reset} concludes the proof. 
\end{proof}



\section{Conclusions}

We have considered verification of parametric lock sharing systems where
processes can spawn other processes and create new locks.
Representing configurations as trees, and the notion of the limit
configuration, are instrumental in our approach.
We believe that we have made stimulating observations about this
representation. 
It is very easy to express fairness as a property of a limit configuration.
Many interesting properties, including liveness, can be formulated very naturally as properties of limit trees (cf.\
page~\pageref{page:properties}).
Moreover, there are structural conditions characterizing when a tree is a limit
configuration of a run of a given system (Lemma~\ref{lem:limitlocal}).

We expect that the parameters in Theorem~\ref{thm:main-pushdown} will be usually
quite small. 
As the dining philosophers example suggests, for many systems the maximal arity
should be quite small (cf.\ Figure~\ref{fig:philosophers}). 
Indeed, the maximal arity of the system corresponds to the tree width of the graph
where process instances are nodes and edges represent sharing a lock. 
The maximal priority will be often $3$.
In our opinion, most interesting properties would have the form ``there is a left
path such that'' or ``all left paths are such that'', and these properties need
only automata with three priorities. 
So in this case our verification algorithm is in \PTIME.


Our handling of pushdown processes is different from the literature. 
Most of our development is done for finite state processes, while the transition
to pushdown process is handled through "right-resetting" concept.
Proposition~\ref{prop:emptiness-right-reset} implies that in our context
pushdown processes are essentially as easy to handle as finite processes.

As further work it would be interesting to see if it is possible to extend our
approach to treat join operation~\cite{GawlitzaLMSW11}.
An important question is how to extend the model with some shared state and
still retain decidability for the pushdown case.


\bibliography{distr}

 \newpage
 \appendix
 \appendixtrue

 \section{Fairness}


\PropFair*

\begin{proof}
  Consider the left-to-right implication. 
  Suppose towards a contradiction that the run is not "process-fair".
  So there is a transition $\t\act{\n,a}\t'$ for some leaf $\n$. 
  Let $p$ be the process moving in $\n$, and 
  let $\t_i$ be the first configuration where $\n$ appears in $\tree$.
  We show that $p$ moves after this configuration, contradicting
  the fact that $\n$ is a leaf. 

  If $\op(a)$ is not a $\get{}$ operation then $(\n,a)$ is enabled in
  every configuration $\t_j$ for $j>i$.
  By strong fairness $p$ must move after $i$.

  A more interesting case is when $\op(a)$ is $\get{x}$ for some $x$.
  Let $\ell=L(\n)(x)$ be the lock taken by the transition. 
  As $(\n,a)$ is enabled in $\t$, we have that $\ell\not\in H(\t)$. 
  We show that this implies that process $p$ is enabled infinitely often after
  position $i$. 
  By "soundness", as $\n,a$ is enabled, $p$ cannot hold $\ell$: so $\ell\not\in H(\n)$. 
  If $\ell\in H(\t_i)$ and $\ell$ is never released afterwards then there is a
  node $\n'$ in $\t_i$ (and thus in $\t$) such that for every left descendant of $\n'$ we have
  $\ell\in H(\n')$.
  But this is impossible since we have assumed $\ell\not\in H(\t)$.
  Hence, either $\ell$ is free in $\t_i$ or it is free in some later configuration
  $\t_{i_1}$ such that $\t_{i_1}\act{\n',b} \t_{i_1+1}$ and $\op(b)=\get{y}$
  with $L(\n')(y)=\ell$. 
  So, $(\n,a)$ is enabled in $\t_{i_1}$. 
  If $\ell$ is never taken after $i_1$ then $(\n,a)$ is enabled always after
  $i_1$, and we get a move of $p$ by strong fairness as before.
  If $\ell$ is taken after $i_1$ then by the same argument as above there must
  be also a position $i_2$ when $\ell$ is released. 
  So, $(\n,a)$ is enabled in $\t_{i_2}$. 
  This argument shows that $(\n,a)$ must be enabled infinitely often after $i$,
  so by strong fairness there must be a move by $p$ after $i$.

  Consider now the right-to-left implication.
  Suppose that $\t$ is "process-fair", and the process $p$ is enabled infinitely
  often after position $i$.
  By contradiction, assume that $p$ does not move after position $i$
  and let $\n$ be the last node of $p$'s local run.
  If the action $a$ of $p$ that is enabled infinitely often is not a
  $\get{}$ then  $a$ is enabled in every $\tree_j$ with $j
  \ge i$, and $\tree \act{\n,a} \tree'$, contradicting process-fairness.
  Else, $op(a)$ is $\get{x}$ with $L(\n)(x)=\ell$. Since $a$ is
  enabled infinitely often, $\ell \notin H(\tree)$. Again we have
  $\tree \act{\n,a} \tree'$, contradicting process-fairness.
\end{proof}

 \section{Characterizing limit configurations}

\LemLimitLocal*

\begin{proof}
	For the left-to-right direction we fix a strict total order $<$ on
	$H(\tree)$ that is compatible with $\prec_H$ (for instance the
	strict order $<$ defined in the proof of Lemma~\ref{lem:limit}).
	Then we order the variables $x \in \Var{\nu}$ with
	$\vallabel(\n)(x) \in H(\tree)$ according to $<$.
	The three conditions of the lemma then follow directly.
	
	For the converse we define $\prec$ on $H(\tree)$ by
	$\ell \prec \ell'$ if for some node $\n$ with variables $x\not=
	x'$ such that $\vallabel(\n)(x)=\ell$ and
	$\vallabel(\n)(x')=\ell'$ we have $x <_\n x'$.

	We start by showing that $\prec$ is acyclic.
	Assume by contradiction that $\ell_0\prec \ell_1 \cdots \prec
	\ell_k \prec \ell_0$ is a cycle of  minimal length, so
	the locks $\ell_i \in H(\tree)$ are all distinct.
	We use indices modulo $k+1$, so $k+1\equiv 0$.
	Note that $k>1$ because of condition "F4.2".
	
	By assumption, the "scopes" of $\ell_i$ and $\ell_{i+1}$
	intersect, for every $i$.
	Since "scopes" are subtrees of $\tree$ (Remark~\ref{rem:scope})
	this means that two scopes that intersect have roots that are
	ordered by the ancestor relation in $\tree$.
	
	Assume first that $k>2$.
	Let $i$ be such that the depth of the root of the  "scope" of
	$\ell_i$ is maximal.
	So the roots of the "scopes" of $\ell_{i-1}$ and  $\ell_{i+1}$ are
	ancestors of the root $\n$ of the "scope" of $\ell_i$.
	In the "scope" of $\ell_i$ there exist nodes that belong to the
	"scope" of $\ell_{i-1}$ and of $\ell_{i+1}$, respectively.
	This means that $\n$ is in the scope of $\ell_{i-1},\ell_i$ and
	$\ell_{i+1}$.
	So the "scopes" of $\ell_{i-1}$ and $\ell_{i+1}$ intersect, and we
	have either $\ell_{i-1} <_\n \ell_{i+1}$ or $\ell_{i+1} <_\n
	\ell_{i-1}$.
	Thus we get from the definition of $\prec$ either
	$\ell_{i-1} \prec \ell_{i+1}$ or $\ell_{i+1} \prec    \ell_{i-1}$. 
	In both cases the cycle $\ell_0\prec \ell_1 \cdots \prec 
	\ell_k \prec \ell_0$ is not minimal, a contradiction.
	
	It remains to consider the case $\ell_0 \prec \ell_1 \prec \ell_2
	\prec \ell_0$.
	With a similar argument as before there exists a node $\n$ which is in
	the "scope" of all of $\ell_0,\ell_1,\ell_2$, so this node gives a
	total order on these locks and there cannot exist a cycle. 
	
	We now show that $\bef$ is acyclic as well.
	
	Like before, suppose there exists a cycle of distinct nodes $\ell_0 \bef \ell_1 \bef \cdots \bef \ell_k \bef \ell_{k+1} = \ell_0$ with $k > 0$. We consider such a cycle of minimal size. Hence every $\ell_i$ is comparable with $\ell_{i-1},  \ell_{i+1}$ and incomparable with all the other $\ell_j$ (as otherwise we would obtain a shorter cycle).
	
	Given $\ell, \ell' \in H(\t)$ such that $\ell \bef \ell'$, let
        $\nu$ be the node with an "unmatched" $\get{}$  of $\ell$. By
        condition~$F3$, this node is unique. 
	By the definition $\prec_H$ this node has some descendant
        $\nu'$ with an operation on $\ell'$. There are two
        possibilities, one is that the scopes of $\ell,\ell'$ intersect, in which case by condition~"F4.3" we have $\ell \prec \ell'$.
	The other possibility is that the two subtrees do not intersect, in which case the root of $\subtree(\ell')$ is strictly below the "unmatched" get of $\ell$. 
	
	As $\prec$ is acyclic, there exists some  $i$ such that the
        scopes of $\ell_{i-1}$ and $\ell_{i}$ are disjoint, hence all
        nodes of the scope of $\ell_{i}$ are below the "unmatched" $\get{}$
        of $\ell_{i-1}$.
        In particular the "unmatched" $\get{}$ of $\ell_{i-1}$ is an ancestor of the "unmatched" $\get{}$ of $\ell_{i}$. As a result, by the definition of $\bef$, $\ell_{i-1} \bef \ell_{i+1}$.
	
	If $k \geq 2$ then the above argument shows that the cycle was
        not minimal, 
        yielding a contradiction.
	
	If $k = 1$ then we have a contradiction as well, as 
        either the scopes intersect, so we cannot have both $\ell_0
        \prec \ell_1$  and $\ell_1 \prec \ell_0$.
        Or they do not intersect, but then there is a node $\n$ in the
        intersection with either $\ell_0 <_\n \ell_1$ or $\ell_1 <_\n
        \ell_0$, but not both.
	
	As a result, the relation $\bef$ is acyclic.
\end{proof}

 \section{Recognizing limit configurations}

\LemSemSynt*

\begin{proof}
	Suppose $\Cc$ is a correct coloring of $\t$.
	Clearly $\t$ is "process-consistent".
	Branch consistency follows from Lemma~\ref{lem:limit}
        (condition "F3").
	Indeed, all the clauses for $\keeps(x)$ and $\evkeeps(x)$ hold because of the
	third condition of this lemma.
	The clauses for $\avoids(x)$ and $\evavoids(x)$ follow directly from the
	semantics. 
	Directly from definition of the correct coloring it follows that it is also 
	"eventuality-consistent". 
	It is slightly more difficult to verify that it is "recurrence-consistent". 

	To verify recurrence consistency of $\Cc$ consider an arbitrary node $\n$ of $\t$
	and an "uncolored" variable $x\in\Var{\n}$.
	We find an infinite sequence:
	\begin{equation*}
		(\n,x)=(\n_0,x_0),(\n_1,x_1),\dots
	\end{equation*}
	such that 
	\begin{itemize}
		\item $x_i \in\Var{\n_i}$ and $x_i$ is "uncolored" in
                  $\n_i$,
                  \item $\n_0,\n_1,\dots$ is a path, 
		\item $x_i \sim x_{i+1}$.
	\end{itemize}
	Let us see why it is possible. 
	Since every leaf satisfies either $\n\sat\keeps(x)$ or $\n\sat\avoids(x)$, node
	$\n$ is not a leaf.
	If $\n0$ satisfies $\keeps(x)$ or $\evkeeps(x)$ then so does $\n$.
	If $\n0$ satisfies $\avoids(x)$ or $\evavoids(x)$ and is the unique successor
	then so does $\n$.
	Hence, if $\n0$ is the unique successor of $\n$ then $x$ cannot be colored in
	$\n0$.
	If $\n1$ exists, but there is no $y$ with $x=\s(\n1)(y)$ then the same
	verification shows that $x$ cannot be colored in $\n0$.
	Finally, if $x=\s(\n1)(y)$ and $y$ is colored in $\n1$ then $x$ must be colored
	too.
	Hence, $y$ is not colored in $\n1$ in this last case.
	This shows how to find $(\n_1,x_1)$.
	Repeating this argument we obtain the desired sequence.

	To terminate we show why the existence of the above sequence implies the recurrence condition.
	First note that $x_i \sim x_j$ 
        for all $i,j \ge 0$.
        Let $\ell=L(\n)(x)$.
	We observe that since $\n_i$ does not satisfy  $\avoids(x_i)$ then there must be
	an operation on $\ell$ below $\n_i$, and since it does not
	satisfy $\evkeeps(x)$ it must be a release. 
	So we have found an infinite path such that in the subtree of every node of
	this path there is a release operation. 
	This means that there are infinitely many get and release
	operations on $\ell$ in the tree below $\n$
\end{proof}
%

\LemSyntSem*
      
\begin{proof}
	Process consistency guarantees that locally labels follow the transition
	relations.
	Branch consistency on $\keeps(x)$ and $\avoids(x)$ labels guarantees that
	if $\n$ is labeled by one of these predicates then the predicate holds in
	$\n$.
	To get the same property for $\evkeeps(x)$ and $\evavoids(x)$ we need 
	the "eventuality-consistent" condition. 

	Finally, if $x$ is "uncolored" at $\n$ then the "recurrence-consistent"
	 condition  implies that $x$ satisfies none of the
        four predicates. 
\end{proof}

%

\LemHS*
\begin{proof}
	Suppose $\ell=L(\n)(x)$ and  $\ell\in H(\t)$. 
	Take the node $\n'$ that is closest to the root and has
        $\ell=L(\n')(x')$ for some $x'$. 
	We have $\n'\sat\evkeeps(x')$ and $\n'$ is a right child (it cannot be the
	root as $\Var{\e}=\es$).
	Hence, $x'\in H^s(\n')$. 
	By induction on the length of the path from $\n'$ to $\n$ we show that $x\in
	H^s(\n)$.

	For the other direction, if $\n\sat\evkeeps(x)$ then
        $L(\n)(x)\in H(\t)$. 
	It is also easy to see that membership in $H(\t)$ is preserved by all the rules.
\end{proof}

\LemCheckFtwo*

\begin{proof}
	By Lemma~\ref{lem:Hs}.
\end{proof}

\LemCheckFthree*
\begin{proof}
	Consider $\ell\in H(\t)$.
	By Lemma~\ref{lemma:H-from-coloring} there is a node $\n$ and $x\in\Var{\n}$
	with $\n\sat\keeps(x)$, $\ell=L(\n)(x)$.
	Let $\n'$ be the root of the "scope" of $\ell$.
	We have $\n'\sat\evkeeps(x')$ for $x' \in \Var{\n'}$ with $x'
        \sim x$.
	By consistency conditions on the coloring:
	\begin{itemize}
		\item for every node $\n''$ on the path from $\n'$ to $\n$ we have
		$\n''\sat\evkeeps(x'')$ for $x'' \sim x$, and for
                every right child $\n'''$ of $\n''$ we have
                $\n'''\sat\evavoids(x''')$ for $x'''\sim x$.
	\end{itemize}
	Observe that $\n'''\sat\evavoids(x''')$ guarantees that there are only finitely
	many operations on $\ell$ below $\n'''$, and that there is no "unmatched" get of
	$\ell$ below $\n'''$. 
	Since there are no operations on $\ell$ below $\n$, we are done.
\end{proof}

%

\LemSyntOrd*
\begin{proof}
	Let us take a family of orders $<_\n$ satisfying conditions "F4.1", "F4.2", "F4.3" of
	Lemma~\ref{lem:limitlocal}.
	We show that it is a "consistent order labeling" of $\t$.
	By Lemma~\ref{lem:Hs} the first condition is satisfied.
	The next two conditions follow from condition "F4.2".
	The fourth condition requires some verification.
	Consider $\n$ as in that condition, so with $y <_\n x$ and $\n
        \sat \keeps(x)$. 
	It follows that there is some ancestor  $\n'$ of $\n$,
        together with some $x' \sim x$, $x' \in \Var{\n'}$, such
        that  the action at $\n'$ is an "unmatched" $\get{x'}$ of the lock
	$\ell=L(\n')(x')=L(\n)(x)$.
	If there were some operation on $\ell'=L(\n)(y)$ below or at $\n$ then
	$\ell\prec_H\ell'$, implying $x<_\n y$ by "F4.3".
	Thus there is no operation on $\ell'$ below or at $\n$, meaning that
  $\n\sat\keeps(y)$ or $\n\sat\avoids(y)$.

	For the other direction, take a "consistent order labeling"
	$<_\n$.
	We show that it satisfies the conditions "F4.1", "F4.2", "F4.3" of
	Lemma~\ref{lem:limitlocal}. 
	From the first condition on $<_\n$ and
	Lemma~\ref{lem:Hs} we see that $<_\n$ orders only variables
  associated with locks from $H(\t)$; this gives us "F4.1".
	Condition "F4.2" follows directly from the second and third
  property of "consistent order labeling".

	It remains to show "F4.3".
	For this take a node $\n$ and two locks $\ell=L(\n)(x)$ and
        $\ell'= L(\n)(y)$ for some $x,y\in H^s(\n)$.
	Suppose $\ell \prec_H \ell'$.
	This means that there is an "unmatched" get of $\ell$, say in
        a node $\n'$,  and an
	operation on $\ell'$ at some node $\n''$ below $\n'$.

        We show below that we can find some node $\n_1$ in the "scope" of both
        $\ell$ and $\ell'$, and such that $\n_1 \sat \keeps(x_1)$ and
        $\n_1 \not\sat \keeps(y_1)$ and $\n_1 \not\sat \avoids(y_1)$, with $x \sim x_1$ and $y \sim y_1$.
        This will show that we cannot have $y_1 <_{\n_1} x_1$, so it
        must hold that $x_1 <_{\n_1} y_1$, thus also $x <_\n y$ by
        local consistency.

        \begin{itemize}
        \item If either $\n,\n'$ are incomparable, or $\n$ is an
          ancestor of $\n'$, or $\n=\n'$, then $\n'$ and $\n'0$ are in the "scope" of
          both $\ell$ and $\ell'$ (note that $\n'0$ is an ancestor of
          $\n''$, or they can be equal).
          We choose $\n_1=\n'0$.
          \item If $\n' \not=\n$ is an ancestor of $\n$, but $\n$ and $\n''$
            are either incomparable, or $\n$ is an ancestor of $\n''$,
            then we chose $\n_1$ as the least
            common ancestor of $\n''$ and $\n$.
            Note that $\n_1$ is below or equal to $\n0$, and belongs
            to the "scope" of both $\ell$ and $\ell'$.
            \item If $\n''$ is an ancestor of $\n$ then $\n''$ is in
              the "scope" of both $\ell$ and $\ell'$, so we chose
              $\n_1$ to be $\n''$.
        \end{itemize}

\end{proof}

\WellFounded*

\begin{proof}
	The first statement is easy: take the well-founded strict
        order on locks $<$ defined in the proof of
        Lemma~\ref{lem:limit}, and for each node $\n$ take as $<_\n$
        the order given by $<$ on $L(\n)(\Var{\n})$. The
        well-foundedness of $<$ implies that there is no "infinite
        descending chain" in the order labeling.
	
	For the second part, assume $\prec_H$ has an infinite descending chain and let $(<_\n)_{\n \in \t}$ be a "consistent order
	labeling" of $\t$. Let $\ell_0 \aft \ell_1 \aft \cdots$ be an "infinite descending chain" for $\bef$.
	
	\AP For all $i\geq 1$ let $\mu_i$ be a node with an
        "unmatched" $\get{}$ of $\ell_i$ and with a descendant with a
        $\get{}$ of $\ell_{i-1}$.
        Let $c_i$ be the root of the "scope" of $\ell_i$ in $\t$.
        As $\mu_{i+1}$ is an ancestor of a node where $\ell_{i}$ appears, it is "comparable" with $c_i$ (two nodes are ""comparable"" if one is an ancestor of the other). As $c_{i+1}$ is an ancestor of $\mu_{i+1}$, it is "comparable" with $c_i$.\medskip
	
	\noindent
	\textbf{Claim 1}: For all $a\leq b$, there exists $i \in
        \set{a,\ldots, b}$ such that $c_i$ is an ancestor of all $(c_k)_{a\leq k \leq b}$.
	\begin{proof}
		We proceed by induction on $b-a$. If $b-a=0$ this is clear.
		If $b-a>0$, by induction hypothesis there exists $i \in \set{a,\ldots, b-1}$ such that $c_i$ is an ancestor of all $(c_j)_{a\leq k \leq b-1}$.
		As $c_b$ is "comparable" with $c_{b-1}$, which is a descendant of $c_i$, $c_b$ is "comparable" with $c_i$.
		If $c_b$ is a descendant of $c_i$, then $c_i$ is an ancestor of all $(c_k)_{a\leq k \leq b}$. If $c_b$ is an ancestor of $c_i$, then $c_b$ is an ancestor of all $(c_k)_{a\leq k \leq b}$. 
	\end{proof}

	Consider the subtree of $\t$ formed by all $c_i$ and their
        ancestors. It is an infinite, but finitely-branching tree, thus it has an infinite branch by König's lemma.
	We first argue that there must be infinitely many $c_i$ on that branch. 
	Let $a \in \nats$, let $\n_a$ be the lowest ancestor of $c_a$
        on the branch. Let $\n'$ be lower on the branch than $\n_a$,
        then $\n'$ has some descendant $c_b$. Note that $\nu_a$ is the
        lowest common ancestor of $c_a$ and $c_b$. We can assume $a<b$
        (the case $b<a$ is symmetric), then by Claim 1 there is some $i \in \set{a, \ldots, b}$ such that $c_i$ is an ancestor of all $(c_k)_{a\leq k \leq b}$, and in particular of $c_a$ and $c_b$. Further, as $\nu_a$ is the lowest common ancestor of $c_a$ and $c_b$, $c_i$ is an ancestor of $\nu_a$ and is thus on the branch.
	As a result, for all $a \in \nats$ we can find $i\geq a$ such that $c_i$ is on the branch.
	
	We pick a sequence of $c_i$ as follows: we start with the
        highest $c_{i_0}$ on the branch, and then define $c_{i_{j+1}}$
        as the highest $c_i$ on the branch with $i>i_j$, for all $j$.
	By definition for all $j$ we have that no $c_i$ with $i>i_j$ is a strict ancestor of $c_{i_{j+1}}$.
	
	As a consequence of Claim 1, there exists $i \in \set{i_{j}+1, \ldots, i_{j+1}}$ such that $c_i$ is an ancestor of all $(c_k)_{i_j+1\leq k \leq i_{j+1}}$. As noted above, as $i>i_j$ we cannot have $c_i$ as a strict ancestor of $c_{i_{j+1}}$, hence $i=i_{j+1}$. As a result, $c_{i_{j+1}}$ is an ancestor of all $(c_k)_{i_j<k < i_{j+1}}$.

        For the remaining of the proof we fix a "consistent order
        labeling" $(<_\n)_\n$ for $\t$.
        \medskip

	\noindent
	\textbf{Claim 2}: For all $a < b$, if node $\n$ is in the
        "scope" of both $\ell_a$ and $\ell_b$, and if $\nu$ is
        ancestor of all $(c_k)_{a<k<b}$, then $x >_{\n} y$, with $x,y$ such that $L(\n)(x) = \ell_a$ and $L(\n)(y) = \ell_b$.
	
	\begin{proof}
		Let $x,y$ be such that $L(\n)(x) = \ell_a$ and $L(\n)(y) = \ell_b$.
		We proceed by induction on $b-a$.
		
                If $b= a +1$ then $\ell_a \aft \ell_b$.
                Since we assume that $(<_\n)_n$ is a "consistent order labeling",
                by Lemma~\ref{lem:synt-orders} and  "F4.3" of
                Lemma~\ref{lem:limitlocal} we have $x >_\n y$ as claimed.

		If $b-a\geq 2$, by Claim 1, there exists $i \in
                \set{a+1,\ldots, b-1}$ such that $c_i$ is an ancestor
                of all $(c_k)_{a<k<b}$. In particular, $c_i$ is an
                ancestor of $c_{a+1}$, itself an ancestor of
                $\mu_{a+1}$, itself an ancestor of a node $\nu'$ with
                a $\get{}$ of $\ell_a$.
                Recall that $\n$ itself is an ancestor of $c_i$, by assumption.
		As the "scope" of a lock is a subtree, $\ell_a$ appears in all nodes between $\nu$ and $\nu'$, thus in particular in $c_i$.
		
		Moreover, $\mu_b$ is an ancestor of some node 
                with a $\get{}$ of $\ell_{b-1}$, which is a descendant
                of $c_{b-1}$, thus of $c_i$, hence $\mu_b$ and $c_{i}$
                are "comparable". If $\mu_b$ is an ancestor of $c_i$,
                then as $\nu'$ is a descendant of $c_i$, $\nu'$ is
                also a
                descendant of $\mu_b$, hence $\ell_a \aft \ell_b$. As a result, $x >_\n y$
                as the "consistent order labeling"  satisfies "F4.3" (Lemma~\ref{lem:synt-orders}).
		If $\mu_b$ is a descendant of $c_i$ then as the
                "scope" of  a lock is a subtree, $\ell_b$ appears in all nodes between $\nu$ and $\mu_b$, thus in particular in $c_i$. 
		We set $x', y', z' \in \Var{c_i}$ such that
                $L(c_i)(x') = \ell_a$, $L(c_i)(y') = \ell_b$ and
                $L(c_i)(z')=\ell_i$ and by induction hypothesis we
                have $x >_{c_i} z$ and $z >_{c_i} y$ thus $x >_{c_i}
                y$ as $>_{c_i}$ is total.
                Finally, as we have a "consistent order labeling",  $x
                >_\n y$ holds as well. 
	\end{proof}
	
	Recall that $c_{i_{j+1}}$ is an ancestor of all $(c_k)_{i_j< k
          < i_{j+1}}$.
        In particular,   $c_{i_{j+1}}$ is an ancestor of $c_{i_j +1}$,
        thus of $\mu_{i_j+1}$, itself an ancestor of some node $\nu'$
        with a $\get{}$ of $\ell_{i_j}$.
        Thus $\ell_{i_j}$ appears in $c_{i_{j+1}}$, because
        $c_{i_{j+1}}$ is between $c_{i_j}$ and $\nu'$.
	
	Let $x_j, y_j$ be such that $L(c_{i_{j+1}})(x_j) = \ell_{i_j}$ and $L(c_{i_{j+1}})(y_j) = \ell_{i_{j+1}}$.
	By Claim 2, we have $x_j >_\n y_j$. The sequences $(c_{i_{j+1}})_{j>0}$, $(x_j)_{j>0}$ and $(y_j)_{j>0}$ thus form an "infinite descending chain", proving the lemma. 
\end{proof}

\PropAutLim*
\begin{proof}
	Given a tree $\t$ labeled with $p,a,s$ the automaton $\wh\Bb$ guesses a coloring $\Cc$,
	labeling $H^s$ and an ordering labeling $\Oo$. 
	It then checks if $\Cc$, $H^s$ and $\Oo$ satisfy all the consistency
	conditions. 
	This automaton is a product of the following automata:
	\begin{itemize}
		\item $\Bb_1$ recognizing "process-consistent" trees,
		\item $\Bb_\Cc$ checking if the coloring is "syntactically correct",
		\item $\Bb_H$ checking if $H^s$ is a "syntactic $H$-labeling",
		\item $\Bb_2$ checking the conditions of Lemma~\ref{lem:chekcing-F2},
		\item $\Bb_\Oo$ checking if $\Oo$ is a "consistent order labeling".
		\item $\Bb_5$ checking the absence of infinite descending chains (Lemma~\ref{lem:well-founded}).
	\end{itemize}
	Apart from $\Bb_\Cc$ and $\Bb_5$ the other automata only check relations between a node and its
	children and some additional conditions local to a node. 
	So they are automata with trivial acceptance conditions.
	Automaton $\Bb_\Cc$ needs a B\"uchi condition to check that $\Cc$ is
	"eventuality-consistent" and "recurrence-consistent".
	The number of labels is polynomial in the size of DLSS and exponential in the
	maximal arity as we have sets of predicates and orderings on variables as
	labels.
	Automaton $\Bb_5$ can be obtained by first constructing an automaton for its complement: one can easily define a non-deterministic Büchi automaton guessing a branch and following a sequence of variables along that branch witnessing an infinite decreasing sequence of locks. As it only needs to remember a pointer to one of the variables of a node, its number of states is the maximal arity of the "DLSS". Thus we can complement it to get a non-deterministic Büchi automaton checking the absence of such sequence, of size exponential in the maximal arity of the "DLSS", and polynomial in the alphabet (itself exponential in the arity and polynomial in the "DLSS").
	
	We need to check that $\t$ is a fair limit configuration if and only if it is
	accepted by $\wh\Bb$.

	If $\t$ is a limit configuration then it is process consistent, so it is
	accepted by $\Bb_1$.
	Guessing $\Cc$ to be "semantically correct" coloring ensures that $\Bb_\Cc$
	accepts $\t$ with this coloring (Lemma~\ref{lem:sem-coloring-is-synt}).
	As we have observed, given the coloring there is unique "syntactic $H$-labeling", so
	$\Bb_H$ can accept it.
	By Lemma~\ref{lem:limit}, configuration $\t$ satisfies properties "F1-5".
	So $\t$ is accepted by $\Bb_2$.
	Finally, by Lemma~\ref{lem:limitlocal}, $\t$ satisfies properties "F4.1", "F4.2",
	"F4.3", so $\t$ is accepted by $\Bb_\Oo$ thanks to
        Lemma~\ref{lem:synt-orders}.
        By Lemma~\ref{lem:well-founded}, the automaton $\Bb_5$ accepts
        $\t$ as well.

	For the other direction suppose $\t$ is accepted by $\wh\Bb$. 
	Thanks to Lemma~\ref{lem:limit} it is sufficient to check properties "F1-5".
	Property "F1" is verified by automaton $\Bb_1$.
	Thanks to $\Bb_\Cc$ we know that the guessed coloring is syntactically
	correct.
	Then $\Bb_2$ ensures that $\t$ satisfies "F2" thanks to
	Lemma~\ref{lem:chekcing-F2}.
	Lemma~\ref{lem:chekcing-F3} ensures that $\t$ satisfies "F3".
	Finally, automaton $\Bb_\Oo$ checks that the guessed orderings are a "consistent
	order labeling".
	Hence, Lemma~\ref{lem:synt-orders} guarantees that $\t$ satisfies the conditions of
	Lemma~\ref{lem:limitlocal} giving us "F4".
	Finally, by Lemma~\ref{lem:well-founded} automaton $\Bb_5$ verifies condition "F5".
      \end{proof}

 \section{Pushdown systems}

\PropRightResetting*
\begin{proof}
	We consider the representative case of $d=3$.
	Suppose we are given a "right-resetting" "pushdown tree automaton"
	$\Aa=(Q,\S,\Gamma,q^0,\bot,\d,\W)$.

	The first step is to construct a pushdown word automaton $\Aal(G_1,G_2,G_3)$
	depending on three sets of states $G_1,G_2,G_3\incl Q$.
	The idea is that $\Aal$ simulates the run of $\Aa$ on the leftmost branch of
	a tree.
	When $\Aa$ has a transition going both to the left and to the right then $\Aal$ goes
 	to the left and checks if the state going to the right is in an appropriate $G_e$.
	This means that $\Aal$ works over the alphabet $\S^l$ that is the same as $\S$
	but all letters from $\S_2$ have rank $1$ instead of $2$.
	The states of $\Aal(G_1,G_2,G_3)$ are $Q\times\set{1,2,3}$ with the second component
	storing the maximal rank of a state seen so far on the run. 
	The transitions of $\Aal(G_1,G_2,G_3)$ are defined according to the above description.
	We make precise only the case for a transition of $\Aa$ of the form
	$\d(q,a,\g)=((q_l,\instr_l),(q_r,\instr_r))$.
	In this case, $\Aal$ has a transition
	$\d^l((q,e),a,\g)=((q_l,\max(e,\W(q_l))),\instr_l)$ if $q_r\in
	G_{\max(e,\W(q_r))}$. 	
	Observe that $\instr_r$ is necessarily $\reset$ as $\Aa$ 	is right-resetting.  

	The next step is to observe that for given sets $G_1,G_2,G_3$ we can calculate in
	\PTIME\ the set of states from which $\Aal(G_1,G_2,G_3)$ has an accepting run. 

	The last step is to compute the following fixpoint expression in the lattice
	of subsets of $Q$:
	\begin{align*}
		W=\, &\LFP X_3.\ \GFP X_2.\ \LFP X_1.\ P(X_1,X_2,X_3)\qquad\text{where}\\
		P(X_1,X_2,X_3) \,=&\set{q : \Aal(X_1,X_2,X_3)\text{ has an accepting run from $q$}}\ .
	\end{align*}
	Observe that $P:\Pp(Q)^3\to\Pp(Q)$ is a monotone function over the lattice of
	subsets of $Q$.
	Computing $W$  requires at most $|Q|^3$ computations of $P$ for different
	triples of sets of states. 

	We claim that $\Aa$ has an accepting run from a state $q$, if and only if,
	$q\in W$.

	Let us look at the right-to-left direction of the claim. 
	For this we recall how the least fixpoint is calculated.
	Consider any monotone function $R(X)$ over $\Pp(Q)$, and its least fixpoint
	$R^\w=\LFP X.\ R(X)$.  
	This fixpoint can be computed by a sequence of approximations: 
	\begin{equation*}
		R^0=\es\qquad R^{i+1}=R(R^i)
	\end{equation*}
	The sequence of $R^i$ is increasing and $R^\w=R^i$ for some $i\leq |Q|$.

	Now we come back to our set $W$.
	Observe that $W=\LFP X_3. \ R$ where $R(X)=\GFP X_2.\LFP X_1.\
        P(X_1,X_2,X)$.
	As in the previous paragraph we can define 
	\begin{equation*}
		W^0=\es\qquad\text{and}\qquad W^{i+1}=\GFP X_2.\LFP
                X_1.\ P(X_1,X_2,W^{i})\ .
	\end{equation*}
	So, if $q\in W$ then $q\in W^i$ for some $i$.
	Now observe that $W^i=\LFP X_1.P(X_1,W^i,W^{i-1})$, since $W^i$ is a fixpoint
	of $\GFP X_2$.
	By similar reasoning we define 
	\begin{equation*}
		W^{i,0}=\es \qquad\text{and}\qquad W^{i,j+1}=P(W^{i,j},W^i,W^{i-1})\ .
	\end{equation*}
	Now, $q\in W^i$ implies $q\in W^{i,j}$ for some $j$.
	We write $\sig(q)$ for the lexicographically smallest $(i,j)$ such that $q\in
	W^{i,j}$.
	
	We examine what  $\sig(q)=(i,j)$ means.
	By definition $q\in P(W^{i,j-1},W^i,W^{i-1})$, so there is an accepting run of
	$\Aal(W^{i,j-1},W^i,W^{i,j-1})$ from $q$.
	Looking at the run of $\Aa$ that $\Aal$ simulates we can see that whenever
	this run branches to the right with some $q'$ and $e$ is the maximal rank on
	the run till this branching then 
	\begin{itemize}
		\item if $e=1$ then $q'\in W^{i,j-1}$,
		\item if $e=2$ then $q'\in W^{i,k}$ for some $k$,
		\item if $e=3$ then $q'\in W^{i-1,k}$ for some $k$.
	\end{itemize}

	With this observation we can construct an accepting run of $\Aa$ from every
	state in $W$.
	If $\sig(q)=(i,j)$ then consider an accepting run of $\Aal$ on the left path given by
	$P(W^{i,j-1},W^i,W^{i-1})$.
	For every state branching to the right from this left path we recursively apply
	the same procedure.
	By construction, every path that is eventually a left path is accepting.
	A path branching right infinitely often is also accepting by the previous
	paragraph since signatures cannot go below $0$.
        More precisely, the path cannot see $3$ infinitely often
        because the first component of the signature decreases.
        If it sees $1$ infinitely often then it needs to see also $2$
        infinitely often, because of the second component that decreases.

	Let us now look at the left-to-right direction.
	Take an accepting run of $\Aa$ from $q^0$.
	We construct something that we call a skeleton tree of this run. 
	As the nodes of the skeleton tree we take the root and all the nodes that are
	a right child; so these are the nodes of the tree of the form $(0^*1)^*$.
	The skeleton has an edge $\n\act{e}\n0^k1$ if $e$ is the maximal rank of a state of $\Aa$ on the
	path from $\n$ to $\n0^k1$.
	Observe that a node can have infinitely many children. 
	As we have started with an accepting run, every path in this skeleton tree satisfies
	the parity condition. 
	In particular, for every node, on every path from this node there is a finite
	number of $3$ edges. 
	Thus, to every node $\n$ we can assign an ordinal $\th^3(\n)$ such that if $\n\act{e}\n'$
	then $\th^3(\n')\leq\th^3(\n)$ and the inequality is strict if $e=3$.
	It is also the case that on every path from $\n$ there is a finite number of
	$1$ edges before some $2$ or $3$ edge. 
	This allows to define $\th^1(\n)$ with the property that if $\n\act{1}\n'$ then $\th^1(\n')<\th^1(\n)$.
	Now we can show that for every node $\n$ of the skeleton tree, if $q$ is the
	state assigned to $\n$ then $q\in W^{\th^3(\n),\th^1(\n)}$ for $W^{i,j}$ as defined
	in the computation of $W$ (putting $W^{\th^3}=W$ for every $\th^3>|Q|$, and
	$W^{\th^3,\th^1}=W^{\th^3}$ for every  $\th^1 >|Q|$).
	The proof is by induction on the lexicographic order on
	$(\th^3(\n),\th^1(\n))$.
\end{proof}

\section{Lower bounds}

In this section we show the two remaining lower bounds, namely
\EXPTIME-hardness for "nested" "DLSS" and undecidability for arbitrary
ones. 

\PropExpHard*

\begin{proof}
	
	We show that the difficulty of the problem stems from the
        systems and not the specification, by proving that checking if
        some copy of a process has an infinite run is already
        \EXPTIME-hard. 
	
	We provide a reduction from the emptiness problem for the
        intersection of top-down tree automata (over finite trees).
	Let $\aut_1, \ldots, \aut_k$ be finite tree automata over a ranked alphabet $A$, with $\aut_i = (S_i, \delta_i, s_{0,i}, F_i)$.
	We can assume that $A = \set{a,b,c}$ with $a,b$ of arity 2 and
        $c$ of arity 0, and that all automata only recognize trees
        with root labelled by $a$.
	We are going to construct a "DLSS" that simulates their
        computations simultaneously on the same tree $T$, by using locks to memorize their states.
	
	The  idea is to have a new process copy for each node of $T$.
        Each such copy uses the variables $x_i^s$ and
        $y_i^s$ for all $1 \leq i \leq k$ and $s \in S_i$, as well as
        the variables $z_1, z_2$ and $t$.
        Variable $x_i^s$ is supposed to encode the information about
        the state of the parent node in the run of  $\aut_i$, while
        $y_i^s$ will encode the state of $\aut_i$ at the current node.
	
	We use processes $p_0, q, \ch, \tk$, plus processes $p_a^0,
        p_a^1, p_b^0, p_b^1$ and $p_c$. 
	The processes are sketched in Figure~\ref{fig:EXPhard}.
        Process $p_0$ is  the initial one.
        Process $q$ is the root of $T$, and after spawning
        its children it takes and releases the lock $\ell_t$ associated
        with variable $t$ indefinitely.
        Lock $\ell_t$ will be shared with all processes, except for
        $\tk$. 
	The specification is that $q$ should keep running forever.
        Equivalently, lock $\ell_t$ should be free infinitely often. 
	
	Whenever process $\ch$ is spawned, the purpose is to check if
        some lock is "ultimately held".
        The first $\get{}$ in process $\ch$ corresponds to this test.
        If the lock associated with variable $s$ is not taken, by
        "process-fairness" $\ch$ 
        eventually takes it, and then it also takes  $\ell_t$
        (forever), preventing $q$ to have an infinite run.
	As for $\tk$, it simply takes the lock that it is given as argument.

	The processes representing nodes of $T$ are described next.
        Each node of $T$ is represented by a copy of process $p_a^i$
        or  $p_b^i$, depending on the letter $a$ or $b$ its parent is
        labelled with, and on whether it is a left or right child of
        its parent ($i=0$ and $i=1$, respectively).
        The root is represented by process $q$.

        Process $p_a^1$ proceeds as follows: it chooses for each of
        its children a letter and spawns the associated processes,
        with each variable $x_i^s$ of the child mapped to its own  variable
        $y_i^s$, and all $y_i^s$ of the child mapped to $\new$ (see
        actions $\spw$ in Figure~\ref{fig:EXPhard}).
        If the node represented by $p_a^1$ is a leaf, then $p_a^1$
        spans a unique child $p_c$.
        For each $1 \leq i \leq k$, process $p_a^1$ then guesses the
        state $s$ of its parent in $\aut_i$ and spawns a process $\ch$
        in charge of checking the guess, so whether the lock
        associated with $x_i^s$ is
        taken  (see actions $\b$ in Figure~\ref{fig:EXPhard}).
        Finally, $p_a^1$ spawns a copy of process $\tk$ in charge of taking
        the lock associated with $y_i^{s'}$, where $s' = \d_i(s, a,
        1)$ is the state of the current node
        (see actions $\g$ in Figure~\ref{fig:EXPhard}). 

	Processes $p_a^0$, $p_b^0, p_b^1$ are defined similarly.
        Process $p_c$ simply guesses a final state $s$ of its parent in
        each $\aut_i$  and spawns a copy of $\ch$  to verify the guesses.

        Process $q$ is also in charge of representing the root, which
        we assumed to be labelled by $a$, hence all it has to do is
        spawn two children $p_a^0$ and $p_a^1$ with the variable
        assignments described below, and spawn copies of  $\tk$
        to take the locks associated with the variables $x_i^{s_{0,i}}$ 
        (actions $\g^i$). 
	
	The problem is that  we might end up producing an infinity of
        processes, representing a computation of the automata $\aut_i$ over an
        infinite tree.  
	To avoid that, we use variables $z_0, z_1$ and $z$.
        Each copy of $p_a^i$ and $p_b^i$, after doing all its other
        operations, takes $z_0$ and $z_1$ forever, and then takes and
        releases $z$.
        When spawning other processes representing nodes, each of $p_a^i$ and
        $p_b^i$  maps the
        $z$ of the spawned process to its $z_i$ (depending on whether
        it is its first or second spawned process), and the $z_0, z_1$
        to $\new$. 
	We also add edges taking $t$ forever that will eventually be
        executed, due to "process-fairness", if $z$ is taken  forever
        before this process can take and release it. 
	Hence all such processes must first acquire the locks of $z_0,
        z_1$ and then use the lock of $z$.
        This imposes that this part of the run is executed in the
        current process after it is executed in both  children. 
		This is only possible if the  tree $T$ is finite.	
	\begin{figure}[hbt]
		\begin{tikzpicture}[node distance=1.8cm,auto,>= triangle
	45,scale=.6]
	\tikzstyle{initial}= [initial by arrow,initial text=,initial
	distance=.7cm, initial where= left]
	\tikzstyle{accepting}= [accepting by arrow,accepting text=,accepting
	distance=.7cm,accepting where =right]
	
	\node[state, minimum size=15pt,initial] (p0s0) at (0,0) {};
	\node[state, minimum size=15pt] (p0s1) at (3,0) {};
	\node (p0) at (-1,-1) {\large $p_0$};
	
	\node[state, minimum size=15pt,initial] (qs0) at (7,0) {};
	\node[state, minimum size=15pt] (qs1) at (9,0) {};
	\node[state, minimum size=15pt] (qs2) at (11,0) {};
	\node (dots2) at (12,0) {\large $\cdots$};
	\node[state, minimum size=15pt] (qs3) at (13,0) {};
	\node[state, minimum size=15pt] (qs4) at (15,0) {};
	\node[state, minimum size=15pt] (qs5) at (17,0) {};
	\node (q) at (6,-1) {\large $q$};
	
	\node[state, minimum size=15pt,initial] (takes0) at (0,-3) {};
	\node[state, minimum size=15pt] (takes1) at (3,-3) {};
	\node (tk) at (-1,-4) {\large $\tk$};
	
	\node[state, minimum size=15pt,initial] (checks0) at (8,-3) {};
	\node[state, minimum size=15pt] (checks1) at (11,-3) {};
	\node[state, minimum size=15pt] (checks2) at (15,-3) {};
	\node (ch) at (7,-4) {\large $\ch$};
	
	\node[initial, state, minimum size=15pt] (ps1) at (-0.2,-6) {};
	\node[state, minimum size=15pt] (ps1bis) at (3,-6) {};
	\node[state, minimum size=15pt] (ps1ter) at (3,-5) {};
	\node[state, minimum size=15pt] (ps2) at (6,-6) {0};
	\node[state, minimum size=15pt] (ps3) at (8,-6) {0,s};
	\node[state, minimum size=15pt] (ps4) at (10,-6) {1};
	\node (dots2) at (11.2,-6) {\large $\cdots$};
	\node[state, minimum size=15pt] (ps5) at (12.5,-6) {k-1};
	\node[state, minimum size=15pt] (ps5bis) at (15.5,-6) {k-1,s};
	\node[state, minimum size=15pt] (ps5ter) at (18.5,-6) {k,s};
	\node[state, minimum size=15pt] (ps6ant) at (11.5,-10) {};
	\node[state, minimum size=15pt] (ps6) at (14,-10) {};
	\node[state, minimum size=15pt] (ps7) at (16,-10) {};
	\node[state, minimum size=15pt] (ps8) at (18,-10) {};
	\node[state, minimum size=15pt] (ps9) at (20.5,-10) {};
	\node (pia) at (-1,-7) {\large $p_a^1$};
	
	\node[initial, state, minimum size=15pt] (pc1) at (0,-10) {0};
	\node[state, minimum size=15pt] (pc3) at (2,-10) {0,s};
	\node[state, minimum size=15pt] (pc4) at (4,-10) {1};
	\node[state, minimum size=15pt] (pc5) at (6,-10) {1,s};
	\node (dots2) at (7.3,-10) {\large $\cdots$};
	\node[state, minimum size=15pt] (pc6) at (8.7,-10) {k-1,s};
	\node (pia) at (-1,-11) {\large $p_c$};
	
	\path[->] 	
	(p0s0) edge node[above= 12pt] {$\spawn{q, \s}$} (p0s1)
	(qs0) edge node[above= 12pt] {$sp_0^a, sp_1^a$} (qs1)
	(qs1) edge node[above= 12pt] {$\g^{1}$} (qs2)
	(qs3) edge node[above= 12pt] {$\g^{k}$} (qs4)
	(takes0) edge node {$\get{s}$} (takes1)
	(checks0) edge node {$\get{s}$} (checks1)
	(checks1) edge node {$\get{t}$} (checks2)
	
	(pc1) edge node[above] {$\b^s_c$} (pc3)
	(pc3) edge node[above] {$\nop$} (pc4)
	(pc4) edge node {$\beta_c^s$} (pc5)

	(ps2) edge node[above] {$\b_{a,1}^{1,s}$} (ps3)
	(ps3) edge node[above] {$\g_{a,1}^{1,s}$} (ps4)
	(ps5) edge node[above] {$\b_{a,1}^{k,s}$} (ps5bis)
	(ps5bis) edge node[above] {$\g_{a,1}^{k,s}$} (ps5ter)
	(ps6ant) edge node[below] {$\get{z_0}$} (ps6)
	(ps6) edge node[below] {$\get{z_1}$} (ps7)
	(ps7) edge node[below] {$\get{z}$} (ps8)
	(ps8) edge node[below] {$\rel{z}$} (ps9)
	(ps1) edge node[below] {$sp_0^{b}$} (ps1bis)
	(ps1bis) edge node[below] {$sp_{1}^{b}$} (ps2)
        (ps5ter) edge node[above] {$\nop$} (ps6ant);
	\path[->, bend right=20] 
	
	(qs5) edge node[above] {$\rel{t}$} (qs4)
	(qs4) edge node[below] {$\get{t}$} (qs5)
	;
	\path[->, bend right=40] 	
	(ps1) edge node[below] {$sp^{c}$} (ps2)
	(ps6ant) edge node[below] {$\get{t}$} (ps9)
	;
	\path[->, bend left=40] 	
	(ps6) edge node[above] {$\get{t}$} (ps9)
	;
	\path[->, bend left=20] 
	(ps1) edge node {$sp_0^{a}$} (ps1ter)
	(ps1ter) edge node {$sp_{1}^{a}$} (ps2)
	;
\end{tikzpicture}
		\caption{"DLSS" in Proposition~\ref{prop:EXPTIME-hard}}
		\label{fig:EXPhard}
	\end{figure}
	
	To sum up, if the $\aut_i$ all accept a finite tree $T$ then
        we can construct a "process-fair" run of this "DLSS" by
        spawning all the processes representing  nodes of $T$,
        then having all processes $\tk$ execute their $\get{}$.
        There is no conflict as no two copies of $\tk$ take the same lock.
	All copies of process $\ch$  are then stuck in their first
        state.
        We then execute the actions on $z, z_0,z_1$ of each $p_a^i,
        p_b^i$ in a bottom-up fashion, so that we can execute them
        all. 
	Finally, we run $q$ forever by having it take and release the
        lock $\ell_t$ indefinitely.
	
	Conversely, if this "DLSS" has a "process-fair" run where $q$ runs
        forever, then  we can construct a tree $T$ over $a,b,c$ by taking
        $q$ as root and defining the children of a node as the
        processes $p_a^i, p_b^i$ spawned by the corresponding
        process.
        A node whose children are $p_a^i$ is labelled $a$, one whose
        children are $p_b^i$ is labelled $b$, and the leaves are
        labelled $c$. 
	
	We know that $T$ is finite thanks to the previous
        argument involving the $z, z_0, z_1$.
        We can
        associate with each node of $T$ a state of each $\aut_i$, inferred
        from the set of locks held so that we obtain runs of the
        automata $\aut_i$ over $T$ (transitions are respected as
        otherwise some $\ch$ process would get a lock that is not
        taken by any $\tk$ and thus would eventually take $\ell_t$
        forever). 
	This run is furthermore accepting by definition of processes $p_c$.

	The "DLSS" we constructed is clearly "nested", which shows the claim.
\end{proof}



\medskip


\ThmUndec*
\begin{proof}
  
The proof idea is to simulate an accepting run of TM $M$ using $n$ cells by
spawning a chain of $n$ processes, $P_0,\dots,P_{n-1}$.
We assume that $M$ accepts when the head is leftmost.

The initial process $P_0$ uses three locks, called $a,b,c_1$, and acquires
$a,b$ before spawning $P_1$.
Process $P_1$ uses locks $a,b,c_1$, plus a fresh lock $c_2$.
It acquires $c_1$ before spawning $P_2$.
More generally, process $P_k$ ($1\le k <n-1$) uses locks
$a,b,c_k,c_{k+1}$, and it acquires $c_k$ before spawning the next
process $P_{k+1}$.
The last process $P_{n-1}$ uses only three locks, $a,b,c_{n-1}$.

A configuration $(p,k,A_0 \dots A_{n-1})$ of the TM corresponds to
each $P_j$ storing $A_j$, with process $P_k$ storing in addition state
$p$.
A TM step to the right, from cell $k$ to $k+1$, needs to communicate
the next state $q$.

In the following we denote the process that  currently owns locks $a$
and $b$, as ``sender''.
The notation $S^+,S^-$ used below indicates that the sender tries to
send the state to the right or left neighbour, respectively.
Similarly, $R^+,R^-$ indicates that a ``receiver'' is ready to receive
from the right or the left neighbour, respectively.

Sending $q$ from $P_k$ to $P_{k+1}$ is implemented by $P_k$ using the
following sequences of actions:
\begin{eqnarray*}
  S^+_a &=& \rel{a} \get{c_{k+1}} \rel{b} \; \get{a} \rel{c_{k+1}}
            \get{b}\\
  S^+_b &=& \rel{b} \get{c_{k+1}} \rel{a} \; \get{b} \rel{c_{k+1}}
            \get{a}
\end{eqnarray*}
Process $P_{k+1}$ (``receiver'') uses matching sequences:
\begin{eqnarray*}
   R^-_a &=& \get{a} \rel{c_{k+1}} \get{b} \; \rel{a} \get{c_{k+1}}
            \rel{b}\\
  R^-_b &=& \get{b} \rel{c_{k+1}} \get{a} \; \rel{b} \get{c_{k+1}}
            \rel{a}
\end{eqnarray*}

Suppose now that the sender $P_k$ wants to send state $q$ to receiver
$P_{k+1}$.
This will be done by $P_k$ by trying to execute the sequence
$(S^+_a)^q\, S^+_b$.
Every process $P_j$ with $j>k$ is ready to execute
either $R^-_{a}$ or $R^{-}_b$.
Symmetrically, every process $P_j$ with $j<k$ is ready to execute
either $R^+_a$ or $R^+_b$.

We show next that the "DLSS"  deadlocks if $P_k, P_{k+1}$ do
not execute $(S^+_a)^q\, S^+_b$ and $(R^-_a)^q\, R^-_b$, resp., in
lockstep manner:


\medskip

\noindent
 \emph{Claim.} Assume that $P_k$ owns $\{a,b\}$, every $P_j$, $j<k$, owns
  $c_{j+1}$, and every $P_j$, $j>k$, owns $c_j$.
 Moreover, $P_k$ wants to send $a$ to $P_{j+1}$.
  Then either $P_k,P_{k+1}$ execute $S^+_a$ and $R^-_a$, resp., in
  lockstep manner, or all processes deadlock.

  \medskip

  \begin{proof}[Proof of claim]
    Process $P_k$ is the only process who can start, since
    all other processes wait for acquiring either $a$ or $b$.
    
    After releasing $a$, process $P_k$ needs $c_{k+1}$.
    It can only proceed and take $c_{k+1}$ if $P_{k+1}$ starts
    executing $R^-_a$, taking $a$ and releasing $c_{k+1}$.
    Then $P_k$ releases $b$, and waits to get back $a$.
    If $b$ is taken by another process than the receiver, say $P_j$,
    $j \not= k+1$, then
    $P_j$ will release its lock $c \not=c_{j+1}$, and $c$ is now the
    only available lock.
    Lock $a$ will never become available because $P_{j+1}$
    will not release it, so all processes deadlock.

    Assume that $P_{j+1}$ takes $b$, and releases $a$.
   If $a$ is taken by another process than the sender, say $P_j$, $j
   \not= k$, then   $P_j$ will release its lock $c \not=c_{j+1}$, and
   $c$ is now the only available lock.
   Lock $a$ will never become available because $P_{j+1}$ does not
   release $b$, so all processes deadlock.

   Assume that $P_j$ takes $a$ back.
   Then it releases $c_{j+1}$, which can be taken only by $P_{j+1}$,
   who releases also $b$.
   If $b$ is taken by another process than the sender, say $P_j$, $j
   \not= k$, then   $P_j$ will release its lock $c \not=c_{j+1}$, and
   $c$ is now the only available lock.
   Lock $b$ will never become available because $P_{j}$ does not
   release $a$ anymore.
   Once again, all processes deadlock.
  \end{proof}
We conclude the proof by noting that $P_0$ reaches a final state of
$M$ if and only if $M$ accepts.
\end{proof}

\end{document}